\theoremstyle{theorem}
\newtheorem{theorem}{Problem}
\newtheorem{prop}[theorem]{Proposition}
\theoremstyle{definition}
\newtheorem*{dfn}{Definition}
\newtheorem{rem}{Remark}
\newcommand{\R}{\mathbb{R}}
\begin{document}
\title[Affine Geometry and Relativity]{Affine Geometry and Relativity}
\author{Bo\v zidar Jovanovi\'c}

\address{Mathematical Institute SANU \\
Serbian Academy of Sciences and Arts \\
Kneza Mihaila 36, 11000 Belgrade\\
Serbia}
\email{bozaj@mi.sanu.ac.rs}

\subjclass{51N10, 51N30, 51P05, 70A05, 83A05}

\keywords{Affine transformations, the Galilean principle of relativity, the Galilean and
pseudo-Euclidean geometry, addition of velocities, the Iwasawa decomposition}

\maketitle

\begin{abstract}
We present the basic concepts of space and time, the Galilean and
pseudo-Euclidean geometry. We use an elementary
geometric framework of affine spaces and groups of affine
transformations to illustrate the natural relationship between
classical mechanics and theory of relativity, which is quite often  hidden,
despite its fundamental importance.
We have emphasized a passage
from the group of Galilean motions to the group of Poincar\'e
transformations of a plane. In particular, a
1-parametric family of natural deformations of the Poincar\'e
group is described. We also
visualized the underlying groups of Galilean, Euclidean, and
pseudo-Euclidean rotations within the special linear group.
\end{abstract}

%\tableofcontents

\section{Introduction.}

We present concepts of space and time, the Galilean and pseudo-Euclidean geometry at an elementary level.
In Section \ref{relativnost} we recall on the Galilean principle
of relativity in the affine world (space-time) of events. This
fundamental principle uses the notion of inertial frames that need
additional geometrical structure. Depending on the structure, we
have two different mechanics: the classical one and the special
relativity, which share the same principle of relativity.
In Sections \ref{galilejevSvet} and \ref{galilejeveTransformacije}
we define the Galilean world and  derive the group of Galilean motion:
the group of affine transformations that relates coordinates
between different inertial frames. We work mainly with a
two-dimensional world, which is sufficient to realize the main
ingredients and differences between the Euclidean, Galilean, and
pseudo-Euclidean geometries.

In Section \ref{poenkareovSvet} we adopt some of Poincar\'e ideas to emphasize a passage from the group of
Galilean motions $SG(2)$ to the group of Poincar\'e transformations of a plane $SP^+(1,1)$, while in Section \ref{pseudoSvet} we define the basic notion of the
underlying pseudo-Euclidean geometry.
 The full group of affine transformations $G(2)$ and $P(1,1)$ that preserve the Galilean
and pseudo-Euclidean structures are described in Section \ref{pune}.

Finally, in Section \ref{vizualizacija},
we visualize the special
linear group $SL(2)$  and its subgroups: special orthogonal group
$SO(2)$, group of Galilean rotations $SG_0(2)$, and the special
orthogonal group $SO(1,1)$ of signature $(1,1)$.
By using the Iwasawa decomposition,
we realize  $SL(2)$ as a space $\R^3$ without a line.
We also visualize $SL(2)$ as a quadric in $\R^4$ by considering several intersections with 3-dimensional affine subspaces of $\R^4$.

We employed the very basics of the theory of groups, affine and
vector spaces,  affine transformations and Euclidean
isometries of the plane. The calculus is applied only in Remarks
\ref{PrvaPrimedba} and \ref{DrugaPrimedba}.
Also, at the end of Section \ref{poenkareovSvet},
we used  Lie groups to describe a 1-parametric family of natural deformations $SP^+_{c,a}(1,1)$ of the group of Poincar\'e
transformations of the two-dimensional affine world.
Note that the group $SP^+_{c,a}(1,1)$ leads to the same law of addition of velocities as in special relativity, showing that one should be very careful
in derivation of the Lorentz and Poincar\'e transformations.  Curiously, we did not find the group $SP^+_{c,a}(1,1)$ in the literature.
%The text is self-contained without these parts.

This paper can serve as a supplementary
instructional material in teaching linear algebra, geometry, and
classical mechanics at various levels, from high school (a part of the paper) to undergraduate
and graduate studies. To that end, several
statements are listed in the form of problems, which are left as exercises
for readers and students. This material can be used, for example,
in initial lectures in graduate courses on analytical mechanics and
symplectic geometry \cite{Ar, LM}, and the theory of relativity \cite{HE}. It
illustrates the appearance of Lie groups in
mechanics and provides a necessary geometric background for the
formulation of the Newtonian laws of motion. Our goal has also been to
underline the natural and unifying relationship between classical mechanics
and the theory of relativity, which is quite often forgotten or hidden,
despite its fundamental importance.

\section{Affine world and the principle of relativity.}\label{relativnost}

One of the most profound steps in the understanding of nature, known today as \emph{the principle of relativity},
was discovered by Galileo Galilei (1564--1642) in his
\emph{Dialogo sopra i due massimi sistemi del mondo} from 1632. \footnote{The book is written in the form of discussions. As an illustration, we present the following beautiful part, taken from \cite{Y}, page 18:
"When you have observed all these things carefully (though there is
no doubt that when the ship is standing still everything must happen in this
way), have the ship proceed with any speed you like, so long as the motion is
uniform and not fluctuating this way and that. You will discover not the least
change in all the effects named, nor could you tell from any of them whether
the ship was moving or standing still. In jumping, you will pass on the floor
the same spaces as before, nor will you make larger jumps toward the stem
than toward the prow even though the ship is moving quite rapidly, despite
the fact that during the time that you are in the air the floor under you will
be going in a direction opposite to your jump. In throwing something to your
companion, you will need no more force to get it to him whether he is in the
direction of the bow or the stem, with yourself situated opposite. The
droplets will fall as before into the vessel beneath without dropping toward
the stem, although while the drops are in the air the ship runs many spans."}
At the end of the XIX century, the Galilean principle of relativity was formulated in modern terms and popularized by Henri Poincar\'e (1854--1912), see \cite{P1}. It can be stated as follows (see Vladimir Arnold (1937--2010) \cite{Ar}).

\begin{itemize}

\item[(R1)] \emph{All the laws of nature at all moments of time are the same in all inertial coordinate systems.}

\item[(R2)] \emph{A coordinate system in uniform rectilinear motion with respect to an inertial one is also inertial.}

\end{itemize}

Here we assumed that our space is the 3-dimensional Euclidean space and that the time is 1-dimensional, i.e., described by one real parameter.
In total, we have that all events can be seen as points  in a 4-dimensional affine space $\mathcal A^4$, called the \emph{world}
(or the \emph{universe}, or the \emph{space-time}).
In other words, all events in an  \emph{inertial coordinate system} are described by 4 coordinates $(x,y,z,t)$. The coordinates $(x,y,z)$ are Cartesian  coordinates in the 3-dimensional Euclidean space and $t$ is a time coordinate, measured from a particular event.
Also, from the experience in nature, we consider the Euclidean space with orientation. For example, we need the cross product in the formulation of
many physical laws.

Uniform motions are described by lines, which is consistent with the fact that affine transformations
between different coordinates systems map lines to lines.

What is missing here is the definition of an inertial coordinate system. For
example, in the 3-dimensional Euclidean space, the natural class
of coordinate systems are Cartesian  coordinate systems $(x,y,z)$
determined by frames $[O,\mathbf e_1\mathbf e_2\mathbf
e_3]$, where $\mathbf e_1,\mathbf e_2,\mathbf e_3$ is an
orthonormal basis of the associated vector space.
%Thus, for the Descartes coordinates, we need the additional Euclidean structure in the underlaying 3-dimensional affine space $\mathcal E^3$.
Thus, we need to define an additional structure in the affine world $\mathcal A^4$ that will catch both the Euclidean structure and the time.
We will see that depending on the structure, we obtain two different mechanics: the classical one and the special relativity.
However, in both cases we apply the same Galilean principle of relativity R1, R2.

\section{Galilean geometry.}\label{galilejevSvet}

The points of the 4-dimensional affine world $\mathcal A^4$ are called \emph{events}.
Let  $\mathbb V$ be the associated 4-dimensional vector space.
The \emph{Galilean structure} $(\mathcal A^4,\tau,\rho)$ is defined  by  (we follow \cite{Ar}, see also \cite{Artz, Le}):

\begin{itemize}
\item[(a)] Non-trivial affine \emph{time} mapping $\tau\colon\mathcal A^4 \to \mathbb R$.
The \emph{oriented time between events} $A$ and $B$ is $l(A,B)=\tau(B)-\tau(A)$ (see Fig.\ref{galilejev svet}).
If $l(A,B)>0$ we say that the event $A$ happened before the event $B$.
The \emph{time length} of a vector $\mathbf w\in\mathbb V$ is $l(\mathbf w)=l(A,B)$, where $\mathbf w=\overrightarrow{AB}$.
It is a non-trivial linear mapping $l\colon \mathbb V\to \mathbb R$.

\item[(b)] 3-dimensional affine subspaces $\mathcal A^3_a=\tau^{-1}(a)$,
called  the \emph{spaces of simultaneous events}, are endowed with the Euclidean structure
\[
\rho(A_1,A_2)=\vert \overrightarrow{A_1A_2}\vert= \sqrt{\langle \overrightarrow{A_1A_2},\overrightarrow{A_1A_2}\rangle}, \qquad l(A_1,A_2)=0,
\]
where $\langle \cdot,\cdot\rangle$ is the scalar product on $l^{-1}(0)\subset \mathbf V$.
We also fix the orientation on the spaces of simultaneous events such that the translations
\begin{equation}\label{translacije}
T_{\overrightarrow{AB}}\colon \mathcal A^3_a\to \mathcal A^3_b, \quad T_{\overrightarrow{AB}}(X)=X+\overrightarrow{AB}, \quad \tau(A)=a, \quad \tau(B)=b
\end{equation}
are isometries that preserve the orientation.
\end{itemize}

Vice verse, a non-trivial time length linear mapping $l\colon \mathbb V\to \mathbb R$ defines the oriented time mapping
$l:\mathcal A^4\times \mathcal A^4\to\R$ and the affine time mapping
$\tau_O\colon\mathcal A^4 \to \mathbb R$,
$\tau_O(A)=l(O,A)$,
$O\in\mathcal A^4$.
That is why we also consider $(\mathcal A^4,l,\rho)$ as a \emph{Galilean structure}.

The vectors with time length equal to zero are called \emph{space-like} vectors. The vectors with non-zero time length are called \emph{time-like} vectors.
Similarly we define \emph{space-like} and \emph{time-like} lines.

\begin{figure}[ht]
{\centering
{\includegraphics[width=8.5cm]{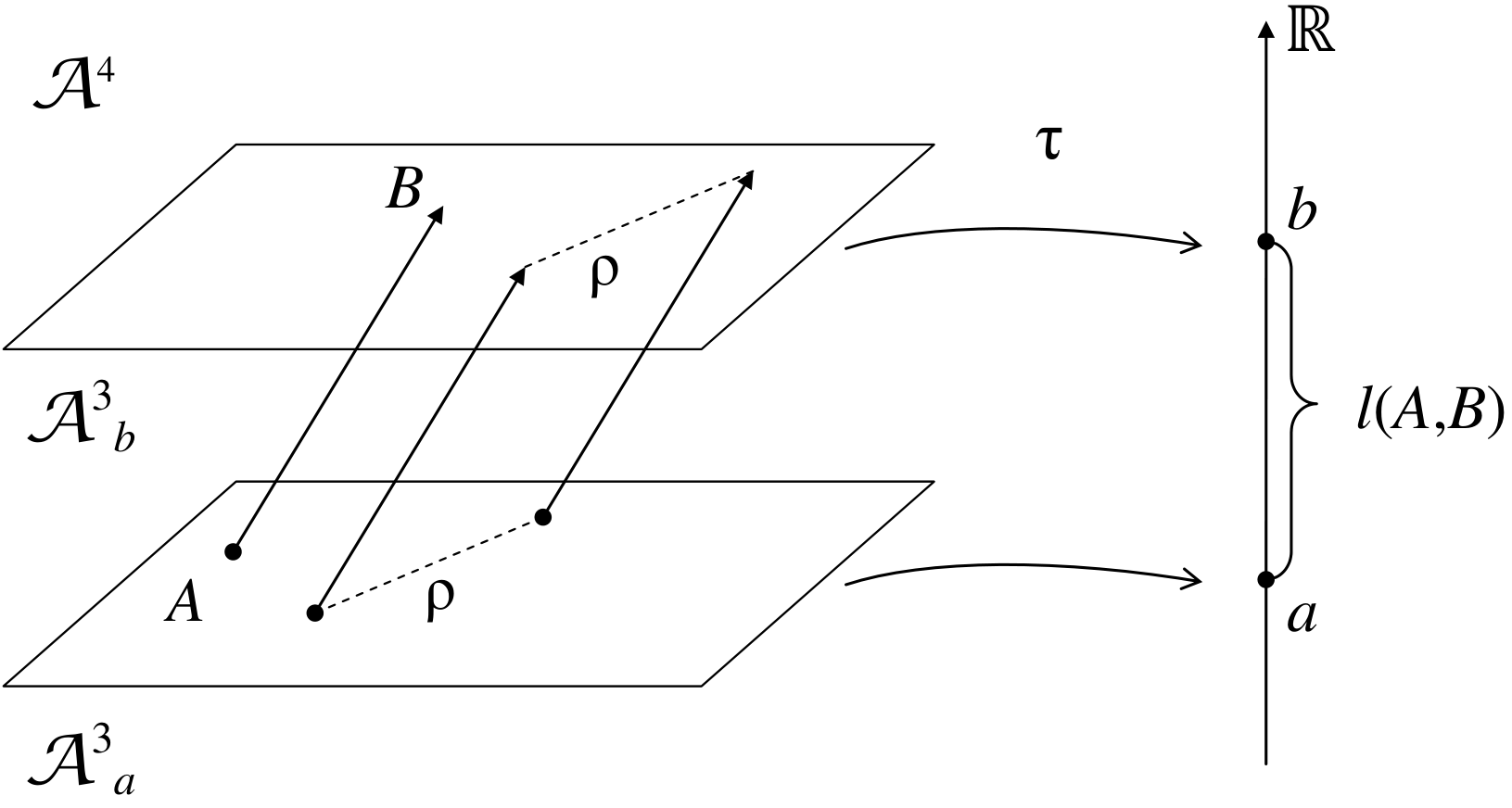}}
\caption{Galilean world} \label{galilejev svet}}
\end{figure}

A motion in the affine world $\mathcal A^4$ along a time-like line $\gamma $ ("forward" in time)  is called a \emph{uniform motion}.
We define the \emph{time-like velocity} of $\gamma$ by:
%$\mathbf V={\overrightarrow{AB}}/{l(A,B)}$, $A,B\in\gamma$, $A\ne B$.
\begin{equation}\label{TLV}
\mathbf V=\frac{\overrightarrow{AB}}{l(A,B)}, \quad A,B\in\gamma, \quad A\ne B.
\end{equation}
In other words, $\mathbf V$ is the normalized  direction of $\gamma$ ($l(\mathbf V)=1$), see Fig. \ref{promena koordinata}.

Now we are ready to define an inertial coordinate system.

\begin{dfn}
A \emph{Galilean basis} $\mathbf e_1,\mathbf e_2,\mathbf
e_3,\mathbf e_4$ of the vector space $\mathbb V$ is the basis such
that $\mathbf e_1,\mathbf e_2,\mathbf e_3$ are space-like vectors
that form a positive oriented orhonormal basis for the associated
spaces of simultaneous events and $\mathbf e_4$ is a time-like
vector of time length equals to 1. An event $O\in \mathcal A^4$
and a Galilean basis $\mathbf e_1,\mathbf e_2,\mathbf e_3,\mathbf e_4$
define an \emph{inertial reference frame}
$\mathbf I=[O,\mathbf e_1\mathbf e_2\mathbf e_3\mathbf e_4]$ in the Galilean world $\mathcal A^4$.
\end{dfn}

The coordinates $(x,y,z,t)\in \mathbb R^4$ (called \emph{inertial coordinates})
of an event $X\in\mathcal A^4$ with respect to the inertial frame $\mathbf I$ are defined, as usual,  by the identity
\[
\overrightarrow{OX}=x\mathbf e_1+y\mathbf e_2+\mathbf e_3+t\mathbf e_4.
\]
In such a way, we obtain a bijection $\varphi_\mathbf I: \mathcal A^4\to\mathbb R^4$, $X\mapsto \mathbf x=(x,y,z,t)=\varphi_\mathbf I(X)$.
Also, as usual, by $X(x,y,z,t)$ we denote an event $X\in\mathcal A^4$, which has the coordinates $\mathbf x=(x,y,z,t)$
in the inertial reference frame $\mathbf I$.

In an inertial coordinate system $(x,y,z,t)$,
all uniform motions $\gamma$ can be seen as the graphs of the mappings $t\mapsto (x_0+v_x t,y_0+v_y t, z_0+v_z t)$, $t\in\mathbb R$:
\[
\gamma=\{\gamma(t)=(x_0+v_x t,y_0+v_y t, z_0+v_z t,t)\,\vert\, t\in\R\},
\]
where
\[
v_x=\frac{\Delta x}{\Delta t}=\frac{x_2-x_1}{t_2-t_1}, \quad v_y=\frac{\Delta y}{\Delta t}=\frac{y_2-y_1}{t_2-t_1}, \quad
v_z=\frac{\Delta z}{\Delta t}=\frac{z_2-z_1}{t_2-t_1},
\]
and $A_1(x_1,y_1,z_1,t_1)\in\gamma$ and  $A_2(x_2,y_2,z_2,t_2)\in\gamma$ are arbitrary different events.

The space-like vector $\mathbf v=v_x \mathbf e_1+v_y\mathbf e_2+v_z\mathbf e_3$ is called the \emph{velocity} of the uniform motion $\gamma$ in the
inertial coordinate system $(x,y,z,t)$ (see Fig. \ref{promena koordinata}).
The uniform motions in the direction of $\mathbf e_4$ ($\mathbf v=0$) are \emph{at rest} in the coordinate system  $(x,y,z,t)$.
It is clear that for any uniform motion, there exists an inertial reference frame such that this motion is at rest.

Note that the time-like velocity \eqref{TLV} and the space-like velocity $\mathbf v$ of $\gamma$ are related by:
\begin{equation}\label{vremenskaBrzina}
\mathbf V=\frac{\overrightarrow{A_1A_2}}{l(A_1,A_2)}=v_x\mathbf e_1+v_y\mathbf e_2+v_z\mathbf e_3+\mathbf e_4=\mathbf v+\mathbf e_4.
\end{equation}

\begin{rem}\label{PrvaPrimedba}
In general, the \emph{motions in the affine world} $\mathcal A^4$ (or the \emph{world lines}) are curves of the form
\begin{equation}\label{kretanje}
\gamma=\{\gamma(t)=(x(t),y(t), z(t),t)\,\vert\, t\in\R\},
\end{equation}
where $x(t)$, $y(t)$, $z(t)$ are smooth functions of the time $t$ in the chosen inertial reference frame $\mathbf I$. The %(\emph{space-like})
\emph{velocity} and the \emph{time-like velocity} of the motion
\eqref{kretanje} at the moment $t$ are defined, respectively, by %the space-like vector and the time-like vector
\[
\mathbf v(t)=\frac{dx}{dt} \mathbf e_1+\frac{dy}{dt}\mathbf e_2+\frac{dz}{dt}\mathbf e_3, \qquad
\mathbf V(t)=\frac{d\gamma}{dt}=\mathbf v(t)+\mathbf e_4.
\]

Note that $\gamma$ is a smooth curve in the affine world $\mathcal A^4$, while the projection
of $\gamma$ to the Euclidean space $\{(x(t),y(t), z(t))\,\vert\, t\in\R\}$ not need to be smooth, even locally.
For example, let us consider the curve $\gamma$ with $x(t)=t^2$, $y(t)=t^3$, $z(t)=0$.
It is a smooth curve in $\mathcal A^4$ (the tangent line at $\gamma(t)$
is spanned by $\mathbf V(t)=2t\mathbf e_1+3t^2\mathbf e_2+\mathbf e_4$).
The projection of $\gamma$ has the cusp singularity at $t=0$.
\end{rem}

\section{Galilean transformations.}\label{galilejeveTransformacije}

Let us consider two inertial reference frames $\mathbf I=[O,\mathbf e_1\mathbf e_2\mathbf e_3\mathbf e_4]$ and
$\mathbf I'=[O',\mathbf e_1'\mathbf e_2'\mathbf e_3'\mathbf e_4']$.
Consider a point $X\in\mathcal A^4$ that has coordinates
\[
\mathbf x=(x,y,z,t)=\varphi_\mathbf I(X), \qquad \mathbf x'=(x',y',z',t')=\varphi_{\mathbf I'}(X)
\]
in the frames $\mathbf I$ and $\mathbf I'$, respectively.
The affine mapping
$G\colon \mathbb R^4\to \mathbb R^4$
defined by
\[
G=\varphi_\mathbf I\circ \varphi_{\mathbf I'}^{-1}, \qquad \mathbf x=A\mathbf x'+\mathbf b,
\] (see Fig. \ref{promena koordinata})
is called the \emph{Galilean transformation} between inertial coordinates of the frames $\mathbf I'$ and $\mathbf I$.
Here $\mathbf b=(b_1,b_2,b_3,b_4)$ are the coordinates of the event $O'$ in the frame $\mathbf I$.

\begin{figure}[ht]
{\centering
{\includegraphics[width=8.5cm]{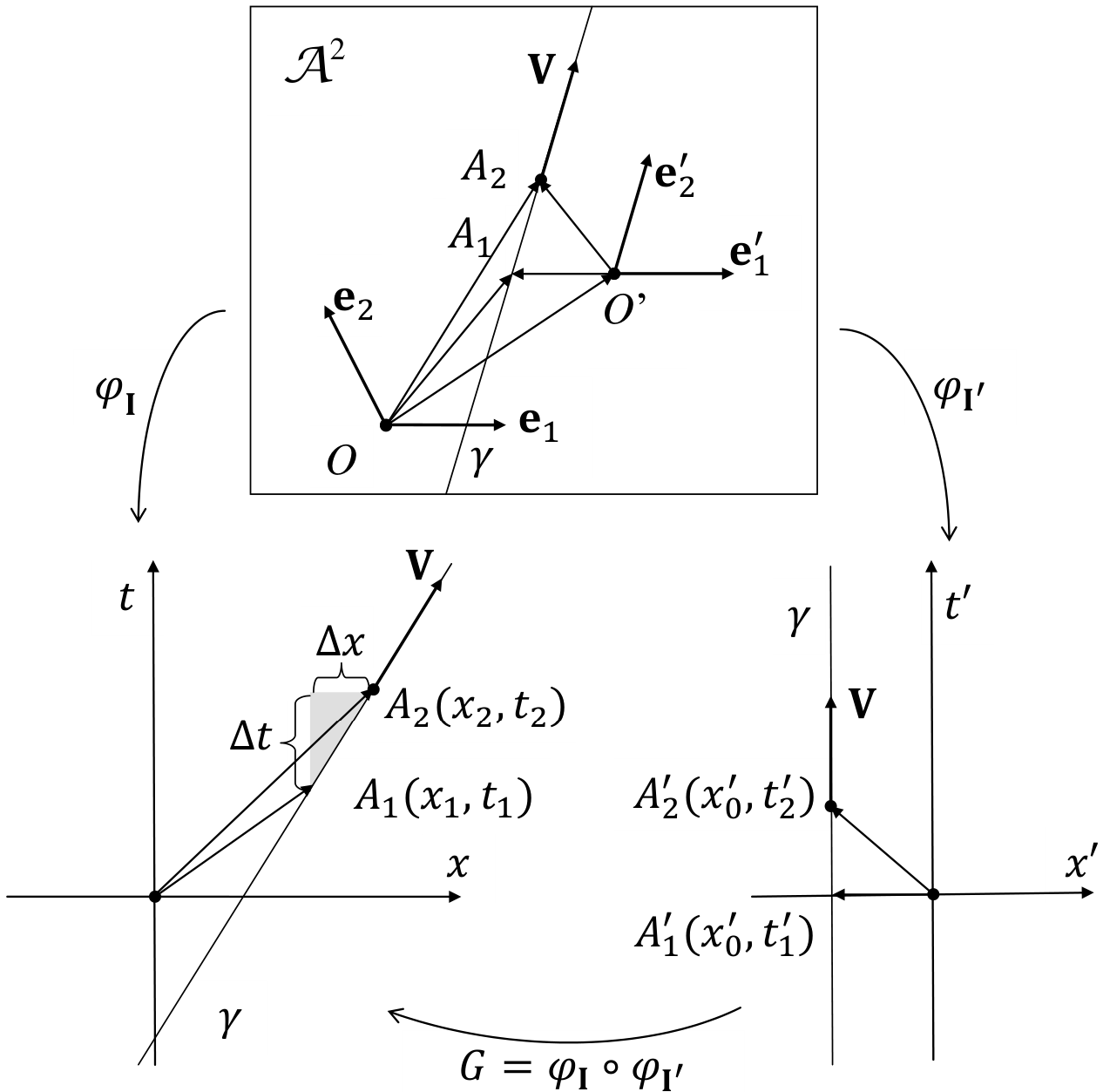}}
\caption{Galilean transformation of coordinates. The uniform motion $\gamma$ that is at rest in the reference frame $\mathbf I'$ has the velocity $\mathbf u=u\mathbf e_1$,
$u=\Delta x/\Delta t$, in the reference frame $\mathbf I$. Note that $u$ equals to the tangent of the angle between $\gamma$ and $t$-axis in the reference frame $\mathbf I$. Also, the time-like velocity $\mathbf V$ of $\gamma$ is equal to $\mathbf e_2'$.} \label{promena koordinata}}
\end{figure}

Instead  to describe the above class of $(4\times 4)$-matrixes $A$, we will
pass to the 2-dimensional affine space-time $\mathcal A^2$ with one-dimensional Euclidean spaces of simultaneous events.
This is sufficient to realize the main differences and similarities between the Euclidean and the Galilean geometry.

With the same definitions as above
we need to find the Galilean transformation
\[
G\colon \mathbb R^2\to \mathbb R^2, \qquad \mathbf x=\mathrm A\mathbf x'+\mathbf b,
\]
where $\mathbf x=(x,t)$ and $\mathbf x'=(x',t')$ are coordinates of an event $X\in \mathcal A^2$ in the inertial frames
$\mathbf I=[O,\mathbf e_1\mathbf e_2]$ and $\mathbf I'=[O',\mathbf e_1'\mathbf e_2']$ (see Fig. \ref{promena koordinata}). Here:
\begin{equation}\label{G-uslovi}
l(\mathbf e_1)=l(\mathbf e_1')=0, \quad \vert \mathbf e_1\vert=\vert \mathbf e_1'\vert=1, \quad l(\mathbf e_2)=l(\mathbf e_2')=1.
\end{equation}

\begin{prop}\label{stav1}
The Galilean transformation
$G$ between the coordinates with respect to the inertial frames $\mathbf I'$ and $\mathbf I$ is of the form
\begin{equation}\label{G-u}
\begin{pmatrix}
x\\
t
\end{pmatrix}=
\begin{pmatrix}
1 & u\\
0 & 1
\end{pmatrix}
\begin{pmatrix}
x'\\
t'
\end{pmatrix}+
\begin{pmatrix}
b_1\\
b_2
\end{pmatrix},
\end{equation}
where $\mathbf b=(b_1,b_2)$ are the coordinates of $O'$ in the frame $\mathbf I$.
The parameter $u$ has the following kinematical interpretation: the motions that are at rest in the reference frame $\mathbf I'$ have the velocity
$\mathbf u=u\mathbf e_1$ in the reference frame $\mathbf I$.
\end{prop}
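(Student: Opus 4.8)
The plan is to read the linear part $\mathrm{A}$ off directly as the change-of-basis matrix between the two Galilean frames, and then impose the structural constraints \eqref{G-uslovi} one at a time. First I would note that for any event $X$ we have $\overrightarrow{OX}=\overrightarrow{OO'}+\overrightarrow{O'X}$, so writing $\overrightarrow{OX}=x\mathbf e_1+t\mathbf e_2$, $\overrightarrow{O'X}=x'\mathbf e_1'+t'\mathbf e_2'$, and $\overrightarrow{OO'}=b_1\mathbf e_1+b_2\mathbf e_2$ reduces everything to expressing the primed basis in the unprimed one. Thus if I set $\mathbf e_1'=a_{11}\mathbf e_1+a_{21}\mathbf e_2$ and $\mathbf e_2'=a_{12}\mathbf e_1+a_{22}\mathbf e_2$, then matching coefficients of $\mathbf e_1$ and $\mathbf e_2$ shows that the columns of $\mathrm A=(a_{ij})$ are precisely these coordinate vectors, while the translation part is $\mathbf b=(b_1,b_2)$.

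The second row of $\mathrm A$ is then pinned down by the time length alone. Applying the linear map $l$ and using $l(\mathbf e_1)=0$, $l(\mathbf e_2)=1$ from \eqref{G-uslovi}, I get $l(\mathbf e_1')=a_{21}$ and $l(\mathbf e_2')=a_{22}$; the conditions $l(\mathbf e_1')=0$ and $l(\mathbf e_2')=1$ therefore force $a_{21}=0$ and $a_{22}=1$. At this stage $\mathbf e_1'=a_{11}\mathbf e_1$ is already a multiple of $\mathbf e_1$, so it lies in the one-dimensional space of simultaneous events where the Euclidean norm lives, and the normalization $\vert\mathbf e_1'\vert=\vert\mathbf e_1\vert=1$ gives $\vert a_{11}\vert=1$, i.e. $a_{11}=\pm1$.

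The one genuinely nonautomatic step, and the place I expect to have to be careful, is discarding the sign $a_{11}=-1$. Here I would invoke the orientation part of the Galilean structure (b): both $\mathbf e_1$ and $\mathbf e_1'$ are positively oriented unit vectors for the Euclidean spaces of simultaneous events, and the translations \eqref{translacije} are required to be orientation-preserving isometries, which in dimension one means $\mathbf e_1'$ and $\mathbf e_1$ point the same way. This rules out $a_{11}=-1$ and leaves $a_{11}=1$, so that $\mathrm A=\begin{pmatrix}1&u\\0&1\end{pmatrix}$ with $u:=a_{12}$ the only free parameter, which is exactly \eqref{G-u}.

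Finally, for the kinematical interpretation I would read off the motion at rest in $\mathbf I'$: it moves along the direction $\mathbf e_2'=u\mathbf e_1+\mathbf e_2$, which by the velocity relation \eqref{vremenskaBrzina} (now with $\mathbf e_2$ playing the role of the time direction) is the time-like velocity $\mathbf V=\mathbf v+\mathbf e_2$ with space-like part $\mathbf v=u\mathbf e_1$. Hence such a motion has velocity $\mathbf u=u\mathbf e_1$ in the frame $\mathbf I$, as claimed.
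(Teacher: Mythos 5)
Your proposal is correct and follows essentially the same route as the paper: determine the linear part as the change of basis between the two Galilean bases, use linearity of $l$ together with \eqref{G-uslovi} to force the second row to be $(0,1)$ and $\mathbf e_1'$ to be a unit multiple of $\mathbf e_1$, fix the sign by orientation, and read off the velocity of a motion at rest in $\mathbf I'$ from $\mathbf e_2'=u\mathbf e_1+\mathbf e_2$ (the paper's ``direct proof'' of the kinematical claim). Your version merely spells out the $a_{11}=\pm1$ dichotomy a bit more explicitly than the paper, which asserts $\mathbf e_1=\mathbf e_1'$ immediately from orientation.
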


\begin{proof}
Since $\mathbf e_1$ and $\mathbf e_1'$ are of the same orientation, we have $\mathbf e_1=\mathbf e_1'$.\footnote{
In the 4-dimensional affine world $\mathcal A^4$, two orthonormal bases are related by the elements of the group $SO(3)$.} On the other hand, from \eqref{G-uslovi}
and the fact that the time length $l$ is a linear mapping, we get
\begin{equation}\label{LGT}
\mathbf e_2'=u\mathbf e_1+\mathbf e_2,
\end{equation}
for some real parameter $u$.
Then, for a given event $X$, we have
\begin{align*}
\overrightarrow{OX}=\overrightarrow{OO'}+\overrightarrow{O'X}
&\Longleftrightarrow x\mathbf e_1+t\mathbf e_2=(b_1\mathbf e_1+b_2\mathbf e_2)+(x'\mathbf e_1'+t'\mathbf e_2') \\
&\Longleftrightarrow x\mathbf e_1+t\mathbf e_2=b_1\mathbf e_1+b_2\mathbf e_2+x'\mathbf e_1+t'(u\mathbf e_1+\mathbf e_2)
\end{align*}

By comparing the terms with $\mathbf e_1$ and $\mathbf e_2$ we get the Galilean transformation $G$:
\begin{equation}
(x,t)=(x'+ut'+b_1,t'+b_2),
\end{equation}
which proves \eqref{G-u}.

Next, consider a uniform motion that is at rest in the reference frame $\mathbf I'$:
\[
\gamma=\{(x_0',t')\,\vert\,t'\in\mathbb R\}.
\]

Let $A_1$ and $A_2$ be two arbitrary different events on $\gamma$ with coordinates
$A_i(x_i,t_i)$, $A_i(x_0',t_i')$ $(i=1,2)$
in the reference frames $\mathbf I$ and $\mathbf I'$, respectively  (see Fig. \ref{promena koordinata}).
Then, from \eqref{G-u} we get:
\[
\frac{\Delta x}{\Delta t}=\frac{x_2-x_1}{t_2-t_1}=\frac{(x_0'+ut_2'+b_1)-(x_0'+ut_1'+b_1)}{(t_2'+b_2)-(t_1'+b_2)}=\frac{u(t_2'-t_1')}{t_2'-t_1'}=u.
\]
Therefore, the motion $\gamma$ has the velocity
$\mathbf u=u\mathbf e_1$ in the reference frame $\mathbf I$.

There is a direct proof of the last statement. Namely, the time-like velocity $\mathbf V$ of $\gamma$ is $\mathbf e_2'$ (see Fig. \ref{promena koordinata}).
From \eqref{LGT} we obtain $\mathbf V=u\mathbf e_1+\mathbf e_2$, which  proves that $\mathbf u=u\mathbf e_1$
is the velocity of $\gamma$ (equation \eqref{vremenskaBrzina} considered in the 2-dimensional world).
\end{proof}

Let us denote the Galilean transformation \eqref{G-u} by $G(u,\mathbf b)=G(u,b_1,b_2)$.
In the classical mechanics it is usual to say that the inertial frame $\mathbf I'$ moves with the \emph{velocity} $\mathbf u=u\mathbf e_1$
with respect to the inertial frame $\mathbf I$. %Also, usually one consider transformations with $b_2=0$, i.e., the transformations where the times are equal: $t=t'$.

It is clear that if we have 3 inertial reference frames $\mathbf I$, $\mathbf I'$ and $\mathbf I''$, and if $G(u,\mathbf b)$,
$G(u',\mathbf b')$, and $G(u'',\mathbf b'')$ are coordinate transformations between coordinates in the frames $\mathbf I'$ and $\mathbf I$, $\mathbf I''$ and $\mathbf I'$,
and $\mathbf I''$ and $\mathbf I$, then
$
G(u'',\mathbf b'')=G(u,\mathbf b)\circ G(u',\mathbf b').
$
Whence, the set of the Galilean transformations
\[
SG(2)=\{G(u,b_1,b_2)\,\vert\, u,b_1,b_2\in\mathbb R\}
\]
form a subgroup in the group of affine transformations
of the plane $\mathbb R^2$ with the neutral given by the identity mapping $\mathbf E=G(0,0,0)$.

By the analogy with the Euclidean geometry,
we define the \emph{Galilean rotations} as the
subgroup $SG_0(2)<SG(2)$ of the linear transformations with matrixes of the form
\[
\mathrm A(u)=\begin{pmatrix}
1 & u\\
0 & 1
\end{pmatrix}, \qquad u\in\mathbb R.
\]

\begin{theorem}\label{prvi}
Prove that the group multiplications and the inverses in $SG_0(2)$ and $SG(2)$ are given by:
\begin{align}
\label{sg1} & \mathrm A(u_1)\circ \mathrm A(u_2)=\mathrm A(u_1+u_2), \\
\label{sg2}& \mathrm A^{-1}(u)=\mathrm A(-u),\\
\label{sg3}& G(u,\mathbf b)\circ G(u',\mathbf b')=G(u+u',\mathrm A(u)\mathbf b'+\mathbf b),\\%=G(u+u',b_1+u b_2'+b_1',b_2+b_2'),\\
\label{sg4}&G(u,\mathbf b)^{-1}=G(-u,-\mathrm A(-u)\mathbf b)=G(-u,-b_1+ub_2,-b_2)
\end{align}
\end{theorem}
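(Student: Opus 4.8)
The plan is to prove the two rotation identities by one direct matrix computation and then bootstrap the two affine identities from them, reusing the composition law rather than recomputing from scratch. First I would verify \eqref{sg1} by multiplying the upper-triangular matrices $\mathrm A(u_1)$ and $\mathrm A(u_2)$ directly: since every diagonal entry is $1$ and the only off-diagonal entry sits in the upper-right corner, the product has upper-right entry $u_1+u_2$ and is therefore exactly $\mathrm A(u_1+u_2)$. Identity \eqref{sg2} then follows for free by specializing \eqref{sg1} to $u_1=u$, $u_2=-u$: the product is $\mathrm A(0)=\mathbf E$, so $\mathrm A(-u)$ is the two-sided inverse of $\mathrm A(u)$.

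Next I would establish \eqref{sg3} by composing the two affine maps $\mathbf x\mapsto \mathrm A(u')\mathbf x+\mathbf b'$ and $\mathbf x\mapsto \mathrm A(u)\mathbf x+\mathbf b$. Expanding $\mathrm A(u)\bigl(\mathrm A(u')\mathbf x+\mathbf b'\bigr)+\mathbf b=\mathrm A(u)\mathrm A(u')\mathbf x+\bigl(\mathrm A(u)\mathbf b'+\mathbf b\bigr)$ and invoking \eqref{sg1} to replace $\mathrm A(u)\mathrm A(u')$ by $\mathrm A(u+u')$ yields precisely the linear part $\mathrm A(u+u')$ and the translation part $\mathrm A(u)\mathbf b'+\mathbf b$ of $G(u+u',\mathrm A(u)\mathbf b'+\mathbf b)$, which is the claim.

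Finally, for \eqref{sg4} I would avoid a fresh calculation and instead solve $G(u,\mathbf b)\circ G(v,\mathbf c)=\mathbf E=G(0,\mathbf 0)$ using the composition rule \eqref{sg3}: matching the first argument forces $v=-u$, while matching the translation forces $\mathrm A(u)\mathbf c+\mathbf b=\mathbf 0$, i.e.\ $\mathbf c=-\mathrm A(u)^{-1}\mathbf b=-\mathrm A(-u)\mathbf b$ by \eqref{sg2}; this is the first form of the inverse. The only genuine bookkeeping left is the matrix-vector product $\mathrm A(-u)\mathbf b=(b_1-ub_2,\,b_2)$, whose negative gives the explicit coordinates $(-b_1+ub_2,-b_2)$ and hence the second form. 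I do not expect a substantive obstacle here beyond this routine verification; the one point demanding care is the order of composition in \eqref{sg3}, since the linear part $\mathrm A(u)$ of the outer map acts on the translation $\mathbf b'$ of the inner map, and respecting this noncommutative twist is exactly what makes the inverse come out as stated.
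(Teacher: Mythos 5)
Your proof is correct: the direct multiplication for \eqref{sg1}, the specialization $u_2=-u_1$ for \eqref{sg2}, the composition of affine maps for \eqref{sg3}, and solving $G(u,\mathbf b)\circ G(v,\mathbf c)=G(0,\mathbf 0)$ for \eqref{sg4} all check out, including the coordinate form $-\mathrm A(-u)\mathbf b=(-b_1+ub_2,-b_2)$. The paper states this as a Problem and supplies no proof of its own, so there is nothing to compare against; your argument is the intended routine verification, and deriving \eqref{sg4} from \eqref{sg3} rather than by a fresh computation is a clean way to organize it.
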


\begin{theorem} [Addition of velocities in the classical mechanics]
Assume that a time-like line $\gamma$ represent a uniform motion with velocities $\mathbf v=v\mathbf e_1$ and $\mathbf v'=v\mathbf e_1$ in the
inertial frames $\mathbf I$ and $\mathbf I'$, respectively.
If the velocity of the frame $\mathbf I'$ with respect to $\mathbf I$ is $\mathbf u=u\mathbf e_1$, then
\[
v=u+v'.
\]
\begin{itemize}
\item[(i)]  Prove the statement directly, by using the definition $v={\Delta x}/{\Delta t}$ and the Galilean transformation \eqref{G-u}.
\item[(ii)] Prove the statement by using Problem \ref{prvi}
\end{itemize}
\end{theorem}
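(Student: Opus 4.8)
The plan is to treat the two parts separately; both reduce to the single coordinate formula \eqref{G-u}, so no new machinery is needed.

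For part (i), I would choose two arbitrary distinct events $A_1,A_2$ on $\gamma$, with coordinates $(x_i,t_i)$ in $\mathbf I$ and $(x_i',t_i')$ in $\mathbf I'$ for $i=1,2$. Applying \eqref{G-u} componentwise gives $x_i=x_i'+ut_i'+b_1$ and $t_i=t_i'+b_2$, whence the increments satisfy $\Delta x=\Delta x'+u\,\Delta t'$ and $\Delta t=\Delta t'$, the translation constants $b_1,b_2$ cancelling. Since $\gamma$ is time-like we have $\Delta t=\Delta t'\ne 0$, so dividing yields $v=\Delta x/\Delta t=\Delta x'/\Delta t'+u=v'+u$, which is the assertion. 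The only point to check is that the quotients are well defined, and this is precisely the hypothesis that $\gamma$ is a uniform (time-like) motion.

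For part (ii), I would introduce a third inertial frame $\mathbf I''$ in which $\gamma$ is at rest; such a frame exists, as observed in Section~\ref{galilejevSvet}. Let $G(u',\mathbf b')$ be the transformation from $\mathbf I''$ to $\mathbf I'$ and $G(u'',\mathbf b'')$ the transformation from $\mathbf I''$ to $\mathbf I$. By the composition rule for Galilean transformations recorded just before the definition of $SG(2)$, together with \eqref{sg3}, one has $u''=u+u'$. Now I would invoke the kinematical interpretation of Proposition~\ref{stav1} three times: because $\gamma$ is at rest in $\mathbf I''$, its velocity in $\mathbf I'$ equals the parameter $u'$ and its velocity in $\mathbf I$ equals the parameter $u''$. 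By hypothesis these velocities are $v'$ and $v$, so $u'=v'$ and $u''=v$, and therefore $v=u''=u+u'=u+v'$.

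Both computations are routine and there is no genuine obstacle; the statement is meant as an exercise reinforcing the group structure of Problem~\ref{prvi}. The one place demanding care is the bookkeeping in part (ii): one must match the roles of the three frames to the fixed composition order $G(u'',\mathbf b'')=G(u,\mathbf b)\circ G(u',\mathbf b')$ from the text, and apply Proposition~\ref{stav1} with $\gamma$ at rest in $\mathbf I''$ so that each velocity is read off as the corresponding parameter $u'$ or $u''$. Once the labelling is fixed consistently, the identities $u'=v'$ and $u''=v$ are immediate, and the additive law follows.
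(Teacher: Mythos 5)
Your proposal is correct: the paper leaves this statement as an exercise with no written solution, and your two arguments follow exactly the routes indicated in parts (i) and (ii), with the increment computation from \eqref{G-u} done properly and the frame bookkeeping in part (ii) consistent with the composition convention $G(u'',\mathbf b'')=G(u,\mathbf b)\circ G(u',\mathbf b')$ and the kinematical interpretation of Proposition \ref{stav1}. Nothing is missing.
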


Note that in \eqref{G-u} we have the transformation between the different coordinates of the same point in the affine plane $\mathcal A^2$.
Besides, we can consider the mappings of the affine plane itself.
More precisely, let us fix the inertial frame $\mathbf I$, i.e., the bijection $\varphi_\mathbf I$ that identifies $\mathcal A^2$
and $\mathbb R^2$. Then \eqref{G-u} defines the affine transformation of $\mathcal A^2$.
Let $\mathbf x_1=(x_1,t_1)$ and $\mathbf x_2=(x_2,t_2)$ be the coordinates of the events $A_1$ and $A_2$ and let
\[
\mathbf x_1'=G(u,\mathbf b)(\mathbf x_1)=A(u)\mathbf x_1+\mathbf b, \qquad \mathbf x_2'=G(u,\mathbf b)(\mathbf x_2)=A(u)\mathbf x_2+\mathbf b,
\]
 be their images $A_1'$ and $A_2'$ in the same coordinates $(x,t)$ (see Fig. \ref{galilejevo kretanje}).  Then
\begin{align*}
& l(A_1,A_2)=t_2-t_1=0  \, \Rightarrow \, l(A_1',A_2')=t_2'-t_1'=0, \,\rho(A_1',A_2')=\rho(A_1,A_2) \\
& l(A_1,A_2)=t_2-t_1 \neq 0 \, \Rightarrow \, l(A_1',A_2')=t_2'-t_1'=t_2-t_1.
\end{align*}

Thus, the mapping $G(u,\mathbf b)$ is a \emph{Galilean isometry} -- it preserves the
Galilean structure of $\mathcal A^2$.

\begin{figure}[ht]
{\centering
{\includegraphics[width=8.5cm]{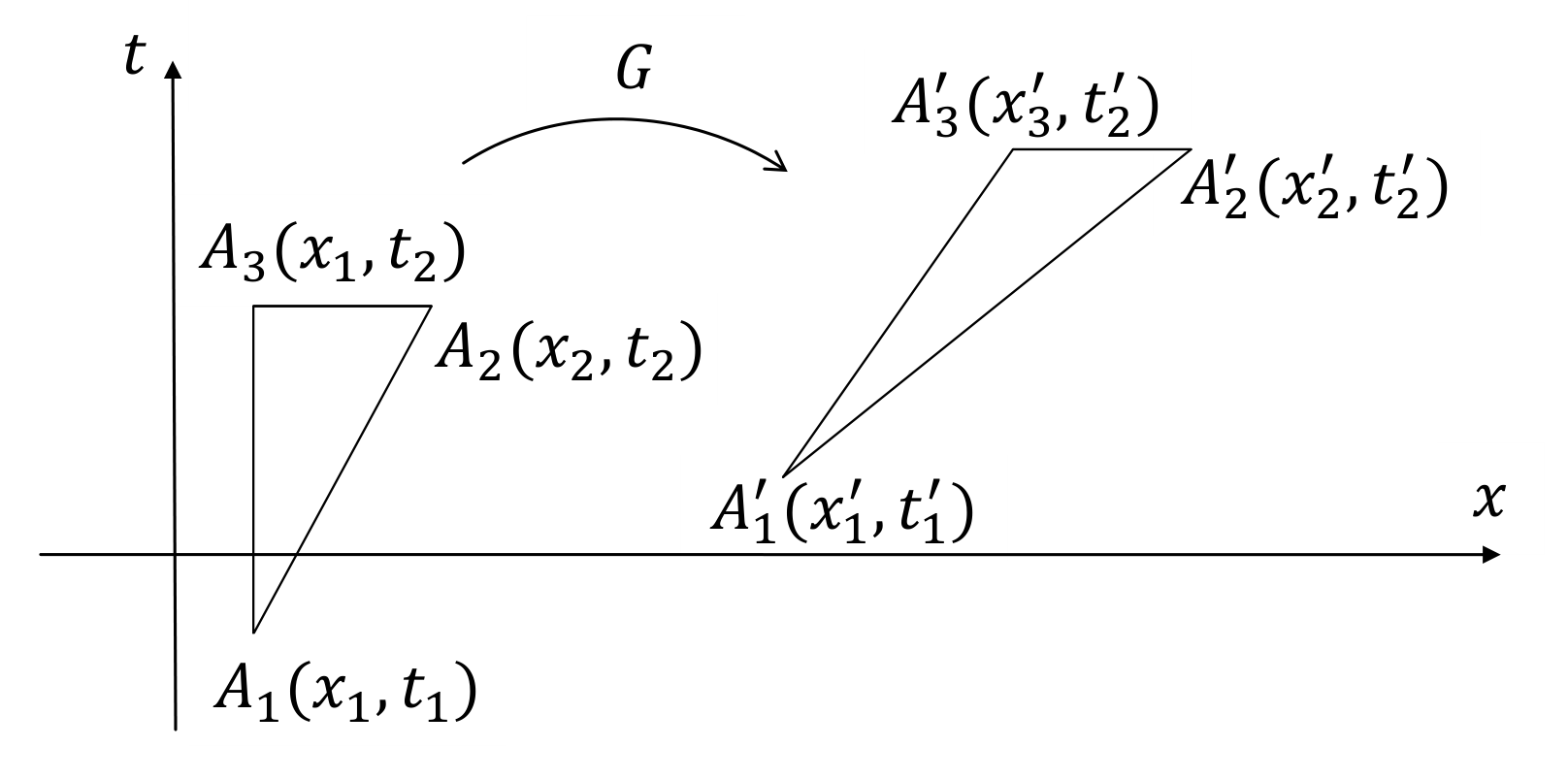}}
\caption{Galilean transformation of a plane} \label{galilejevo kretanje}}
\end{figure}

The group $SG(2)$ is called the \emph{group of Galilean motions} of a plane.
Although as a set it is equal to the product $SG_0(2)\times \mathbb R^2$, concerning the group structure given by \eqref{sg3} and \eqref{sg4},  $SG(2)$
is an example of a \emph{semi-direct product} $SG_0(2)\ltimes \mathbb R^2$ of the group of Galilean rotations $SG_0(2)$ that acts on the
group translations $\mathbb R^2$ of a plane.
Recall, similarly, the group of Euclidean motions $SE(2)$ (isometries that preserve orientation of the Euclidean plane) is isomorphic to a semi-direct
product $SO(2)\ltimes \mathbb R^2$ of the group of rotations $SO(2)$ that acts on the group of the translations $\mathbb R^2$ of a plane.

According to \eqref{sg1}, \eqref{sg2}, the group $(SG_0(2),\circ)$ is isomorphic to $(\mathbb R,+)$.
In addition, since $\det A(u)=1$, $SG_0(2)$ is a subgroup of the special linear group $SL(2)$ of the $2\times2$-matrixes with determinant 1:
\begin{equation}\label{sl2}
SL(2)=\Big\{ \mathrm A=\begin{pmatrix}
a & b  \\
c & d
\end{pmatrix} \big\vert\, \det\mathrm A=ad-bc=1,\, a,b,c,d\in\mathbb R\Big\}.
\end{equation}

Let $\Pi(\mathbf u,\mathbf v)$ denotes the oriented parallelogram spanned by $\mathbf u$ and $\mathbf v$, where the positive orientation is the orientation of
$\Pi(\mathbf e_1,\mathbf e_2)$.
Recall that the \emph{oriented area} $\mu$ of $\Pi(\mathbf u,\mathbf v)$  is given by the determinant:
\[
\mu(\Pi(\mathbf u,\mathbf v))=
\begin{vmatrix}
u_1 & v_1  \\
u_2 & v_2
\end{vmatrix}=u_1v_2-v_1u_2, \quad \mathbf u=u_1\mathbf e_1+u_2\mathbf e_2, \quad \mathbf v=v_1\mathbf e_1+v_2\mathbf e_2,
\]
and that the determinant of the matrix $\mathrm A$ can be seen  as the ratio of the oriented areas of parallelograms related by the corresponding linear mapping:
\[
\mu(\Pi(A\mathbf u,A\mathbf v))=\det\mathrm A \cdot  \mu(\Pi(\mathbf u,\mathbf v)), \qquad \mathbf u,\mathbf v\in\mathbb R^2.
\]

Thus, the Galilean transformation \eqref{G-u} preserves the standard area of the polygons in the plane $\mathbb R^2$.

The group of Galilean rotations $SG_0$ together with the group of Euclidean rotations (\emph{special orthogonal group})
$SO(2)=\{\mathrm R(\alpha)\,\vert\, \alpha\in [0,2\pi)\}$,
\begin{equation}\label{rotacije}
\begin{pmatrix}
v_1'\\
v_2'
\end{pmatrix}=
\mathrm R(\alpha)
\begin{pmatrix}
v_1\\
v_2
\end{pmatrix}=
\begin{pmatrix}
\cos\alpha & -\sin\alpha \\
\sin\alpha & \cos\alpha
\end{pmatrix}
\begin{pmatrix}
v_1\\
v_2
\end{pmatrix},
\end{equation}
are important ingredients of $SL(2)$.

\begin{theorem}[The Iwasawa decomposition of $SL(2)$]\label{probIWD}
Prove that for every matrix $\mathrm A\in SL(2)$ there exist unique rotations $\mathrm R(\alpha)\in SO(2)$ and $\mathrm A(u)\in SG_0(2)$, and a diagonal matrix
\[
\mathrm D(\sigma)=
\begin{pmatrix}
\sigma^{-1} & 0 \\
0 & \sigma
\end{pmatrix}, \qquad \sigma>0,
\]
such that
\begin{equation}\label{ID}
\mathrm A=\mathrm R(\alpha)\circ \mathrm A(u)\circ \mathrm D(\sigma).
\end{equation}
\end{theorem}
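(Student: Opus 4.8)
The plan is to recognize the claimed factorization as the Gram--Schmidt (QR) decomposition adapted to $SL(2)$. First I would observe that the product of the last two factors,
\[
\mathrm A(u)\circ \mathrm D(\sigma)=\begin{pmatrix} 1 & u \\ 0 & 1\end{pmatrix}\begin{pmatrix}\sigma^{-1} & 0 \\ 0 & \sigma\end{pmatrix}=\begin{pmatrix}\sigma^{-1} & u\sigma \\ 0 & \sigma\end{pmatrix},
\]
ranges, as $u\in\mathbb R$ and $\sigma>0$ vary, over exactly the upper triangular matrices with positive diagonal and determinant $1$. Thus \eqref{ID} is equivalent to the assertion that each $\mathrm A\in SL(2)$ factors uniquely as $\mathrm R(\alpha)\,T$ with $\mathrm R(\alpha)\in SO(2)$ and $T$ upper triangular with positive diagonal; the intermediate parameters are then recovered from $T$ by $\sigma^{-1}=T_{11}$, $\sigma=T_{22}$, and $u=T_{12}/\sigma$.

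For existence I would argue by direct solution of the matrix equation. Writing $\mathrm A=\begin{pmatrix} a & b \\ c & d\end{pmatrix}$ and expanding $\mathrm R(\alpha)\circ\mathrm A(u)\circ\mathrm D(\sigma)$, comparison of the first columns gives $a=\sigma^{-1}\cos\alpha$ and $c=\sigma^{-1}\sin\alpha$. Since $\det\mathrm A=1$ forbids a zero column, $a^2+c^2>0$, so I can set $\sigma=(a^2+c^2)^{-1/2}>0$ and then determine $\alpha\in[0,2\pi)$ uniquely by $\cos\alpha=a\sigma$, $\sin\alpha=c\sigma$. Comparing the second columns and taking the combination $b\cos\alpha+d\sin\alpha$ isolates $u=(b\cos\alpha+d\sin\alpha)/\sigma$.

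The step I expect to be the main obstacle is the consistency verification. The first column fixes $\sigma$ and $\alpha$, one combination of the second-column equations fixes $u$, and the complementary combination, namely $\sigma=d\cos\alpha-b\sin\alpha$, must then hold automatically. This is precisely where the hypothesis $\det\mathrm A=ad-bc=1$ enters: substituting $\cos\alpha=a\sigma$ and $\sin\alpha=c\sigma$ gives $d\cos\alpha-b\sin\alpha=\sigma(ad-bc)=\sigma$, as required. (Equivalently, one may run classical Gram--Schmidt on the columns of $\mathrm A$ to produce an orthogonal $Q$ and an upper triangular $R$ with positive diagonal, and observe that $\det\mathrm A=1$ together with $\det R>0$ forces $\det Q=1$, i.e.\ $Q=\mathrm R(\alpha)\in SO(2)$.)

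Finally, for uniqueness I would use the standard intersection argument. If $\mathrm R(\alpha_1)\,T_1=\mathrm R(\alpha_2)\,T_2$ with each $T_i$ upper triangular of positive diagonal and determinant $1$, then $\mathrm R(\alpha_2)^{-1}\mathrm R(\alpha_1)=T_2T_1^{-1}$ lies simultaneously in $SO(2)$ and in the group of upper triangular matrices with positive diagonal. But $\mathrm R(\alpha)$ is upper triangular only when $\sin\alpha=0$, and has positive diagonal only when $\cos\alpha>0$; hence the common element is the identity. Therefore $\alpha_1=\alpha_2$ and $T_1=T_2$, which by the first paragraph yields $\sigma_1=\sigma_2$ and $u_1=u_2$, completing both existence and uniqueness.
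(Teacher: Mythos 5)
Your proof is correct and complete. It is worth noting that the paper takes a more geometric route: it reads the columns $\mathbf v=\mathrm A\mathbf e_1$, $\mathbf w=\mathrm A\mathbf e_2$ as a positively oriented basis spanning a parallelogram of area $1$, sets $\sigma=1/\vert\mathbf v\vert$, defines $\alpha$ by rotating $\mathbf v$ onto the positive $x$-axis, and then characterizes $u$ by requiring the sheared parallelogram $\Pi(\sigma^{-1}\mathbf e_1,\sigma\mathrm A(u)\mathbf e_2)$ to be congruent to $\Pi(\mathbf v,\mathbf w)$; a footnote there openly concedes that the sketch omits the verification that $\mathrm R(-\alpha)\mathbf w$ actually lies on the line $\{\sigma\mathrm A(u)\mathbf e_2\mid u\in\mathbb R\}$. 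That omitted step is exactly your consistency check: the complementary combination $d\cos\alpha-b\sin\alpha=\sigma(ad-bc)=\sigma$ is where $\det\mathrm A=1$ enters, and your explicit computation closes the gap the paper leaves open. Your uniqueness argument via $SO(2)\cap\{\text{upper triangular, positive diagonal}\}=\{\mathbf E\}$ is also not spelled out in the paper's sketch. In short, the paper's version buys geometric intuition (it feeds directly into the visualisation of $SL(2)$ as $\mathbb R^3$ minus a line), while your QR-style computation buys a self-contained and fully rigorous proof of both existence and uniqueness.
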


The solution of the problem is sketched at the beginning of Section \ref{vizualizacija}.

\begin{rem}\label{kolone}
The columns of the matrixes $\mathrm A\in SL(2)$ determine all positive oriented bases
$\mathbf v=\mathrm A\mathbf e_1=a\mathbf e_1+c\mathbf e_2$, $\mathbf w=\mathrm A\mathbf e_2=b\mathbf e_1+d\mathbf e_2$
of $\mathbb R^2$, such
that the area of the parallelogram $\Pi(\mathbf v,\mathbf w)$ is equal to 1.
Similarly, the columns of the matrixes $\mathrm R(\alpha)\in SO(2)$ define all positively oriented orthonormal bases  of $\mathbb R^2$, while
 the columns of the matrixes $\mathrm A(u)\in SG_0(2)$ define all Galilean bases  of $\mathbb R^2$.
\end{rem}

We also have the following interesting statement.

\begin{theorem}
Prove that the group of Galilean motions $SG(2)$ is isomorphic to the 3-dimensional Heisenberg group $H(3)$ defined by:
\[
H(3)=\Big\{
\begin{pmatrix}
1 & x & z \\
0 & 1 & y \\
0 & 0 & 1
\end{pmatrix} \big\vert\, x,y,z\in \mathbb R\Big\}.
\]
\end{theorem}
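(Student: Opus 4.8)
The plan is to exhibit an explicit bijective homomorphism $\Phi\colon SG(2)\to H(3)$, obtained by reading off the group law \eqref{sg3} and matching it against the matrix multiplication in $H(3)$. First I would make the multiplication in $SG(2)$ fully explicit. Writing $\mathbf b=(b_1,b_2)$, $\mathbf b'=(b_1',b_2')$ and using $\mathrm A(u)\mathbf b'=(b_1'+ub_2',\,b_2')$, formula \eqref{sg3} becomes
\[
G(u,b_1,b_2)\circ G(u',b_1',b_2')=G\big(u+u',\,b_1+b_1'+ub_2',\,b_2+b_2'\big).
\]
The decisive feature is the single bilinear ``cross term'' $ub_2'$: the first factor contributes $u$, the second contributes $b_2'$, and their product lands in the $b_1$-slot.

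Next I would compute the product in $H(3)$. Denoting by $M(x,y,z)$ the displayed Heisenberg matrix, a direct multiplication gives
\[
M(x,y,z)\,M(x',y',z')=M\big(x+x',\,y+y',\,z+z'+xy'\big),
\]
so that $H(3)$ likewise has exactly one cross term $xy'$, contributed by $x$ from the first factor and $y'$ from the second, and added to the $z$-slot. Comparing the two laws shows that I should align $u$ with $x$, $b_2$ with $y$, and $b_1$ with $z$; the remaining purely additive coordinates then match automatically. This comparison is the only real content of the proof, and it is the step where one must be careful not to mismatch the order of the bilinear term.

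Accordingly I would define
\[
\Phi\big(G(u,b_1,b_2)\big)=\begin{pmatrix}1 & u & b_1\\ 0 & 1 & b_2\\ 0 & 0 & 1\end{pmatrix}=M(u,b_2,b_1),
\]
and verify the homomorphism property by substituting $x=u$, $y=b_2$, $z=b_1$ (and the primed analogues) into the $H(3)$ product: it reproduces precisely $\Phi\big(G(u+u',b_1+b_1'+ub_2',b_2+b_2')\big)$. Finally, $\Phi$ is a bijection because it is merely a relabelling of the three free real parameters $(u,b_1,b_2)\leftrightarrow(x,y,z)$; its inverse $M(x,y,z)\mapsto G(x,z,y)$ is manifestly well defined. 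Since $\Phi$ is a bijective homomorphism, $SG(2)\cong H(3)$. I expect no genuine obstacle here: once the two multiplication tables are written side by side the identification is forced, and the only thing to watch is the asymmetry of the cross term (it pairs $u$ with $b_2'$, not $b_2$ with $u'$), which is exactly what fixes the order of the coordinates in $\Phi$.
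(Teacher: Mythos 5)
Your proof is correct: the explicit map $\Phi\big(G(u,b_1,b_2)\big)=M(u,b_2,b_1)$ does intertwine the multiplication law \eqref{sg3} with the Heisenberg matrix product, and the bilinear cross terms $ub_2'$ and $xy'$ match exactly as you claim. The paper states this only as an exercise and gives no proof, but your construction is the natural intended solution, reading the isomorphism directly off the semi-direct product structure $SG_0(2)\ltimes\mathbb R^2$.
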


\section{From the Galilean to the Poincar\'e transformations.}\label{poenkareovSvet}

The Michelson–Morley experiment (1897) showed that the speed of light $c$ is the same in different inertial frames.
This significant and unexpected experiment was not in accordance with the addition of velocities in classical mechanics.
Also, it was observed that the Maxwell  equations of electrodynamics are not invariant under the Galilean transformations.
From that time it was clear that
Galilean transformations should be modified. The important step in this direction was performed by Hendrik Lorentz (1853--1928) in \cite{L}.
Poincar\'e corrected Lorentz's result, in particular in order to get that the set of transformations form a group (see  \cite{P2}).
He also obtained the new addition law for velocities \cite{P2,P3} and the underlying  pseudo-Euclidean geometry of the space-time \cite{P3}.
The constancy of the speed of light in all inertial systems was a starting point in Albert Einstein's (1879--1955) formulation of special relativity as well \cite{E}.
\footnote{While the legacy of Lorentz, Poincar\'e, and Einstein in the founding of special relativity is well known,
 we would like to mention here a contribution of Mileva Mari\'c (1875--1948) as well \cite{G}.}

Now, let us consider a 2-dimensional affine world without the Galilean structure. Because of the constancy of the speed of light we need
a different structure that will be formulated in the next section. We will adopt some of Poincar\'e's ideas and pass from
groups of affine transformations  to the geometry of space-time.

Here we assume that we have two inertial coordinate systems
$\mathbf I$ and $\mathbf I'$ (definition will be given also below) with coordinates $(x,t)$ and $(x',t')$ respectively.
Also, the spaces of simultaneous events $t=const$ and $t'=const$ within planes $(x,t)$ and $(x',t')$ have the Euclidean structures
as in the case of the Galilean geometry.

%As usual, the speed of light is denoted by $c$.
We are going to describe affine transformations
$P: \mathbb R^2\to \mathbb R^2$, $\mathbf x=\mathrm A\mathbf x'+\mathbf b$:
\begin{equation}\label{opsta}
\begin{pmatrix}
x\\
t
\end{pmatrix}=
\begin{pmatrix}
a_{11} & a_{12} \\
a_{21} & a_{22}
\end{pmatrix}
\begin{pmatrix}
x'\\
t'
\end{pmatrix}+
\begin{pmatrix}
b_1\\
b_2
\end{pmatrix}
\end{equation}
that satisfy the condition: the uniform motions with the velocity $\mathbf c'=\pm c\mathbf e_1'$ in the frame $\mathbf I'$ are mapped to the uniform motions
with the velocity $\mathbf c=\pm c\mathbf e_1$ in the frame $\mathbf I$.  Here $(b_1,b_2)$ are coordinates of the origin $O'$ in the reference frame $\mathbf I$
(see Fig. \ref{poenkareove transformacije}).
In addition, as in the case of Galilean transformations, it is natural to assume that the transformations \eqref{opsta} preserve the orientation
($\det\mathrm  A>0$).

\begin{theorem}
Let $\det\mathrm  A>0$. Prove that the affine transformation \eqref{opsta} maps the uniform motions
with the velocity $\mathbf c'=\pm c\mathbf e_1'$ in the frame $\mathbf I'$ to the uniform motions
with the velocity $\mathbf c=\pm c\mathbf e_1$ in the frame $\mathbf I$ if and only if
\begin{equation}\label{opste-A}
\mathrm A=\mathrm A_c(a,b)=
\begin{pmatrix}
a & b \\
\frac{b}{c^2} & a
\end{pmatrix},
\quad \text{where} \quad a,b\in\mathbb R, \, a^2-\frac{b^2}{c^2}>0.
\end{equation}
\end{theorem}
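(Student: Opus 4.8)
The plan is to reduce the kinematic condition to a statement about eigenvectors of the linear part $\mathrm A$ and then solve the resulting linear system. Since the translation $\mathbf b$ shifts the whole plane rigidly, it does not change the direction (time-like velocity) of any uniform motion; hence only $\mathrm A$ enters the condition, and I may discard $\mathbf b$ throughout.

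First I would translate the hypothesis into linear algebra. A uniform motion with velocity $v\mathbf e_1'$ in $\mathbf I'$ is a line whose time-like velocity is $\mathbf V'=v\mathbf e_1'+\mathbf e_2'$, i.e. the direction vector $(v,1)$ in the coordinates $(x',t')$ (this is \eqref{vremenskaBrzina} read in the two-dimensional world). For the two light-like motions $v=\pm c$ the directions are $(\pm c,1)$. Under the affine map the direction transforms by the linear part, $\mathrm A(\pm c,1)^{\mathsf T}$, and the requirement that the image again be a uniform motion with velocity $\pm c\mathbf e_1$ means precisely that $\mathrm A(\pm c,1)^{\mathsf T}$ is a scalar multiple of $(\pm c,1)^{\mathsf T}$ (the velocity is the ratio of the two components, so it is insensitive to rescaling). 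In other words, $(c,1)$ and $(-c,1)$ must both be eigenvectors of $\mathrm A$.

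Next I would solve the two eigenvector equations. Writing $\mathrm A(c,1)^{\mathsf T}=\lambda_+(c,1)^{\mathsf T}$ and $\mathrm A(-c,1)^{\mathsf T}=\lambda_-(-c,1)^{\mathsf T}$ and eliminating $\lambda_\pm$ via the second components ($\lambda_\pm=a_{22}\pm a_{21}c$) yields two scalar relations in the first components; adding and subtracting them gives at once $a_{12}=a_{21}c^2$ and $a_{11}=a_{22}$ (here $c\neq 0$ is used). Setting $a:=a_{11}=a_{22}$ and $b:=a_{12}$, so that $a_{21}=b/c^2$, produces exactly the matrix \eqref{opste-A}. Conversely, a direct check shows that $\mathrm A_c(a,b)$ has $(\pm c,1)$ as eigenvectors with eigenvalues $\lambda_\pm=a\pm b/c$, so every such matrix maps the light-like motions as required. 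Finally $\det\mathrm A_c(a,b)=a^2-b^2/c^2=\lambda_+\lambda_-$, and imposing $\det\mathrm A>0$ delivers the stated inequality $a^2-b^2/c^2>0$.

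The only delicate point is the reading of ``velocity $\pm c$ maps to velocity $\pm c$'': a priori the pair of light directions could be preserved either individually (each an eigenvector, as above) or interchanged, with $(c,1)\mapsto(-c,1)$ and vice versa. I would dispose of this ``swap'' possibility by noting that it forces instead $a_{11}=-a_{22}$ and $a_{12}=-a_{21}c^2$, so the time axis $\mathbf e_2'=(0,1)$ is sent to $(-a_{21}c^2,-a_{11})$, whose speed is $|a_{21}|c^2/|a_{11}|$; but $\det\mathrm A=a_{21}^2c^2-a_{11}^2>0$ forces this speed to exceed $c$, so a rest motion in $\mathbf I'$ would become a super-luminal motion in $\mathbf I$, which is inadmissible for a change of inertial frames. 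Thus only the eigenvector case survives. This short case analysis, rather than the elementary algebra, is the step that needs care.
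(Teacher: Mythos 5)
Your proof is correct and follows essentially the route the paper itself indicates (the paper gives only a hint: track the two light-lines through the origin $O'$): requiring that the directions $(\pm c,1)$ be preserved up to a nonzero scale forces $a_{11}=a_{22}$ and $a_{12}=a_{21}c^{2}$, the eigenvalues $a\pm b/c$ give the converse, and $\det\mathrm A>0$ yields $a^{2}-b^{2}/c^{2}>0$. Your treatment of the ``swap'' case is a worthwhile extra precaution: under the natural matched-sign reading of ``$\pm c\,\mathbf e_1'\mapsto\pm c\,\mathbf e_1$'' it is excluded by definition, while under the looser set-preserving reading it genuinely must be ruled out (a quarter-turn rotation with $c=1$ would otherwise be a counterexample), and your superluminal-rest-frame argument, though it imports a physical assumption not literally in the hypothesis, does the job.
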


{\sc Hint.} For example, consider the two events $M(c,0)$ and $N(-c,1)$ that are connected with the origin $O'(0,0)$ with the uniform motions having the speed of light.

\begin{theorem} \label{Kc} Let
$
K_c=\big\{\mathrm A_c(a,b)\,\big\vert\, a,b\in\mathbb R, a^2-{b^2}/{c^2}>0\big\}.
$
Prove that $K_c$ is a commutative subgroup of the general linear group $GL(2)$ of $(2\times 2)$-invertible matrixes.
\end{theorem}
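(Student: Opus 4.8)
The plan is to deduce closure, commutativity, and the full group structure from a single explicit matrix multiplication. First I would observe that the defining inequality is nothing but a positivity condition on the determinant: since
\[
\det \mathrm A_c(a,b) = a^2 - \frac{b^2}{c^2},
\]
the requirement $a^2 - b^2/c^2 > 0$ says exactly that $\det \mathrm A_c(a,b) > 0$. In particular every element of $K_c$ is invertible, so $K_c \subseteq GL(2)$, and the identity matrix, being $\mathrm A_c(1,0)$, lies in $K_c$.

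Next I would multiply two arbitrary elements. A direct computation gives
\[
\mathrm A_c(a_1,b_1)\,\mathrm A_c(a_2,b_2) = \mathrm A_c\!\left(a_1 a_2 + \frac{b_1 b_2}{c^2},\; a_1 b_2 + a_2 b_1\right),
\]
so the product again has the prescribed shape. Closure then follows because the determinant is multiplicative: the product has determinant $(a_1^2 - b_1^2/c^2)(a_2^2 - b_2^2/c^2) > 0$, so the positivity condition is automatically inherited. Commutativity is immediate from the same formula, since both of its entries are symmetric under interchanging the indices $1$ and $2$.

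Finally I would treat inverses via the adjugate. Writing $\delta = \det \mathrm A_c(a,b) > 0$, one gets
\[
\mathrm A_c(a,b)^{-1} = \mathrm A_c\!\left(\frac{a}{\delta},\, -\frac{b}{\delta}\right),
\]
which is again of the required form and has determinant $1/\delta > 0$, hence lies in $K_c$. This settles every subgroup axiom together with commutativity.

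The argument is an elementary verification with no genuine obstacle; the only point meriting attention is that $K_c$ is carved out by the strict inequality $\det > 0$ rather than merely $\det \neq 0$, so one must confirm that this sharper condition survives multiplication and inversion --- precisely the role played by the multiplicativity of the determinant.
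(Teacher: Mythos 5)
Your proof is correct: the paper states this only as an exercise (Problem \ref{Kc}) with no written solution, and your argument is exactly the intended elementary verification by direct matrix multiplication. The observation that the defining inequality $a^2-b^2/c^2>0$ is precisely $\det\mathrm A_c(a,b)>0$, so that closure under products and inverses follows from multiplicativity of the determinant, is the right way to handle the one point that needs care.
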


It is known that the intersection of two subgroups is also a subgroup.

\begin{theorem}\label{sve} Let $K^+_c=\{\mathrm A_c(a,b)\in K_c\,\vert\, a>0\}$.
\begin{itemize}
\item[(i)]   Prove that $K^+_c$ is a subgroup of $K_c$  (a connected component of $K_c$ that contains the identity matrix $\mathbf E$
\footnote{Let $G$ be a subgroup of the general linear group $GL(n)$.
The \emph{connected component} of $A=(a_{ij})\in G$
is the subset of the matrixes  $B=(b_{ij})\in G$,
such  that there exist a matrix curve $C(t)=(c_{ij}(t))\in G$ ($t\in [0,1]$),
where $c_{ij}(t)$ are continuous functions and $c_{ij}(0)=a_{ij}$, $b_{ij}=b_{ij}(1)$ ($i,j=1,\dots,n$).}).
\item[(ii)]  Prove that $SL(2)\cap K^+_c$ is the group
\[
SO^+_c(1,1)=\big\{\mathrm L_c(u)=
\kappa_c(u)\begin{pmatrix}
1 & u  \\
\frac{u}{c^2} & 1 \end{pmatrix} \big\vert\, u\in (-c,c), \, \kappa_c(u)=1/\sqrt{1-\frac{u^2}{c^2}}\,\big\}.
\]
\item[(iii)] Prove that
%$K_c$ is isomorphic to the product of the groups $SO^+_c(1,1)$ and $\mathbb R^*$, where $\mathbb R^*$ is the group of non-zero real numbers with respect to the multiplication, while
$K_c^+$ is isomorphic to the product of the groups $SO^+_c(1,1)$ and $\mathbb R^+$, where $\mathbb R^+$ is the group positive real numbers with respect to the
multiplication.
\end{itemize}
\end{theorem}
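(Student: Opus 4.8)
The plan is to reduce all three parts to the multiplication rule in $K_c$, which a direct computation of the matrix product gives as
\[
\mathrm A_c(a,b)\,\mathrm A_c(a',b')=\mathrm A_c\!\Big(aa'+\tfrac{bb'}{c^2},\ ab'+a'b\Big),
\]
together with the determinant formula $\det\mathrm A_c(a,b)=a^2-b^2/c^2$. With these in hand, part (i) is immediate. The identity $\mathbf E=\mathrm A_c(1,0)$ lies in $K^+_c$ since its first entry is $1>0$. For closure I would extract from $a>0$ and $a^2>b^2/c^2$ the inequality $a>|b|/c$, so that for two elements of $K^+_c$ the first entry of the product satisfies $aa'+bb'/c^2>|bb'|/c^2+bb'/c^2\ge 0$, keeping the product in $K^+_c$. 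For inverses, $\mathrm A_c(a,b)^{-1}=\mathrm A_c(a/\delta,-b/\delta)$ with $\delta=a^2-b^2/c^2>0$, whose first entry $a/\delta$ is again positive. To identify $K^+_c$ as the connected component of $\mathbf E$, I would note that $a^2-b^2/c^2>0$ cuts the $(a,b)$-plane into the two open sectors $\{a>|b|/c\}$ and $\{a<-|b|/c\}$, each convex (hence connected) and separated by the lines $a=\pm b/c$; since $(a,b)\mapsto\mathrm A_c(a,b)$ is a homeomorphism onto $K_c$, these are exactly the two connected components, and the one containing $(1,0)$ is $K^+_c$.

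For part (ii), I would first invoke that the intersection of two subgroups is a subgroup, so $SL(2)\cap K^+_c$ is a subgroup by part (i). The remaining work is the set-theoretic identification. An element $\mathrm A_c(a,b)\in K^+_c$ lies in $SL(2)$ precisely when $a^2-b^2/c^2=1$ with $a>0$. Setting $u=b/a$, the constraint reads $a^2(1-u^2/c^2)=1$, which forces $|u|<c$ and $a=1/\sqrt{1-u^2/c^2}=\kappa_c(u)$, $b=\kappa_c(u)\,u$; this is exactly $\mathrm L_c(u)$. Hence $u\mapsto\mathrm L_c(u)$ is a bijection of $(-c,c)$ onto $SL(2)\cap K^+_c$, giving $SL(2)\cap K^+_c=SO^+_c(1,1)$. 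As a bonus I would record that the multiplication law above yields $\mathrm L_c(u_1)\mathrm L_c(u_2)=\mathrm L_c\big((u_1+u_2)/(1+u_1u_2/c^2)\big)$, recovering the relativistic addition of velocities.

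For part (iii), the key observation is that every $\mathrm A\in K^+_c$ factors uniquely as a positive scalar times an element of $SO^+_c(1,1)$: if $\delta=\det\mathrm A>0$ and $s=\sqrt{\delta}$, then $s^{-1}\mathrm A$ has determinant $1$ and positive first entry, so it lies in $SL(2)\cap K^+_c=SO^+_c(1,1)$. I would then define $\Phi\colon SO^+_c(1,1)\times\mathbb R^+\to K^+_c$ by $\Phi(\mathrm L,s)=s\mathrm L$. Since positive scalar matrices are central, $\Phi$ is a homomorphism; it is injective because $\det(s\mathrm L)=s^2$ recovers $s$ and hence $\mathrm L$; and the factorization just described gives surjectivity. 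Thus $\Phi$ is an isomorphism and $K^+_c\cong SO^+_c(1,1)\times\mathbb R^+$.

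The computations are all elementary and follow mechanically from the multiplication rule and the multiplicativity of the determinant. The only spot needing a moment's care is the closure step in (i), where the inequality $a>|b|/c$ must be deduced from $a^2-b^2/c^2>0$ and $a>0$ before one can determine the sign of the entry $aa'+bb'/c^2$; everything else is routine.
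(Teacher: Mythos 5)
Your proof is correct and follows the route the paper intends: the paper states this as an exercise (no proof is printed), but it supplies exactly your ingredients --- the remark that an intersection of subgroups is a subgroup for part (ii), and, in Section \ref{dodatak}, the decomposition of $K^+_c$ via the commuting generators $\mathbf E$ and $\mathbf F_c$ into homotheties times Lorentz transformations, which is precisely your factorization $\mathrm A = s\,(s^{-1}\mathrm A)$ with $s=\sqrt{\det \mathrm A}$ in part (iii). All computations check out, including the key closure step $aa'+bb'/c^2>0$ deduced from $a>|b|/c$ and the identification of the two convex sectors $\{a>|b|/c\}$ and $\{a<-|b|/c\}$ as the connected components of $K_c$.
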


\begin{figure}[ht]
{\centering
{\includegraphics[width=8.5cm]{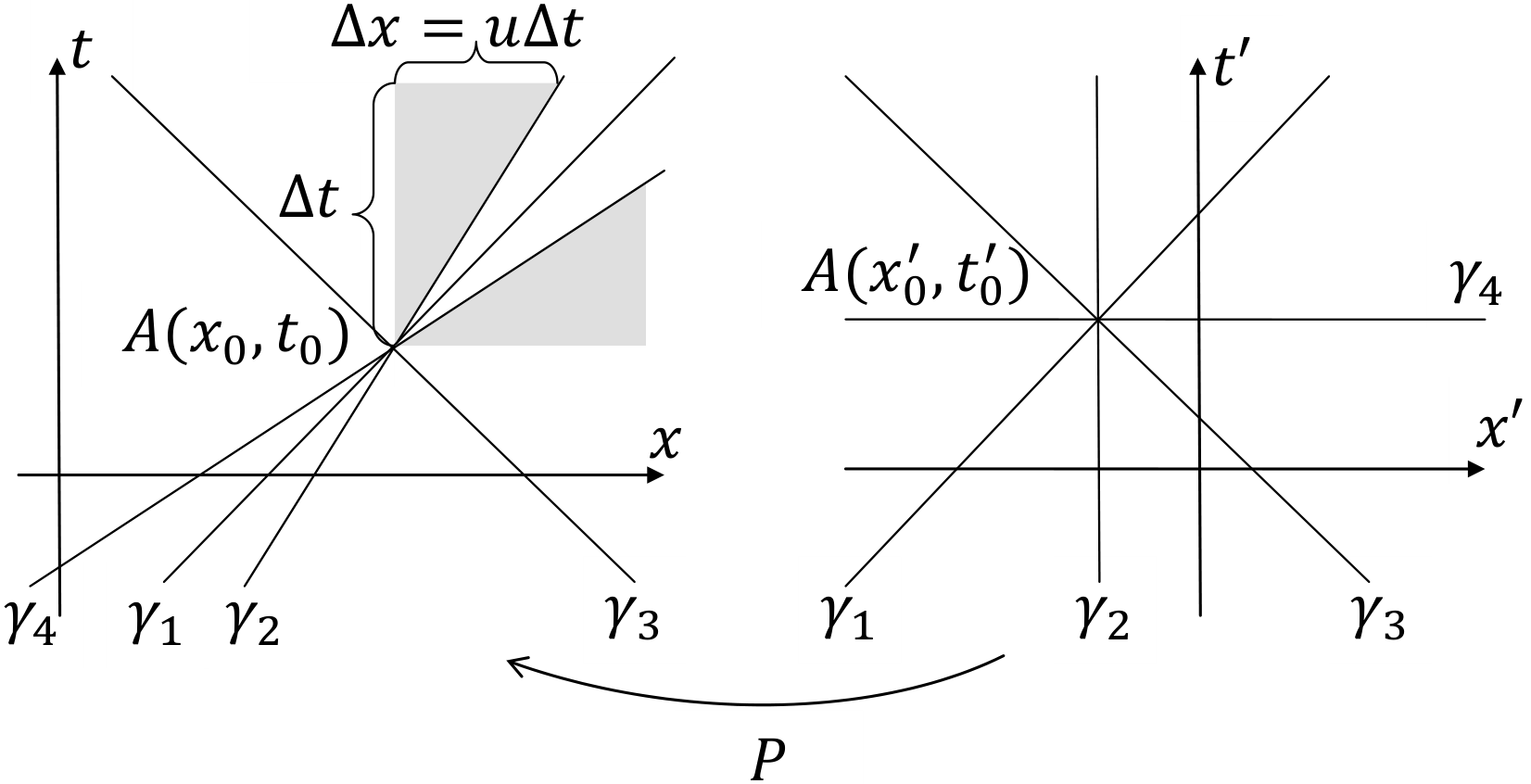}}
\caption{Four lines through an event $A$ in two coordinates systems related by an affine transformation of the form \eqref{opsta}, \eqref{opste-A},
in particular by a Poincar\'e transformation \eqref{poenkareove}.
$\gamma_1$ and $\gamma_3$ are uniform motions with the speed of light (we take $c=1$).
$\gamma_2=\{x'=x_0\}$ is at rest in the reference frame $\mathbf I'$.
$\gamma_4=\{t'=t_0'\}$ is the space of simultaneous events  in the frame $\mathbf I'$.
In both coordinates systems $\gamma_1$ and $\gamma_3$ are axis of symmetries for the union of intersecting lines $\gamma_2\cup\gamma_4$.
\label{poenkareove transformacije}}}
\end{figure}

The linear transformations defined by matrixes $\mathrm L_c(u)$ are obtained by Poincar\'e who named them the \emph{Lorentz transformations} \cite{P2, P3}.
On the other side, the affine transformations
$P_c(u,\mathbf b)=P_c(u,b_1,b_2): \mathbb R^2\to \mathbb R^2$:
\begin{equation}\label{poenkareove}
\begin{pmatrix}
x\\
t
\end{pmatrix}=
\kappa_c(u) \begin{pmatrix}
1 & u  \\
\frac{u}{c^2} & 1 \end{pmatrix}
\begin{pmatrix}
x'\\
t'
\end{pmatrix}+
\begin{pmatrix}
b_1\\
b_2
\end{pmatrix}, \quad u\in(-c,c), \, b_1,b_2\in\mathbb R
\end{equation}
are called \emph{Poincar\'e transformations}. We denote the Poincar\'e group by $SP^+_c(1,1)$.

\begin{theorem}\label{drugi}
Prove that the group multiplications and the inverses in the groups $SO^+_c(1,1)$ and $SP^+_c(1,1)$ are given by:
\begin{align}
\label{sp1}& \mathrm L_c(u)\circ \mathrm L_c(u')=\mathrm L\big(\frac{u+u'}{1+\frac{uu'}{c^2}}\big), \\
\label{sp2}& \mathrm L^{-1}_c(u)=\mathrm L_c(-u),\\
\label{sp3}& P_c(u,\mathbf b)\circ P_c(u',\mathbf b')=P_c\big(\frac{u+u'}{1+\frac{uu'}{c^2}},\mathrm L_c(u)\mathbf b'+\mathbf b\big),\\
\label{sp4}& P_c(u,\mathbf b)^{-1}=G(-u,-\mathrm L_c(-u)\mathbf b),
\end{align}
where $u\in (-c,c)$, $b_1,b_2\in\mathbb R$.
\end{theorem}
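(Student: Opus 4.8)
The plan is to verify the multiplication law \eqref{sp1} by a direct computation with the $2\times 2$ matrices, to read off \eqref{sp2} as the special case $u'=-u$, and then to obtain the two affine identities \eqref{sp3} and \eqref{sp4} by composing the underlying affine maps, exactly as was done for the Galilean group in Problem \ref{prvi}. The only genuine computation is hidden in \eqref{sp1}; everything else is formal.

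For \eqref{sp1} I would first multiply the two symmetric factors:
\[
\begin{pmatrix} 1 & u \\ \tfrac{u}{c^2} & 1 \end{pmatrix}\begin{pmatrix} 1 & u' \\ \tfrac{u'}{c^2} & 1 \end{pmatrix} = \Bigl(1+\tfrac{uu'}{c^2}\Bigr)\begin{pmatrix} 1 & u'' \\ \tfrac{u''}{c^2} & 1 \end{pmatrix}, \qquad u'':=\frac{u+u'}{1+\frac{uu'}{c^2}},
\]
so that $\mathrm L_c(u)\circ\mathrm L_c(u')$ equals $\kappa_c(u)\kappa_c(u')\bigl(1+\tfrac{uu'}{c^2}\bigr)$ times the last matrix. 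To identify this with $\mathrm L_c(u'')$ it remains only to show that the scalar prefactor coincides with $\kappa_c(u'')$.

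This is the heart of the matter, and it rests on the algebraic identity
\[
1-\frac{u''^2}{c^2}=\frac{\bigl(1-\frac{u^2}{c^2}\bigr)\bigl(1-\frac{u'^2}{c^2}\bigr)}{\bigl(1+\frac{uu'}{c^2}\bigr)^2},
\]
which I would prove by expanding $c^2\bigl(1+\tfrac{uu'}{c^2}\bigr)^2-(u+u')^2$ and recognizing the result as $c^2\bigl(1-\tfrac{u^2}{c^2}\bigr)\bigl(1-\tfrac{u'^2}{c^2}\bigr)$. Taking reciprocal square roots gives $\kappa_c(u'')=\bigl(1+\tfrac{uu'}{c^2}\bigr)\kappa_c(u)\kappa_c(u')$, which is exactly the prefactor above; this single identity is the main (and essentially the only) obstacle, and it is pure algebra. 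The same computation shows that the right-hand side is positive, hence $u''\in(-c,c)$ and $\mathrm L_c(u'')\in SO^+_c(1,1)$ is well defined.

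The remaining identities are now immediate. Setting $u'=-u$ in \eqref{sp1} gives $u''=0$ and $\mathrm L_c(0)=\mathbf E$, which is \eqref{sp2}. For \eqref{sp3} I would compose the affine maps $\mathbf x\mapsto\mathrm L_c(u')\mathbf x+\mathbf b'$ and $\mathbf x\mapsto\mathrm L_c(u)\mathbf x+\mathbf b$, obtaining $\mathbf x\mapsto\mathrm L_c(u)\mathrm L_c(u')\mathbf x+\bigl(\mathrm L_c(u)\mathbf b'+\mathbf b\bigr)$, and then replace $\mathrm L_c(u)\mathrm L_c(u')$ by $\mathrm L_c(u'')$ via \eqref{sp1}. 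Finally \eqref{sp4} follows by solving $\mathbf y=\mathrm L_c(u)\mathbf x+\mathbf b$ for $\mathbf x$ with the help of \eqref{sp2}, giving $\mathbf x=\mathrm L_c(-u)\mathbf y-\mathrm L_c(-u)\mathbf b$, i.e. $P_c(u,\mathbf b)^{-1}=P_c(-u,-\mathrm L_c(-u)\mathbf b)$; equivalently one sets the right-hand side of \eqref{sp3} equal to the identity $P_c(0,\mathbf 0)$ and solves for $(u',\mathbf b')$.
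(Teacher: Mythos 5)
Your proposal is correct and is exactly the intended argument: the paper states this as an exercise (Problem~\ref{drugi}) and gives no proof of its own, but the direct matrix multiplication together with the identity $\kappa_c(u'')=\bigl(1+\tfrac{uu'}{c^2}\bigr)\kappa_c(u)\kappa_c(u')$ is the standard route, and your verification of that identity and of the closure condition $u''\in(-c,c)$ is sound. Note also that your final formula $P_c(u,\mathbf b)^{-1}=P_c(-u,-\mathrm L_c(-u)\mathbf b)$ silently corrects a typo in the statement, where the right-hand side of \eqref{sp4} is written with $G$ instead of $P_c$.
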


From now on, we consider the affine world $\mathcal A^2$ with the set of inertial frames related by Poincar\'e transformations.

As in the case of Galilean transformations we have the following kinematic interpretation of the parameter $u$:

\begin{theorem}\label{brzinaI'}
The motions that are at rest in the reference frame $\mathbf I'$ have the velocity $\mathbf u=u\mathbf e_1$
in the reference frame $\mathbf I$ (see Fig. \ref{poenkareove transformacije}).
\end{theorem}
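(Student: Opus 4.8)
The plan is to mirror the proof of the analogous statement for Galilean transformations in Proposition \ref{stav1}. A motion at rest in $\mathbf I'$ is a time-like line along which the spatial coordinate $x'$ is constant, namely $\gamma=\{(x_0',t')\,\vert\,t'\in\mathbb R\}$. I would pick two arbitrary distinct events $A_1,A_2\in\gamma$ with $\mathbf I'$-coordinates $(x_0',t_1')$ and $(x_0',t_2')$, push them through the Poincar\'e transformation \eqref{poenkareove} to read off their $\mathbf I$-coordinates $(x_i,t_i)$, and finally compute the velocity directly from the definition $v=\Delta x/\Delta t=(x_2-x_1)/(t_2-t_1)$.

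First I would substitute the constant value $x'=x_0'$ into \eqref{poenkareove}, obtaining $x_i=\kappa_c(u)(x_0'+ut_i')+b_1$ and $t_i=\kappa_c(u)(\tfrac{u}{c^2}x_0'+t_i')+b_2$. Subtracting the two events, the translation parts $b_1,b_2$ and the terms involving the fixed $x_0'$ drop out, leaving $\Delta x=\kappa_c(u)\,u\,\Delta t'$ and $\Delta t=\kappa_c(u)\,\Delta t'$. The key observation---and the only subtlety relative to the Galilean case---is that the factor $\kappa_c(u)$ appears in both $\Delta x$ and $\Delta t$, so it cancels in the ratio; this is precisely why the time dilation present in the Poincar\'e setting does not alter the kinematic meaning of $u$. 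Dividing yields $v=\Delta x/\Delta t=u$, hence the velocity in $\mathbf I$ is $\mathbf u=u\mathbf e_1$.

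As with Proposition \ref{stav1}, I would also include the one-line geometric argument. The direction of $\gamma$ is the vector $\mathbf e_2'$, and reading off the second column of the linear part of \eqref{poenkareove} gives $\mathbf e_2'=\kappa_c(u)(u\mathbf e_1+\mathbf e_2)$ expressed in the basis of $\mathbf I$. Since $\kappa_c(u)>0$, the ratio of the $\mathbf e_1$-component to the $\mathbf e_2$-component of this direction equals $u$, which is exactly the velocity $\mathbf u=u\mathbf e_1$. There is no genuine obstacle here: the computation is entirely routine, and the whole content lies in the cancellation of $\kappa_c(u)$, which confirms that the parameter $u$ retains the same velocity interpretation it carried for Galilean transformations, despite the extra scaling factor forced by the constancy of the speed of light.
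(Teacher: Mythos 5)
Your proposal is correct and matches the intended solution: the paper leaves this as an exercise, explicitly modeled on Proposition \ref{stav1}, and your argument is exactly that proof transported to the Poincar\'e setting --- substitute $x'=x_0'$ into \eqref{poenkareove}, compute $v=\Delta x/\Delta t$, and observe that $\kappa_c(u)$ cancels, together with the alternative geometric reading of $\mathbf e_2'$ from the second column of the linear part. Nothing is missing.
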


Since $\vert u\vert<c$, the important consequence is that the
magnitude of velocity of an inertial frame $\mathbf I'$ is always smaller then
the speed of light.

\begin{theorem} [Addition of velocities in special relativity]
Let $u,v'\in (-c,c)$. Assume that a line $\gamma$ represents a uniform motion with the velocity
$\mathbf v'=v'\mathbf e_1'$ in the inertial frame  $\mathbf I'$, and that
the velocity of the frame $\mathbf I'$ with respect to $\mathbf I$ is $\mathbf u=u\mathbf e_1$.
Then the line $\gamma$ represents a uniform motion with the velocity
$\mathbf v=v\mathbf e_1$ in the inertial frame  $\mathbf I$, where
\[
v=\frac{u+v'}{1+\frac{uv'}{c^2}}.
\]
\begin{itemize}
\item[(i)]  Prove the statement directly, by using the definition $v={\Delta x}/{\Delta t}$ and the Poincar\'e transformation \eqref{poenkareove}.
\item[(ii)] Prove the statement by using Problem \ref{drugi}.
\end{itemize}
\label{sabiranje}\end{theorem}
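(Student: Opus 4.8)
The plan is to give both proofs requested, exploiting that the Poincar\'e transformation \eqref{poenkareove} is affine and that the overall scalar factor $\kappa_c(u)$ cancels from every ratio of coordinate differences.

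For part (i), I would fix two distinct events $A_1,A_2$ on $\gamma$ with coordinates $(x_1',t_1')$, $(x_2',t_2')$ in $\mathbf I'$, so that by hypothesis $v'=\Delta x'/\Delta t'$ where $\Delta x'=x_2'-x_1'$ and $\Delta t'=t_2'-t_1'\neq 0$. Writing \eqref{poenkareove} componentwise gives $x=\kappa_c(u)(x'+ut')+b_1$ and $t=\kappa_c(u)(\tfrac{u}{c^2}x'+t')+b_2$. Taking the difference of the two events cancels the translation $\mathbf b$ and yields $\Delta x=\kappa_c(u)(\Delta x'+u\Delta t')$ and $\Delta t=\kappa_c(u)(\tfrac{u}{c^2}\Delta x'+\Delta t')$. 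The decisive observation is that the common factor $\kappa_c(u)$ drops out of the ratio, so dividing numerator and denominator by $\Delta t'$ gives
\[
v=\frac{\Delta x}{\Delta t}=\frac{\frac{\Delta x'}{\Delta t'}+u}{1+\frac{u}{c^2}\frac{\Delta x'}{\Delta t'}}=\frac{v'+u}{1+\frac{uv'}{c^2}},
\]
which is the claimed formula. One should check that $\Delta t\neq 0$ so the ratio is defined; this follows since $|u|,|v'|<c$ forces $1+\tfrac{uv'}{c^2}>0$, hence $\tfrac{u}{c^2}\Delta x'+\Delta t'\neq 0$.

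For part (ii), I would instead reduce the claim to the group composition law \eqref{sp3}. The idea is to introduce an auxiliary inertial frame $\mathbf I''$ in which $\gamma$ is at rest; then by the kinematic interpretation (Problem \ref{brzinaI'}), the velocity of $\mathbf I''$ with respect to $\mathbf I'$ is exactly $\mathbf v'=v'\mathbf e_1'$, while the velocity of $\mathbf I'$ with respect to $\mathbf I$ is $\mathbf u=u\mathbf e_1$ by hypothesis. Writing $P_c(u,\mathbf b)$ and $P_c(v',\mathbf b')$ for the coordinate transformations $\mathbf I'\to\mathbf I$ and $\mathbf I''\to\mathbf I'$, the transformation $\mathbf I''\to\mathbf I$ is the composition $P_c(u,\mathbf b)\circ P_c(v',\mathbf b')$, and \eqref{sp3} identifies its velocity parameter as $\tfrac{u+v'}{1+uv'/c^2}$. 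Since $\gamma$ is at rest in $\mathbf I''$, Problem \ref{brzinaI'} applied to the composed transformation shows that $\gamma$ has velocity $v=\tfrac{u+v'}{1+uv'/c^2}$ in $\mathbf I$. The only point needing care is that the translations $\mathbf b,\mathbf b'$ do not affect the velocity, which is transparent from \eqref{sp3} since the first slot of the composition depends only on $u$ and $v'$.

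There is essentially no serious obstacle here; both arguments are short. In (i) the one thing to watch is the well-definedness of the ratio, i.e. $\Delta t\neq0$. In (ii) the subtle point is confirming that $\tfrac{u+v'}{1+uv'/c^2}$ again lies in $(-c,c)$, so that the composed map is genuinely a Poincar\'e transformation to which Problem \ref{brzinaI'} applies; this closure is exactly what Problem \ref{drugi} guarantees.
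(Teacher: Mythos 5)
Your proposal is correct and follows exactly the two routes the problem prescribes: in (i) the direct computation of $\Delta x/\Delta t$ from \eqref{poenkareove} with the cancellation of $\kappa_c(u)$ and of the translation $\mathbf b$ (mirroring the paper's own $\Delta x/\Delta t$ computation in the proof of Proposition \ref{stav1} for the Galilean case), and in (ii) the composition of $P_c(u,\mathbf b)$ with $P_c(v',\mathbf b')$ via \eqref{sp3} together with the kinematic interpretation of the parameter from Problem \ref{brzinaI'}. The two points you flag --- $\Delta t\neq 0$ because $1+uv'/c^2>0$, and closure of the velocity parameter in $(-c,c)$ --- are precisely the details worth recording.
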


\begin{rem}
Let us observe that the Poincar\'e transformation $P_c(u,\mathbf b)$
tends to the Galilean transformation $G(u,\mathbf b)$ as $c$ tends to infinity.
Moreover, the equations \eqref{sp3} and \eqref{sp4} tend to \eqref{sg3} and \eqref{sg4}
as $c$ tends to infinity. We say that the group of Galilean motions $SG(2)$ is the limit
of the Poincar\'e group $SP^+_c(1,1)$ as $c$ tends to infinity.
\end{rem}

\begin{rem}\label{invarijantnost}
Concerning the geometrical structure, Poincar\'e also realised that
the Lorentz transformations $\mathbf x'=\mathrm L_c(u)\mathbf x$ can be defined from the condition of the
preservation of the following indefinite quadratic form in the space-time (see \cite{P3}):% for a four-dimensional space-time, in our case we deal with the two-dimensional affine world):
\begin{equation}\label{QF}
(x_2-x_1)^2-c^2(t_2-t_1)^2=(x_2'-x_1')^2-c^2(t_2'-t_1')^2.
\end{equation}
The role of the pseudo-Euclidean geometry is further developed by Hermann Min\-kowski (1864--1909).
\end{rem}

\subsection{1-parametric deformation of the Poincar\'e group.}\label{dodatak}
Instead of the presentation of the matrix $\mathrm L_c(u)$ with a kinematic parameter $u$ and the group structure given by \eqref{sp1}, \eqref{sp2}, it is
convenient to set (usually with $c=1$)
\[
\mathrm L_c^s=\mathrm L_c(u)\vert_{u=c\tanh s}=\begin{pmatrix}
  \cosh s & c\sinh s  \\
\frac{1}{c}\sinh s &  \cosh s  \end{pmatrix}, \quad s \in\mathbb R.% \quad \Big(s=\mathrm{artanh}\frac{u}{c}=\frac12\ln\frac{c+u}{c-u}\Big).
\]

Then
$\mathrm L_c^{s}\circ \mathrm L_c^{s'}=\mathrm L_c^{s+s'}$,
$(\mathrm L_c^s)^{-1}=\mathrm L_c^{-s}$ ($s,s'\in\mathbb R$),
and the addition of velocities is equivalent to the identity
\[
\tanh(s+s')=\frac{\tanh s+\tanh s'}{1+\tanh s\tanh s'}.
\]

Recall that a smooth mapping $\phi\colon  s\mapsto g^s$ of $\mathbb R$ to a Lie group $(G,\cdot)$
is a \emph{one-parametric subgroup} if
$g^0$ is the neutral in $G$ and $g^s\cdot g^{s'}=g^{s+s'}$ ($s,s'\in\mathbb R$).

Let $\mathfrak g=T_e G$ be the Lie algebra of $G$.
The one-parametric subgroup $\phi$ determines the \emph{exponential mapping} $\exp\colon\mathfrak g\to G$,
$\phi(s)=\exp(s\xi)$, where $\xi$ is tangent to the curve $\phi(s)$ at $0$.
For the matrix groups, the exponential mapping coincides with the usual exponential mapping of matrixes.

For example, the groups $SO^+_c(1,1)=\{\mathrm L_c^s\,\vert\,s\in\mathbb R\}$, $SO(2)= \{\mathrm R(\alpha)\,\vert\,\alpha\in\mathbb R\}$,
and $SG_0(2)= \{\mathrm A(u)\,\vert\,u\in\mathbb R\}$ are the images of one-parametric subgroups in $SL(2)$.

Let us consider $K^+_c$, the connected component of the group of linear transformations that ensure that the speed of
the light is the same in coordinate transformation between inertial systems given by  \eqref{opsta}, \eqref{opste-A}
(see Problem \ref{sve} and Fig. \ref{poenkareove transformacije}). The Lie algebra $\mathfrak k^+_c$ of $K^+_c$ is generated by
\[
\mathbf E=
\begin{pmatrix}
1 & 0  \\
0 & 1
\end{pmatrix}
\quad  \text{and} \quad
\mathbf F_c=
\begin{pmatrix}
0 & c  \\
\frac{1}{c} & 1
\end{pmatrix}.
\]
From $[\mathbf E,\mathbf F_c]=0$, we get
$\exp(s(a\mathbf E+b\mathbf F_c))=\exp(sa\mathbf E)\circ\exp(sb\mathbf F_c)$ ($a,b\in\R$).
Next, since $\exp(sa\mathbf E)=\exp(sa)\mathbf E$ and $\exp(sb\mathbf F_c)=\mathrm L_c^{sb}$,
we obtain the following statement:

\begin{prop}
One-parametric subgroups of $K^+_c$ are of the form
\[
\phi_{a,b}\colon \mathbb R\rightarrow K^+_c, \qquad \phi_{a,b}(s)=\exp({as})\mathrm L_c^{bs}, \qquad s\in\mathbb R,
\]
where $a, b$ are real parameters.
\end{prop}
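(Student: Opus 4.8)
The plan is to invoke the correspondence recalled just above the statement: a one-parametric subgroup $\phi\colon\mathbb R\to K^+_c$ is necessarily of the form $\phi(s)=\exp(s\xi)$ with $\xi=\frac{d}{ds}\phi(s)\big|_{s=0}\in\mathfrak k^+_c$, and conversely every such map is a one-parametric subgroup. Hence the task reduces to two pieces: (a) identifying the Lie algebra $\mathfrak k^+_c$, and (b) computing the matrix exponential of a general element of it.

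For (a), I would note that by Problem~\ref{sve} the group $K^+_c$ is a two-dimensional abelian group, so $\mathfrak k^+_c$ is a two-dimensional commutative Lie algebra. Differentiating curves $t\mapsto\mathrm A_c(a(t),b(t))$ through the identity $\mathbf E=\mathrm A_c(1,0)$ shows that $\mathfrak k^+_c$ is spanned by the scaling generator $\mathbf E$ and the generator $\mathbf F_c$ of the Lorentz subgroup $\{\mathrm L_c^s\}$, so every $\xi\in\mathfrak k^+_c$ can be written uniquely as $\xi=a\mathbf E+b\mathbf F_c$ with $a,b\in\mathbb R$.

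For (b) I would follow the three steps set up in the text. Since $\mathbf E$ is a scalar matrix, $[\mathbf E,\mathbf F_c]=0$, so the elementary identity $e^{X+Y}=e^Xe^Y$ for commuting matrices splits the exponential as $\exp\!\big(s(a\mathbf E+b\mathbf F_c)\big)=\exp(sa\mathbf E)\circ\exp(sb\mathbf F_c)$. Then $\exp(sa\mathbf E)=e^{as}\mathbf E$ because $\mathbf E$ is the identity, and $\exp(sb\mathbf F_c)=\mathrm L_c^{bs}$ because $\mathbf F_c$ generates the Lorentz one-parametric subgroup; both equalities are the ones recorded in the text, and the only one I would recompute by hand is the latter, by checking the matrix exponential of the Lorentz generator against $\mathrm L_c^{bs}$. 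Multiplying the two factors yields $\phi(s)=e^{as}\mathrm L_c^{bs}=\phi_{a,b}(s)$, which proves that every one-parametric subgroup has the claimed form.

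To finish I would confirm the converse, that each $\phi_{a,b}$ is genuinely a one-parametric subgroup of $K^+_c$. Smoothness and $\phi_{a,b}(0)=\mathbf E$ are immediate; the homomorphism property $\phi_{a,b}(s)\,\phi_{a,b}(s')=\phi_{a,b}(s+s')$ follows from $e^{as}e^{as'}=e^{a(s+s')}$ combined with the identity $\mathrm L_c^{bs}\circ\mathrm L_c^{bs'}=\mathrm L_c^{b(s+s')}$ and the fact that the scalar factor commutes with $\mathrm L_c^{bs'}$; and $\phi_{a,b}(s)\in K^+_c$ because $e^{as}\mathbf E=\mathrm A_c(e^{as},0)\in K^+_c$ and $\mathrm L_c^{bs}\in SO^+_c(1,1)\subset K^+_c$, so their product remains in the group by Problem~\ref{sve}. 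I do not anticipate a genuine obstacle: the commutativity that lets the exponential factor is automatic because one generator is the identity, and the sole nontrivial computation is the single exponential $\exp(sb\mathbf F_c)=\mathrm L_c^{bs}$.
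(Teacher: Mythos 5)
Your proposal follows essentially the same route as the paper: identify $\mathfrak k^+_c$ as the span of the commuting generators $\mathbf E$ and $\mathbf F_c$, split the exponential using $[\mathbf E,\mathbf F_c]=0$, and evaluate $\exp(sa\mathbf E)=e^{as}\mathbf E$ and $\exp(sb\mathbf F_c)=\mathrm L_c^{bs}$. The extra verification that each $\phi_{a,b}$ really is a one-parametric subgroup of $K^+_c$ is a welcome but minor addition to what the paper records.
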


Let
$\Phi_{a,b}=\phi_{a,b}(\mathbb R)$
 be the corresponding Lie subgroups of $K^+_c$.
In particular, $\Phi_{0,1}$ is the group of Lorentz transformation $SO^+_c(1,1)$  and $\Phi_{1,0}$ is the
subgroup of diagonal matrixes $\{\exp(s)\mathbf E\,\vert\, s\in\mathbb R\}$ defining homothetic transformations of $\mathbb R^2$ centered at the origin.
We have a decomposition of $K_c^+$ on 1-dimensional connected Lie subgroups
\begin{equation}\label{dekompozicija}
K_c^+=SO^+_c(1,1)\cup_{a\ne 0} \Phi_{a,1} \cup \Phi_{1,0}
\end{equation}
that mutually intersects only at the unit matrix $\mathbf E$. While the group of homothetic transformations $\Phi_{1,0}$ is not the candidate,
besides $SO^+_c(1,1)$ all others 1-dimensional subgroups
\[
\Phi_{a,1}=\{\exp({as})\mathrm L_c^{s}\,\vert\, s\in\mathbb R\}=\{\exp({a\,\mathrm{artanh}(u/c)})\mathrm L_c(u)\,\vert\, u\in (-c,c)\},
\]
$a\ne 0$, could be also considered as candidates for a linear part of the affine transformations \eqref{opsta}  between inertial frames $\mathbf I$ and $\mathbf I'$.

Let $\ell_{c,a}(u)=\exp({a\,\mathrm{artanh}(u/c)})$.
Consider the affine transformations
$P_{c,a}(u,\mathbf b): \mathbb R^2\to \mathbb R^2$ defined by:
\begin{equation}\label{H-poenkareove}
\begin{pmatrix}
x\\
t
\end{pmatrix}=
\ell_{c,a}(u)\kappa_c(u) \begin{pmatrix}
1 & u  \\
\frac{u}{c^2} & 1 \end{pmatrix}
\begin{pmatrix}
x'\\
t'
\end{pmatrix}+
\begin{pmatrix}
b_1\\
b_2
\end{pmatrix},
\end{equation}
$u\in(-c,c)$, $\mathbf b=(b_1,b_2)\in\mathbb R^2$. Denote the corresponding group by $SP^+_{c,a}(1,1)$.

\begin{theorem}\label{cudan}
Prove Problems \ref{brzinaI'} and \ref{sabiranje} (addition of velocities in special relativity) where instead of the Poincar\'e transformation \eqref{poenkareove}
one uses the transformation \eqref{H-poenkareove} from the group $SP^+_{c,a}(1,1)$.
\end{theorem}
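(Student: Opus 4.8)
The plan is to isolate the single feature that separates \eqref{H-poenkareove} from the Poincar\'e transformation \eqref{poenkareove}: its linear part is the Lorentz matrix rescaled by the positive scalar $\ell_{c,a}(u)=\exp(a\,\mathrm{artanh}(u/c))$. Writing the linear part of \eqref{H-poenkareove} as $\lambda(u)\,N(u)$, where
\[
\lambda(u)=\ell_{c,a}(u)\,\kappa_c(u)>0, \qquad N(u)=\begin{pmatrix} 1 & u \\ \frac{u}{c^2} & 1 \end{pmatrix},
\]
I would use that any velocity is computed as a quotient $v=\Delta x/\Delta t$, so a factor common to the two coordinate functions $x$ and $t$ cancels. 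Consequently every velocity attached to \eqref{H-poenkareove} coincides with the one attached to \eqref{poenkareove}, and both Problems \ref{brzinaI'} and \ref{sabiranje} reduce, for the deformed group, to their already-proven Poincar\'e versions.

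Concretely, for Problem \ref{brzinaI'} I would substitute a motion at rest in $\mathbf I'$, namely $\gamma=\{x'=x_0'\}$, into \eqref{H-poenkareove}, obtaining $x=\lambda(u)(x_0'+ut')+b_1$ and $t=\lambda(u)(\tfrac{u}{c^2}x_0'+t')+b_2$; parametrizing by $t'$ gives $v=\frac{dx/dt'}{dt/dt'}=\frac{\lambda(u)\,u}{\lambda(u)}=u$, so $\mathbf u=u\mathbf e_1$ exactly as before. For the addition law (Problem \ref{sabiranje}) I would run the same computation on a uniform motion $\gamma=\{x'=x_0'+v't'\}$ of velocity $\mathbf v'=v'\mathbf e_1'$; the factor $\lambda(u)$ again drops out of the quotient and yields
\[
v=\frac{u+v'}{1+\frac{uv'}{c^2}}.
\]
Alternatively, for part (ii) one argues group-theoretically: since $\mathbf E$ and $\mathbf F_c$ commute, $\Phi_{a,1}$ is the one-parametric subgroup $s\mapsto\exp(as)\,\mathrm{L}_c^s$, so composing two linear parts simply adds the parameters $s$; with $u=c\tanh s$ the identity $\tanh(s+s')=\tfrac{\tanh s+\tanh s'}{1+\tanh s\tanh s'}$ reproduces the same formula.

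I do not expect a genuine obstacle here: the whole content is the scale-invariance of the velocity quotient under the positive conformal factor $\ell_{c,a}(u)$. The only point requiring care is ordering --- one must first confirm, via this cancellation, that $u$ keeps its kinematic meaning for \eqref{H-poenkareove} before the group-composition argument can legitimately be phrased in terms of velocities. This is precisely the phenomenon advertised in the introduction: the deformed family $SP^+_{c,a}(1,1)$ reproduces the relativistic addition of velocities despite not being the genuine Poincar\'e group.
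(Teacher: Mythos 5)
Your proposal is correct and follows exactly the route the paper intends (Problem \ref{cudan} is left as an exercise, modeled on parts (i) and (ii) of Problems \ref{brzinaI'} and \ref{sabiranje}): the positive scalar $\ell_{c,a}(u)$ multiplies both coordinate functions and therefore cancels in every velocity quotient $\Delta x/\Delta t$, so the kinematic meaning of $u$ and the relativistic addition law carry over verbatim, and your group-theoretic alternative via the one-parametric subgroup $\Phi_{a,1}$ is likewise sound. Your closing caveat --- that one must first confirm the kinematic meaning of $u$ for \eqref{H-poenkareove} before rephrasing the composition law in terms of velocities --- is precisely the right point of care.
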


Note that both Poincar\'e and Einstein considered the linear part of the transformations
\eqref{H-poenkareove} with an unknown function $\ell(u)$  (see \cite{E, P1, P2}).
Here, in a sense, we follow Poincar\'e's approach: $\ell(u)\equiv 1$ is the simplest function that leads to the fact that the corresponding set of transformations forms a group, which, in addition, preserves the quadratic form \eqref{QF}.

Curiously, we did not find
the decomposition \eqref{dekompozicija} and the family of 1-parametric deformations $SP^+_{c,a}(1,1)$ of the Poincar\'e group in the literature.
Problem \ref{cudan} indicates that one should be very careful in derivation of Lorentz transformation.

\section{Pseudo-Euclidean 2-dimensional world.}\label{pseudoSvet}

%For simplicity, we will continue to work with a two-dimensional affine world (space-time). Firstly, we need to define a pseudo-Euclidean structure.

Consider an affine plane $\mathcal A^2$  and assume that the associated vector space $\mathbb V$ is endowed with
a symmetric, bilinear, nondegenerate  form
$\langle \cdot,\cdot\rangle\colon \mathbb V\times \mathbb V \to \mathbb R$:
\begin{itemize}
\item[(a)] $\langle \mathbf u,\mathbf v\rangle=\langle \mathbf v,\mathbf u\rangle\quad $ ($\mathbf u,\mathbf v\in\mathbb V$);
\item[(b)] $\langle x_1\mathbf u_1+x_2\mathbf u_2,\mathbf v\rangle=x_1\langle \mathbf u_1,\mathbf v\rangle+
x_2\langle \mathbf u_2,\mathbf v\rangle\quad $ ($\mathbf u_1,\mathbf u_2,\mathbf v\in\mathbb V$, $x_1,x_2\in\mathbb R$);
\item[(c)] If $\langle \mathbf u,\mathbf v\rangle=0$ for all $\mathbf v\in\mathbb V$, then $\mathbf u=0$.
\end{itemize}

There are three possible situations:
\begin{itemize}
\item[(i)] $\langle \mathbf u,\mathbf u\rangle>0$, for all non-zero vectors $\mathbf u\in\mathbb V$, that is $\langle \cdot,\cdot\rangle$ is the Euclidean scalar product.
We say that $\langle \cdot,\cdot\rangle$ is a \emph{scalar product of signature} $(2,0)$;
\item[(ii)] $\langle \mathbf u,\mathbf u\rangle<0$, for all non-zero vectors $\mathbf u\in\mathbb V$, that is $-\langle \cdot,\cdot\rangle$ is the Euclidean scalar product.
We say that We say that $\langle \cdot,\cdot\rangle$ is a \emph{scalar product of signature} $(0,2)$;
\item[(iii)] There exist $\mathbf u,\mathbf v\in\mathbb V$, such that $\langle \mathbf u,\mathbf u\rangle>0$, $\langle \mathbf v,\mathbf v\rangle<0$.
We say that $\langle \cdot,\cdot\rangle$ is a \emph{scalar product of signature} $(1,1)$.
\end{itemize}

\begin{theorem}
Prove that indeed only the above three situations can appear. For example, the situation where all non-zero vectors $\mathbf u$ satisfy either
$\langle \mathbf u,\mathbf u\rangle>0$ or $\langle \mathbf u,\mathbf u\rangle=0$ is impossible.
\end{theorem}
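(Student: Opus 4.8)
The plan is to reduce the trichotomy to a single dichotomy and then eliminate the one borderline case that the dichotomy leaves open. Write $q(\mathbf u)=\langle\mathbf u,\mathbf u\rangle$ for the associated quadratic form. The three listed situations are manifestly mutually exclusive, so it suffices to prove they are exhaustive. Negating (iii) means that $q$ does not take both signs, i.e.\ either $q(\mathbf u)\ge 0$ for every $\mathbf u\in\mathbb V$, or $q(\mathbf u)\le 0$ for every $\mathbf u\in\mathbb V$. Since replacing $\langle\cdot,\cdot\rangle$ by $-\langle\cdot,\cdot\rangle$ interchanges these two alternatives (and interchanges (i) with (ii)), I only need to treat the first one and show that $q\ge 0$ everywhere forces (i), the positive definite case.

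So I would assume $q(\mathbf u)\ge 0$ for all $\mathbf u$ and suppose, toward a contradiction, that $q$ is not positive definite; then there is a nonzero vector $\mathbf u_0$ with $q(\mathbf u_0)=0$. This is precisely the borderline situation flagged as impossible in the statement, and ruling it out is the crux. The key observation is that such a $\mathbf u_0$ is a global minimum of $q$, and a minimum of a quadratic form must lie in the radical of the form. Concretely, for any $\mathbf v\in\mathbb V$ and any $t\in\R$, bilinearity and symmetry give
\[
q(\mathbf u_0+t\mathbf v)=q(\mathbf u_0)+2t\langle\mathbf u_0,\mathbf v\rangle+t^2 q(\mathbf v)=2t\langle\mathbf u_0,\mathbf v\rangle+t^2 q(\mathbf v),
\]
and this expression is $\ge 0$ for all $t$ by hypothesis.

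The hard part is to extract from this one-variable inequality that $\langle\mathbf u_0,\mathbf v\rangle=0$. I would argue purely algebraically, avoiding calculus so as to stay within the elementary framework of the paper: set $c=\langle\mathbf u_0,\mathbf v\rangle$ and $d=q(\mathbf v)\ge 0$, so that $p(t)=dt^2+2ct\ge 0$ for all $t$. If $c\ne 0$, then either $d=0$, in which case $p(t)=2ct$ is negative for a suitable sign of $t$, or $d>0$, in which case the discriminant $4c^2$ is positive and $p$ takes negative values strictly between its two distinct real roots; both cases contradict $p\ge 0$. Hence $c=0$, i.e.\ $\langle\mathbf u_0,\mathbf v\rangle=0$. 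As $\mathbf v$ was arbitrary, the nondegeneracy condition (c) forces $\mathbf u_0=0$, contradicting the choice of $\mathbf u_0$.

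This contradiction shows that $q\ge 0$ everywhere implies (i); by the sign-flip symmetry, $q\le 0$ everywhere implies (ii); and the only remaining possibility is exactly (iii). Thus the three signatures $(2,0)$, $(0,2)$, $(1,1)$ exhaust all nondegenerate symmetric bilinear forms, and in particular the semidefinite-yet-isotropic situation singled out in the statement cannot occur. I expect no genuine difficulty beyond the one step isolated above; note also that this reasoning never used $\dim\mathbb V=2$, so the radical argument is valid in any dimension, with dimension two entering only when one identifies the definite and indefinite cases with these specific signatures.
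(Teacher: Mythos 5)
Your argument is correct and is essentially the approach the paper intends with its hint: both hinge on expanding the quadratic $t\mapsto\langle \mathbf u_0+t\mathbf v,\mathbf u_0+t\mathbf v\rangle=2t\langle\mathbf u_0,\mathbf v\rangle+t^2\langle\mathbf v,\mathbf v\rangle$ at an isotropic vector and invoking nondegeneracy; you merely run the dichotomy in the other order (concluding $\langle\mathbf u_0,\mathbf v\rangle=0$ and contradicting condition (c), rather than exhibiting a vector of negative square). Your observation that the argument is dimension-free is a small bonus over the hint, but the underlying idea is the same.
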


{\sc Hint.} Let $\langle \mathbf u,\mathbf u\rangle>0$ and $\langle\mathbf v,\mathbf v\rangle=0$. Prove that $\langle \mathbf u,\mathbf v\rangle\ne 0$ and consider the function
$f(x)=\langle x\mathbf u-\mathbf v,x\mathbf u-\mathbf v\rangle$, $x\in\mathbb R$.

\

Let us consider a scalar product of signature $(1,1)$.
As in the usual Euclidean plane, the scalar product defines the pseudo-Euclidean quadratic form
in the affine plane:
\[
\delta(A,B)=\langle \overrightarrow{AB}, \overrightarrow{AB} \rangle.
\]

We say that $(\mathcal A^2, \delta)$ (respectively $(\mathbb V,\langle\cdot,\cdot\rangle)$)
is a \emph{pseudo-Euclidean} affine space (respectively, a vector space) of signature $(1,1)$.

\begin{dfn}
Let $\mathbf v\in\mathbb V$, $\mathbf v\ne 0$. Then
\begin{itemize}
\item[(i)] If $\langle \mathbf v,\mathbf v\rangle>0$ then $\mathbf v$ is called a \emph{space-like vector};
\item[(ii)] If $\langle \mathbf v,\mathbf v\rangle=0$ then $\mathbf v$ is called a \emph{light-like vector};
\item[(iii)] If $\langle \mathbf v,\mathbf v\rangle<0$ then $\mathbf v$ is called a \emph{time-like vector}.
\end{itemize}
\end{dfn}

The segment $[AB]$ (or the line $AB$) is \emph{space-like}, \emph{time-like}, or the \emph{light-like}, depending of the type of the vector  $\overrightarrow{AB}$.
Note that the definition of time-like and space-like vectors and lines can be interchanged. Both notations are used in the literature.

A basis $\mathbf e_1, \mathbf e_2$ is \emph{pseudo-othonormal} if
\[
\langle \mathbf e_1,\mathbf e_1\rangle=1, \qquad \langle \mathbf e_2,\mathbf e_2\rangle=-1, \qquad \langle \mathbf e_1,\mathbf e_2\rangle=0.
\]

The construction of a pseudo-orthonormal basis is similar like in
the Euclidean geometry. Take two vectors $\mathbf u,\mathbf
v\in\mathbb V$, such that $\langle \mathbf u,\mathbf u\rangle>0$,
$\langle \mathbf v,\mathbf v\rangle<0$. Then $\mathbf e_1=\mathbf
u/\sqrt{\langle \mathbf u,\mathbf u\rangle}$ satisfies $\langle
\mathbf e_1,\mathbf e_1\rangle=1$. Further, let $\mathbf
v'=\mathbf v-\langle \mathbf v,\mathbf e_1\rangle \mathbf e_1$.
Then $\mathbf v'$ and $\mathbf e_1$ are \emph{pseudo-orthogonal}:
\[
\langle \mathbf v',\mathbf e_1\rangle=\langle\mathbf v-\langle \mathbf v,\mathbf e_1\rangle\mathbf e_1,\mathbf e_1\rangle=
\langle \mathbf v,\mathbf e_1\rangle-\langle \mathbf v,\mathbf e_1\rangle=0.
\]
Finally, we take $\mathbf e_2=\mathbf v'/\sqrt{\vert\langle \mathbf v',\mathbf v'\rangle\vert}$.

Let $\mathbf v=v_1\mathbf e_1+v_2\mathbf e_2$, where $\mathbf e_1,
\mathbf e_2$ is a pseudo-othonormal basis. Then the space-like,
light-like, and time-like vectors represent the regions of
$\mathbb V$ determined by the equations
\[
\mathbb V_S\colon v_1^2-v_2^2>0, \quad \mathbb V_L\colon v_1^2-v_2^2=0, \,v_1^2+v_2^2\ne 0, \quad \mathbb V_T\colon v_1^2-v_2^2<0.
\]
The subset $\mathbb V_L$ is called the \emph{light-like cone}. In the two-dimensional case this is the union of two one-dimensional subspaces without the origin.
Note that the region of time-like vectors $\mathbb V_T$ and the light-like cone have two components:
\begin{align*}
&\mathbb V_T^+\colon v_1^2-v_2^2<0, v_2>0, \qquad \mathbb V_T^-\colon v_1^2-v_2^2<0, v_2<0, \\
&\mathbb V_L^+\colon v_1^2-v_2^2=0, v_2>0, \qquad \mathbb V_L^-\colon v_1^2-v_2^2=0, v_2<0.
\end{align*}

Once we fix the decompositions $\mathbb V_T=\mathbb V_T^+\cup \mathbb V_T^-$, $\mathbb V_L=\mathbb V_L^+\cup \mathbb V_L^-$
 we call the  vectors in $\mathbb V^+=\mathbb V^+_T\cup\mathbb V^+_L$ the \emph{positive vectors} and
 the  vectors in $\mathbb V^-=\mathbb V^-_T\cup\mathbb V^-_L$ the \emph{negative vectors}. They define positive and
 negative orientations for the time-like and light-like lines.

It is clear that the sum $\mathbf u+\mathbf v$ of a two positive time-like vectors $\mathbf u$ and $\mathbf v$ is a positive time-like vector.

\begin{theorem}\label{stavNejednakosti}
 Let $\mathbf v=v_1 \mathbf e_1+v_2\mathbf e_2$, $\mathbf v=v_1 \mathbf e_1+v_2\mathbf e_2$ be nonproportional positive time-like or light-like vectors.
Prove:
\begin{itemize}
\item[(i)] $\vert\langle \mathbf u,\mathbf v\rangle\vert=u_2v_2-u_1v_1>\sqrt{\langle \mathbf u,\mathbf u\rangle\langle\mathbf v,\mathbf v\rangle}$;
% (the "inverse" Cauchy inequality).
\item[(ii)]$\sqrt{\vert\langle \mathbf u+\mathbf v,\mathbf u+\mathbf v\rangle\vert}>
\sqrt{\vert\langle\mathbf u,\mathbf u\rangle\vert}+\sqrt{\vert\langle\mathbf v,\mathbf v\rangle\vert}$.
%(the "inverse" triangle inequality).
\end{itemize}
\end{theorem}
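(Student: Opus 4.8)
The plan is to pass to coordinates with respect to the pseudo-orthonormal basis $\mathbf e_1, \mathbf e_2$, in which $\langle \mathbf u, \mathbf v\rangle = u_1 v_1 - u_2 v_2$, and to reduce both inequalities to one algebraic identity. First I would record the hypotheses in coordinates: a positive time-like or light-like vector $\mathbf w = w_1\mathbf e_1 + w_2\mathbf e_2$ satisfies $w_2 > 0$ and $w_2 \ge \vert w_1\vert$ (with strict inequality in the time-like case). Applying this to both $\mathbf u$ and $\mathbf v$, I would check that $\langle \mathbf u, \mathbf v\rangle = u_1 v_1 - u_2 v_2 < 0$: indeed $u_2 v_2 \ge \vert u_1\vert\,\vert v_1\vert \ge u_1 v_1$, and the equality case is ruled out by nonproportionality (it would force both vectors onto a single light-like ray). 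This justifies writing $\vert\langle \mathbf u, \mathbf v\rangle\vert = u_2 v_2 - u_1 v_1$ as in the statement, and it also gives $\langle \mathbf u, \mathbf u\rangle = u_1^2 - u_2^2 \le 0$ and $\langle \mathbf v, \mathbf v\rangle = v_1^2 - v_2^2 \le 0$, so that the right-hand side of (i) is a genuine real number.

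The heart of the matter is (i), a reversed Cauchy--Schwarz inequality. Since both sides are nonnegative, I would square and compute the difference, arriving at the Lagrange-type identity
\[
(u_2 v_2 - u_1 v_1)^2 - (u_2^2 - u_1^2)(v_2^2 - v_1^2) = (u_1 v_2 - u_2 v_1)^2 .
\]
This identity is the one step that does all the work, and I expect verifying it to be the (entirely routine) crux. Its right-hand side is strictly positive precisely when $u_1 v_2 - u_2 v_1 \ne 0$, which is exactly the hypothesis that $\mathbf u$ and $\mathbf v$ are nonproportional. Hence $(u_2 v_2 - u_1 v_1)^2 > (u_2^2 - u_1^2)(v_2^2 - v_1^2)$, and taking square roots of these nonnegative quantities yields (i).

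Finally, (ii) should follow formally from (i). Setting $a = \vert\langle \mathbf u, \mathbf u\rangle\vert$ and $b = \vert\langle \mathbf v, \mathbf v\rangle\vert$, I would expand
\[
\langle \mathbf u + \mathbf v, \mathbf u + \mathbf v\rangle = \langle \mathbf u, \mathbf u\rangle + 2\langle \mathbf u, \mathbf v\rangle + \langle \mathbf v, \mathbf v\rangle .
\]
Since $\mathbf u + \mathbf v$ is again a positive time-like vector (the same coordinate check as for the remark preceding the statement handles the mixed time-like/light-like case), the left side is negative, so its absolute value equals $a + b - 2\langle \mathbf u, \mathbf v\rangle = a + b + 2\vert\langle \mathbf u, \mathbf v\rangle\vert$. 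Feeding in (i) in the form $\vert\langle \mathbf u, \mathbf v\rangle\vert > \sqrt{ab}$ gives $\vert\langle \mathbf u + \mathbf v, \mathbf u + \mathbf v\rangle\vert > a + b + 2\sqrt{ab} = (\sqrt{a} + \sqrt{b})^2$, and taking square roots yields (ii). The only genuine content is the identity in (i); everything else is bookkeeping about signs and orientations.
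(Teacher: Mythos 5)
Your proof is correct. Note that the paper states this result as a Problem and leaves it as an exercise, so there is no proof in the text to compare against; your solution — reducing (i) to the Lagrange identity $(u_2v_2-u_1v_1)^2-(u_2^2-u_1^2)(v_2^2-v_1^2)=(u_1v_2-u_2v_1)^2$, with nonproportionality supplying strictness, and then deriving (ii) by expanding $\langle\mathbf u+\mathbf v,\mathbf u+\mathbf v\rangle$ — is exactly the intended route, and your sign bookkeeping (in particular ruling out $u_2v_2=u_1v_1$ via the common light-like ray) is sound.
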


Now we are ready to define a 2-dimensional affine world of special relativity.
This is a pseudo-Euclidean affine space $\mathcal A^2$ of signature $(1,1)$.
%In the Galilean geometry, the function $l$ determines the direction of time and the positive orientation.
The \emph{uniform motions} in the affine world $\mathcal A^2$ are time-like and light-like lines with positive orientations.
For $A\in\mathcal A^2$ we define the region of \emph{future events} $A^+=A+\mathbb V^+$ (the events that can be influenced by $A$)
and the region of \emph{past events} $A^-=A+\mathbb V^-$ (the events that may affect $A$), see Fig. \ref{buduci dogadjaji}.
Also, $A+\mathbb V_L$ is called the \emph{light-like cone at} $A$.

It is clear that if $B\in A^+$, then the region of future events of $B$ is a subset of the region of future events of $A$: $B^+\subset A^+$
(see Fig. \ref{diletacija-kontrakcija}).

Let $B\in A^-\cup A^+$. Then the \emph{proper time} between events $A$ and $B$ is defined by
\[
\tau(A,B)=\frac1{c}\sqrt{\vert\delta(A,B)\vert}=\frac1{c}{\sqrt{\vert\langle\overrightarrow{AB},\overrightarrow{AB}\rangle\vert}},
\]
where, as above, $c$ denotes the speed of light. The physical interpretation of the proper time $\tau(A,B)$ is that it is a time passed for an observer who moves
from the event $A$ to the event $B$ along the line $AB$ (in the case $B\in A^+$). If $B$ belongs in the light-like cone of $A$, then $\tau(A,B)=0$.

As a direct corollary of Problem \ref{stavNejednakosti} we get the \emph{inverse triangle inequality}
(compare with the \emph{triangle equality} $l(A,B)+l(B,C)=l(A,C)$ in the Galilean world).

\begin{prop}[Twin paradox] Let $A,B,C$ be three noncollinear events, such that $B\in A^+$ and $C\in B^+$.
Then the proper time between the events $A$ and $C$ is greater then the sum of the proper times between the events $A$ and $B$ and $B$ and $C$:
\[
\tau(A,C)>\tau(A,B)+\tau(B,C).
\]
\end{prop}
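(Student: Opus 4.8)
The plan is to reduce the claim to part (ii) of Problem~\ref{stavNejednakosti} by a suitable choice of vectors. First I would set $\mathbf u=\overrightarrow{AB}$ and $\mathbf v=\overrightarrow{BC}$, so that by the additivity of vectors in an affine space one has $\overrightarrow{AC}=\overrightarrow{AB}+\overrightarrow{BC}=\mathbf u+\mathbf v$. Rewriting the three proper times by means of the definition $\tau(X,Y)=\tfrac1c\sqrt{|\langle\overrightarrow{XY},\overrightarrow{XY}\rangle|}$ gives
\[
\tau(A,B)=\tfrac1c\sqrt{|\langle\mathbf u,\mathbf u\rangle|},\quad
\tau(B,C)=\tfrac1c\sqrt{|\langle\mathbf v,\mathbf v\rangle|},\quad
\tau(A,C)=\tfrac1c\sqrt{|\langle\mathbf u+\mathbf v,\mathbf u+\mathbf v\rangle|},
\]
so that, after clearing the common positive factor $1/c$, the assertion becomes \emph{exactly} the inequality in part (ii) of Problem~\ref{stavNejednakosti}.

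Next I would verify that the hypotheses of that lemma are met. Since $B\in A^+=A+\mathbb V^+$, the vector $\mathbf u=\overrightarrow{AB}$ lies in $\mathbb V^+=\mathbb V^+_T\cup\mathbb V^+_L$, i.e.\ it is a positive time-like or light-like vector; likewise $C\in B^+=B+\mathbb V^+$ forces $\mathbf v=\overrightarrow{BC}\in\mathbb V^+$. Finally, the assumed noncollinearity of $A,B,C$ is precisely the statement that $\overrightarrow{AB}$ and $\overrightarrow{BC}$ are nonproportional. Hence $\mathbf u$ and $\mathbf v$ satisfy all the standing assumptions of Problem~\ref{stavNejednakosti}.

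With the hypotheses in place the conclusion is immediate: part (ii) of Problem~\ref{stavNejednakosti} yields
\[
\sqrt{|\langle\mathbf u+\mathbf v,\mathbf u+\mathbf v\rangle|}
>\sqrt{|\langle\mathbf u,\mathbf u\rangle|}+\sqrt{|\langle\mathbf v,\mathbf v\rangle|},
\]
and dividing by $c>0$ gives $\tau(A,C)>\tau(A,B)+\tau(B,C)$. There is no genuine obstacle once Problem~\ref{stavNejednakosti} is available; the only points requiring care are the bookkeeping that translates the geometric hypotheses $B\in A^+$, $C\in B^+$ and noncollinearity into the algebraic hypotheses of the lemma (positivity of $\mathbf u,\mathbf v$ and their nonproportionality), together with the observation that the \emph{strict} inequality in the conclusion originates exactly from the nonproportionality, that is, from the noncollinearity of the three events.
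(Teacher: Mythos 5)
Your proposal is correct and is exactly the argument the paper intends: the paper presents the twin paradox as a direct corollary of Problem~\ref{stavNejednakosti}(ii), applied to $\mathbf u=\overrightarrow{AB}$ and $\mathbf v=\overrightarrow{BC}$, with $B\in A^+$, $C\in B^+$ supplying positivity and noncollinearity supplying nonproportionality. Your write-up just makes explicit the bookkeeping that the paper leaves implicit.
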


We also fix an orientation of the vector space $\mathbb V$, that is the associated orientation of the affine world $\mathcal A^2$.

\begin{dfn}
An event $O\in \mathcal A^2$ and a pseudo-orthonormal positively
oriented basis $\mathbf e_1,\mathbf e_2$ with $\mathbf e_2$ being
positive time-like vector define an \emph{inertial reference
frame} $\mathbf I=[O,\mathbf e_1\mathbf e_2]$ in the
pseudo-Euclidean world $\mathcal A^2$. The inertial coordinates
$(x,t)\in \mathbb R^2$ of an event $X\in\mathcal A^2$ with respect
the inertial frame $\mathbf I$ are defined by
\[
\overrightarrow{OX}=x\mathbf e_1+ct\mathbf e_2,
\]
i.e., the coordinates $(x,t)$ are the affine coordinates with
respect to the basis $\mathbf e_1,c\mathbf e_2$ of $\mathbb V$.
\end{dfn}

In inertial coordinates $(x,t)$ we have (see Fig. \ref{buduci dogadjaji})
\begin{align*}
& \overrightarrow{A(x_1,t_1),B(x_2,t_2)}=(x_2-x_1)\mathbf e_1+c(t_2-t_1)\mathbf e_2,\\
& A(x,t)+v_1\mathbf e_1+v_2\mathbf e_2=\big(x+v_1,t+\frac{v_2}{c}\big),
\end{align*}
and the pseudo-Euclidean quadratic form is given by:
\begin{equation}\label{quadraticForm}
\delta(A(x_1,t_1),B(x_2,t_2))=(x_2-x_1)^2-c^2(t_2-t_1)^2.
\end{equation}

Note that it is also usual to write affine coordinates in the form $(x,ct)$.

\begin{figure}[ht]
{\centering
{\includegraphics[width=7.5cm]{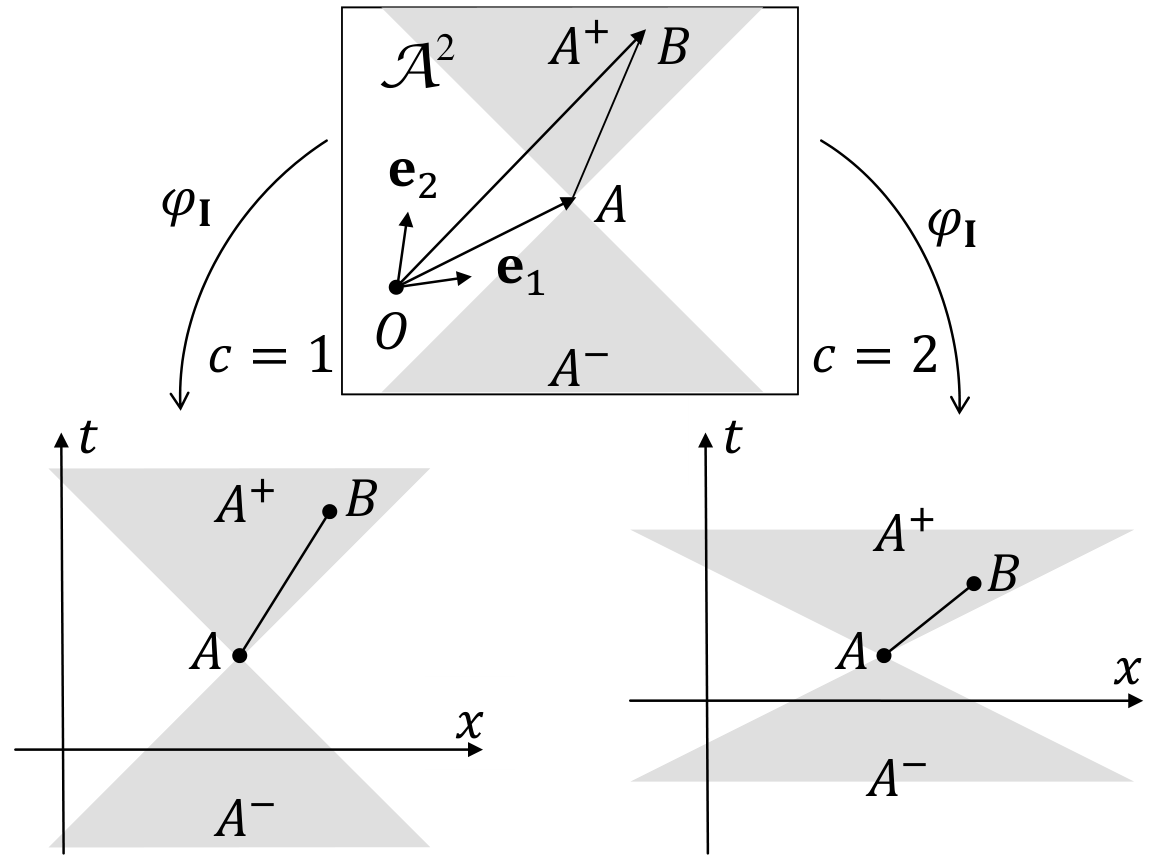}}
\caption{The region of future and past events of $A$ in the inertial reference frame $\mathbf I=[O,\mathbf e_1,\mathbf e_2]$ with the speed of light equals $c=1$ and $c=2$.
Note that if we use the system of units with meters for $x$-axis and seconds for $t$-axis, and the value $c=3\cdot 10^8 m/s$, then the space of future events would not be
distinguished from the upper half plane.
\label{buduci dogadjaji}}}
\end{figure}

In the Galilean world, the family of spaces of simultaneous events does not depend of the chosen inertial system.
Here, the spaces of simultaneous events $t=const$ in the frame $\mathbf I$ depend of $\mathbf e_2$:
they have direction $\mathbf e_2^\perp$, where
$
\mathbf e_2^\perp=\{\mathbf v\in \mathbb V\,\vert\, \langle \mathbf e_2,\mathbf v\rangle=0\}
$
is the \emph{pseoudo-orthogonal complement} of $\mathbf e_2$.
Since the non-zero vectors in $\mathbf e_2^\perp$ are space-like, the restriction of the scalar product $\langle \cdot,\cdot\rangle$ to
$\mathbf e_2^\perp$ defines the Euclidean scalar product.

Therefore, in the chosen inertial reference frame $\mathbf I$
we can consider the Galilean structure:
\begin{align*}
& l(A(x_1,t_1)),B(x_2,t_2))=t_2-t_1 ,\\
& l(A,B)=0  \quad \Rightarrow \quad \rho(A(x_1,b_1),B(x_2,t_2))=\sqrt{\delta(A,B)}=\vert x_2-x_1\vert.
\end{align*}

We say that the event $A(x_1,t_1)$ happened before the event $B(x_2,t_2)$ in the inertial reference frame $\mathbf I$ if
$l(A,B)=t_2-t_1>0$.

Further, consider a uniform motion given by a time-like or light-like line $\gamma$ and let $A(x_1,t_1)$ and $B(x_2,t_2)$ be
two events on $\gamma$ ($t_1<t_2$).
The Galilean time $l(A,B)=t_2-t_1$ between the events $A$ and $B$ coincides
with the proper time  $\tau(A,B)$ only if the uniform motion along the line $AB$ is at rest in the reference frame $\mathbf I$.
Otherwise, we have  (so called \emph{dilatation of time}):
\begin{align*}
\tau(A,B)&=\frac{1}{c}\sqrt{c^2(t_2-t_1)^2-(x_2-x_1)^2}\\
&=(t_2-t_1)\sqrt{1-(v/c)^2}<(t_2-t_1) =l(A,B),
\end{align*}
where $v=\Delta x/\Delta t=(x_2-x_1)/(t_2-t_1)$.
Also, apart of the space-like velocity $\mathbf v=v\mathbf e_1$, for a time-like line $\gamma$ ($v\in(-c,c)$),
we define the \emph{time-like velocity}:
\[
\mathbf V=\frac{\overrightarrow{AB}}{\tau(A,B)}=
\frac1{\sqrt{1-(v/c)^2}}\frac{\overrightarrow{AB}}{l(A,B)}=v\kappa_c(v)\mathbf e_1+c\kappa_c(v)\mathbf e_2.
\]
The time-like velocity $\mathbf V$ is the normalized ($\langle \mathbf V,\mathbf V\rangle=-c^2$) direction of $\gamma$.

\begin{theorem}  Let $\mathbf I=[O,\mathbf e_1\mathbf e_2]$ and $\mathbf I'=[O',\mathbf e_1',\mathbf e_2']$
be inertial reference frames  with
affine coordinates $(x,t)$ and $(x',t')$ defined above. Prove:
\begin{itemize}
\item[(i)] Consider the vector $\mathbf w\in\mathbb V$ in the bases $\mathbf e_1\mathbf e_2$ and $\mathbf e_1',\mathbf e_2'$:
\[
\mathbf w =w_1\mathbf e_1+w_2\mathbf e_2=w_1'\mathbf e_1'+w_2'\mathbf e_2.
\]
Then the components $w_1,w_2$ and $w_1',w_2'$ are related by a Lorentz transformation of the form
\begin{equation}\label{lorencove*}
\begin{pmatrix}
w_1\\
w_2
\end{pmatrix}=\tilde{\mathrm  L}_c(u)
\begin{pmatrix}
w_1'\\
w_2'
\end{pmatrix}
=
\kappa_c(u) \begin{pmatrix}
1 & \frac{u}{c}  \\
\frac{u}{c} & 1 \end{pmatrix}
\begin{pmatrix}
w_1'\\
w_2'
\end{pmatrix}.
 \end{equation}
In particular, the pseudo-orthonormal bases are related by:
\begin{align*}
&\mathbf e_1'=\kappa_c(u)\big(\mathbf e_1+\frac{u}{c}\mathbf e_2\big), \qquad \mathbf e_1'=\kappa_c(u)\big(\frac{u}{c}\mathbf e_1+\mathbf e_2\big),\\
&\mathbf e_1=\kappa_c(u)\big(\mathbf e_1'-\frac{u}{c}\mathbf e_2'\big), \qquad \mathbf e_1=\kappa_c(u)\big(-\frac{u}{c}\mathbf e_1'+\mathbf e_2'\big).
\end{align*}

\item[(ii)] The coordinates $(x,t)$ and $(x',t')$ are related by a Poincar\'e transformation \eqref{poenkareove},
where $\mathbf b=(b_1,b_2)$ are the coordinates of $O'$ in variables $(x,t)$. The parameter $u$ is the
magnitude of the velocity $\mathbf u=u\mathbf e_1$
of a uniform motion $\gamma$ that is at rest in the reference frame $\mathbf I'$.
Equivalently, in the coordinates $(x,ct)$ and $(x',ct')$,
the Poincar\'e transformations $\tilde P_c(u,\mathbf b)\colon \mathbb R^2\to \mathbb R^2$ takes the form:
\footnote{In the literature,  both linear maps, given by
$\mathrm L_c(u)$ and by
$\tilde{\mathrm L}_c(u)$, are called Lorentz transformations. Also, both affine transformations $P_c(u,\mathbf b)$ and $\tilde P_c(u,\mathbf b)$ are called
Poincar\'e transformations.}
\begin{equation}\label{poenkareove*}
\begin{pmatrix}
x\\
ct
\end{pmatrix}=
\kappa_c(u) \begin{pmatrix}
1 & \frac{u}{c}  \\
\frac{u}{c} & 1 \end{pmatrix}
\begin{pmatrix}
x'\\
c t'
\end{pmatrix}+
\begin{pmatrix}
b_1\\
c b_2
\end{pmatrix}.
\end{equation}
\end{itemize}
\end{theorem}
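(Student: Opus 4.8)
The plan is to reduce both parts to a single object, the change-of-basis matrix relating the two pseudo-orthonormal frames, and to exploit that this matrix must preserve the scalar product, the orientation, and the time-orientation of $\mathbb V$. Once that matrix is identified with $\tilde{\mathrm L}_c(u)$, part (i) is immediate and part (ii) follows by decomposing the position vector and a harmless rescaling of the second coordinate.

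For (i) I would write the new basis in the old one, $\mathbf e_1'=a\mathbf e_1+c_0\mathbf e_2$ and $\mathbf e_2'=b\mathbf e_1+d\mathbf e_2$; then the matrix $\mathrm A$ with columns $(a,c_0)^{T}$ and $(b,d)^{T}$ is exactly the one sending the $\mathbf e'$-components of a vector to its $\mathbf e$-components, so that $(w_1,w_2)^{T}=\mathrm A\,(w_1',w_2')^{T}$, and the task is to identify $\mathrm A$. Imposing that $\mathbf e_1',\mathbf e_2'$ is again pseudo-orthonormal and expanding by bilinearity with $\langle\mathbf e_1,\mathbf e_1\rangle=1$, $\langle\mathbf e_2,\mathbf e_2\rangle=-1$, $\langle\mathbf e_1,\mathbf e_2\rangle=0$ gives the three scalar relations
\[
a^2-c_0^2=1,\qquad d^2-b^2=1,\qquad ab-c_0d=0,
\]
which say precisely that $\mathrm A$ preserves the form $\mathrm{diag}(1,-1)$. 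Taking determinants yields $\det\mathrm A=\pm1$, and since both frames are positively oriented the change of basis has positive determinant, so $\det\mathrm A=1$.

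To solve this system I would parametrize $d=\cosh\beta$, $b=\sinh\beta$ (legitimate because $d^2-b^2=1$ and, as $\mathbf e_2'\in\mathbb V_T^+$ has positive $\mathbf e_2$-component, $d>0$), together with $a=\sigma\cosh\alpha$, $c_0=\sinh\alpha$, $\sigma=\pm1$. The off-diagonal relation forces $\beta=\alpha$ when $\sigma=1$ and $\beta=-\alpha$ when $\sigma=-1$; substituting into $ad-bc_0=1$ eliminates the case $\sigma=-1$ (it would give $-1$), leaving $a=d=\cosh\alpha>0$ and $b=c_0=\sinh\alpha$. Setting $u=c\tanh\alpha\in(-c,c)$ gives $\cosh\alpha=\kappa_c(u)$ and $\sinh\alpha=\kappa_c(u)u/c$, i.e.\ $\mathrm A=\tilde{\mathrm L}_c(u)$, with $u$ uniquely determined by $\mathrm A$. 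Reading off the columns of $\tilde{\mathrm L}_c(u)$ and of its inverse $\tilde{\mathrm L}_c(-u)=\tilde{\mathrm L}_c(u)^{-1}$ then produces the four displayed relations between the bases. This classification step is the main obstacle: one must check that the orthochronous condition $d>0$ together with $\det\mathrm A=1$ singles out exactly the hyperbolic-rotation matrices $\tilde{\mathrm L}_c(u)$, excluding spatial reflections and time reversals.

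For (ii) I would decompose $\overrightarrow{OX}=\overrightarrow{OO'}+\overrightarrow{O'X}$, where $\overrightarrow{OX}=x\mathbf e_1+ct\mathbf e_2$, $\overrightarrow{O'X}=x'\mathbf e_1'+ct'\mathbf e_2'$, and $\overrightarrow{OO'}=b_1\mathbf e_1+cb_2\mathbf e_2$. Applying part (i) to $\overrightarrow{O'X}$ expresses its $(\mathbf e_1,\mathbf e_2)$-components as $\tilde{\mathrm L}_c(u)(x',ct')^{T}$, so that
\[
\begin{pmatrix}x\\ct\end{pmatrix}=\tilde{\mathrm L}_c(u)\begin{pmatrix}x'\\ct'\end{pmatrix}+\begin{pmatrix}b_1\\cb_2\end{pmatrix},
\]
which is \eqref{poenkareove*}. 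Dividing the second row by $c$ converts $\tilde{\mathrm L}_c(u)$ into the matrix appearing in \eqref{poenkareove}, establishing the Poincar\'e form in the $(x,t)$ coordinates. Finally the kinematic meaning of $u$ is exactly Problem \ref{brzinaI'}: a motion at rest in $\mathbf I'$ has $x'=\mathrm{const}$, and substitution into \eqref{poenkareove} gives $\Delta x/\Delta t=u$, i.e.\ velocity $\mathbf u=u\mathbf e_1$ in $\mathbf I$.
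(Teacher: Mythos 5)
Your proposal is correct and follows essentially the same route as the paper, which refers the reader to the proof of Proposition \ref{ortogonalnaGrupa}: impose pseudo-orthonormality of the new basis to get the quadratic relations on the change-of-basis matrix, solve them, and use orientation together with positivity of $\mathbf e_2'$ to isolate the component $SO^+(1,1)$, after which (ii) is the decomposition $\overrightarrow{OX}=\overrightarrow{OO'}+\overrightarrow{O'X}$. Your explicit handling of the determinant and orthochronous conditions merely spells out what the paper leaves implicit.
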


The solution can be deduced from the proof of Proposition \ref{ortogonalnaGrupa} given below.

\begin{rem} \label{DrugaPrimedba}
In general, a motion (world line) in the affine world $\mathcal A^2$ is a curve
\begin{equation}\label{kretanje*}
\gamma=\{\gamma(t)=(x(t),t)\,\vert\, t\in\R\},
\end{equation}
where $x(t)$ is a smooth function of the time $t$ in the chosen inertial reference frame $\mathbf I$ with  the space-like velocity
$\mathbf v(t)=v(t) \mathbf e_1$ that satisfies $v(t)={dx}/{dt}\in[-c,c]$, $t\in\mathbb R$.
%Also, we define
%\[
%\mathbf V(t)=\lim_{\Delta t\to 0}\frac{\overrightarrow{\Delta \gamma}}{\Delta t}=
%\lim_{\Delta t\to 0}\frac{\overrightarrow{\gamma(t)\gamma(t+\Delta t)}}{\Delta t}=v(t)\mathbf e_1+c\mathbf e_2.
%\]
Let $A=\gamma(t_1)$, $B=\gamma(t_2)$, $t_1<t_2$. The \emph{proper time between events $A$ and $B$ along the motion} $\gamma$ is defined by
\[
\tau(A,B,\gamma)=\int_{t_1}^{t_2}\sqrt{1-(v(t)/c)^2} dt.
\]
It can be proved that the proper time between events $A$ and $B$ along the motion $\gamma$ does not depend on the chosen inertial system $\mathbf I$.
Also, the \emph{time-like velocity}
\[
\mathbf V(C)=v(t)\kappa_c(v(t))\mathbf e_1+c\kappa_c(v(t))\mathbf e_2, \qquad C=\gamma(t)\in\gamma,
\]
does not depend on inertial frame $\mathbf I$. The components of $\mathbf V(C)$ in two reference frames are related by the Lorentz transformation
\eqref{lorencove*}. Here we can formulate a \emph{general twin paradox}: Let $B\in A^+$. The proper time $\tau(A,B)$ along the uniform motion from $A$ to $B$
is greater that the proper time $\tau(A,B,\gamma)$ along arbitrary non-uniform motion \eqref{kretanje*}.
\end{rem}

Note that if $B\in A^+$, then $l(A,B)$ is positive for all inertial frames $\mathbf I$ (see Fig. \ref{buduci dogadjaji}).
However, if ${AB}$ is a space-like line, then one can easily
construct inertial reference frames where $A$ and $B$ are simultaneous events, where the event $A$ happened before the event $B$, and where
the event $A$ happened after the event $B$.
Moreover, we have:

\begin{theorem} [Contraction of length]
Consider uniform motions given by two parallel time-like lines $\gamma_1$ and $\gamma_2$.
Let $A_1$  and $A_2$ be two arbitrary simultaneous events, $A_1\in\gamma_1$, $A_2\in\gamma_2$, with respect to the
frame $\mathbf I'=[O',\mathbf e_1',\mathbf e_2']$ where $\gamma_1$ and $\gamma_2$ are at rest, and let $B_1$ and $B_2$ be two arbitrary simultaneous
events,  $B_1\in\gamma_1$, $B_2\in\gamma_2$, with respect to the
frame $\mathbf I=[O,\mathbf e_1,\mathbf e_2]$ where $\gamma_1$ and $\gamma_2$ have the velocity $\mathbf u=u\mathbf e_1$
(see Fig. \ref{diletacija-kontrakcija}).
Then
the Euclidean distances $\rho(B_1,B_2)$ and $\rho'(A_1,A_2)$ with respect to the frames $\mathbf I$ and $\mathbf I'$ are related by
\[
\rho(B_1,B_2)=\rho'(A_1,A_2)\sqrt{1-(u/c)^2}<\rho'(A_1,A_2),
\]
that is,
$
\delta(B_1,B_2)=\delta(A_1,A_2)\big(1-(u/c)^2\big)<\delta(A_1,A_2).
$
%where $u$ is the velocity of the reference frame $\mathbf I'$ with respect to the reference frame $\mathbf I$.
\end{theorem}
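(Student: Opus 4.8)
The plan is to work entirely in coordinates and to exploit the fact that \emph{simultaneity} means one thing in $\mathbf I'$ and quite another in $\mathbf I$; carefully tracking this difference is exactly where the contraction factor will emerge. First I would adopt coordinates adapted to the rest frame. Since $\gamma_1$ and $\gamma_2$ are at rest in $\mathbf I'$, each is a vertical line, so I may write $\gamma_1=\{x'=x_1'\}$ and $\gamma_2=\{x'=x_2'\}$. Because $A_1\in\gamma_1$ and $A_2\in\gamma_2$ are simultaneous in $\mathbf I'$ (equal $t'$), the vector $\overrightarrow{A_1A_2}$ is a multiple of $\mathbf e_1'$, and hence $\rho'(A_1,A_2)=|x_2'-x_1'|$ and $\delta(A_1,A_2)=(x_2'-x_1')^2$.

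Next I would push the two lines into the frame $\mathbf I$ by the Poincar\'e transformation \eqref{poenkareove} with the prescribed parameter $u$. Parametrising points of $\gamma_1$ and $\gamma_2$ as $(x_1',t')$ and $(x_2',t')$ and applying the linear part $\kappa_c(u)\bigl(\begin{smallmatrix}1 & u\\ u/c^2 & 1\end{smallmatrix}\bigr)$, I obtain parametric expressions for their $\mathbf I$-coordinates. The decisive step is then to impose the \emph{correct} simultaneity condition: $B_1$ and $B_2$ must be simultaneous in $\mathbf I$, i.e.\ must have equal $t$-coordinate. Equating the two $t$-coordinates and cancelling the common factor $\kappa_c(u)$ and the translation $b_2$ yields the relation
\[
t_2'-t_1'=\frac{u}{c^2}\,(x_1'-x_2')
\]
between the two \emph{distinct} time parameters of $B_1$ and $B_2$ as measured in $\mathbf I'$.

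Substituting this relation into the difference of the $x$-coordinates, and using $\kappa_c(u)=1/\sqrt{1-u^2/c^2}$, collapses everything to
\[
x_{B_2}-x_{B_1}=\kappa_c(u)\,(x_2'-x_1')\left(1-\frac{u^2}{c^2}\right)=(x_2'-x_1')\sqrt{1-\frac{u^2}{c^2}}.
\]
Since $B_1,B_2$ are simultaneous in $\mathbf I$, this difference is precisely $\pm\rho(B_1,B_2)$, and its square is $\delta(B_1,B_2)$. Comparing with the rest-frame quantities computed above gives both asserted identities, and the strict inequalities follow at once from $|u|<c$, which forces $0<\sqrt{1-u^2/c^2}<1$ (here $x_1'\ne x_2'$ since the lines are distinct).

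The main obstacle is conceptual rather than computational: one must resist the temptation to transport $A_1,A_2$ directly through \eqref{poenkareove}, because their images are \emph{not} simultaneous in $\mathbf I$ and so do not measure a length there. The entire effect rests on imposing $\mathbf I$-simultaneity on $B_1,B_2$, which introduces the extra term $u(t_2'-t_1')$ that recombines with $(x_2'-x_1')$ to manufacture the factor $1-u^2/c^2$. Once the relativity of simultaneity is handled correctly, the remainder is a one-line simplification.
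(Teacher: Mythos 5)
Your proof is correct, and since the paper states this result as a Problem left to the reader, there is no proof of record to compare it with; your argument --- computing $\rho'(A_1,A_2)=|x_2'-x_1'|$ in the rest frame, imposing $\mathbf I$-simultaneity on $B_1,B_2$ to get $t_2'-t_1'=\tfrac{u}{c^2}(x_1'-x_2')$, and substituting into the $x$-component of \eqref{poenkareove} to produce the factor $\sqrt{1-u^2/c^2}$ --- is exactly the intended route, and correctly isolates the relativity of simultaneity as the source of the effect. The only nitpick is that the strict inequalities require $u\ne 0$ in addition to $|u|<c$ (at $u=0$ the two frames coincide and the contraction factor is $1$), which is implicit in the hypothesis that $\gamma_1,\gamma_2$ actually move in $\mathbf I$.
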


Likewise in the Galilean geometry, we can consider  \eqref{poenkareove} as the affine mapping of $\mathcal A^2$.
Let us fix the inertial frame $\mathbf I$, i.e., the bijection $\varphi_\mathbf I$ that identifies $\mathcal A^2$
and $\mathbb R^2$. Then $P_c(u,\mathbf b)$ defines the \emph{pseudo-Euclidean isometry} -- an affine transformation of $\mathbb R^2$ that preserves the
pseudo-Euclidean quadratic form \eqref{quadraticForm}. % as it was stated in Remark \ref{invarijantnost}. for the Lorentz transformations.

\begin{figure}[ht]
{\centering
{\includegraphics[width=10.5cm]{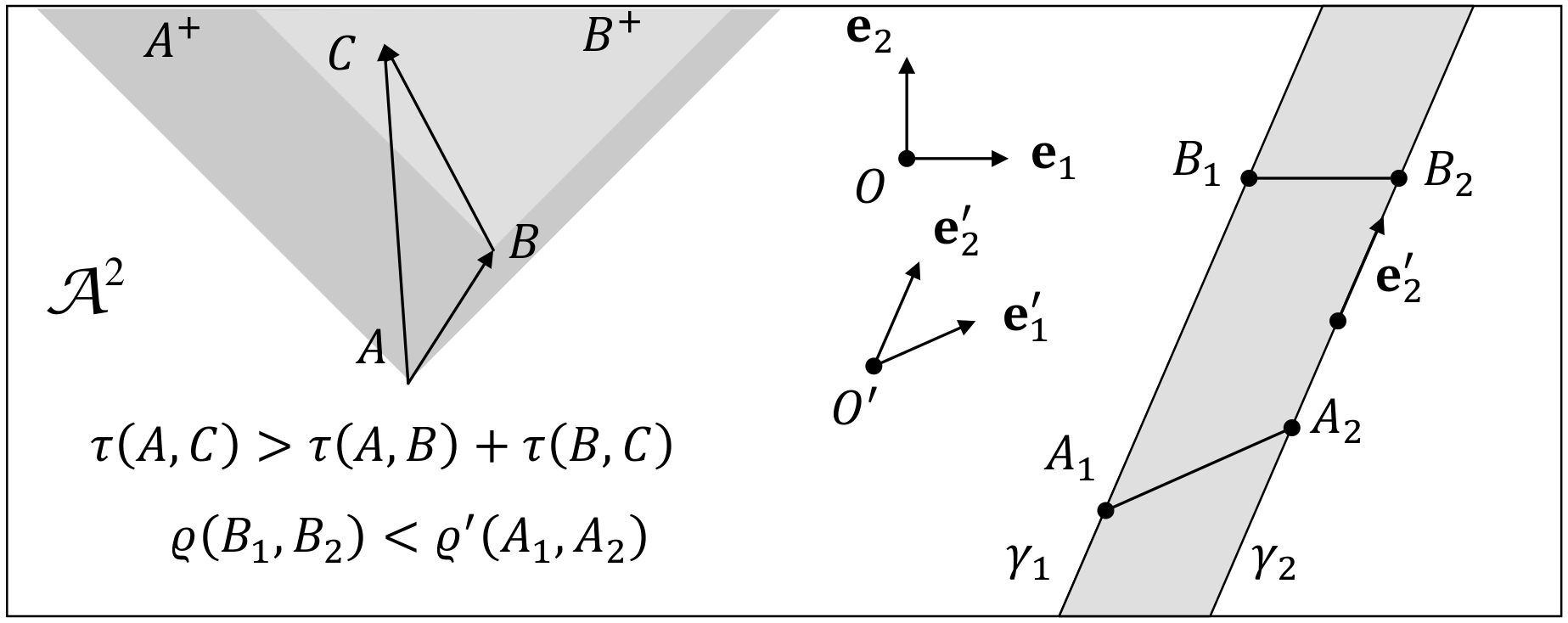}}
\caption{Twin paradox and contraction of length in the 2-dimensional pseudo-Euclidean world $\mathcal A^2$.
Here, the direction of time-like lines $\gamma_1$ and $\gamma_2$ is given by $\mathbf e_2'$.}
%$\langle\overrightarrow{B_1,B_2},\mathbf e_2\rangle=0$, and $\langle\overrightarrow{A_1,A_2},\mathbf e_2'\rangle=0$.}
\label{diletacija-kontrakcija}}
\end{figure}

The most convenient and common is to use the natural system of units, where $c=1$.
Then we omit the subscript $c$ in the above notation:
$\mathrm L(u)=\mathrm L_1(u)=\tilde{\mathrm L}_1(u)$, $P(u,\mathbf b)=P_1(u,\mathbf b)=\tilde P_1(u,\mathbf b)$.

Note that the reflections in the space $\mathbf S$ and time $\mathbf T$,
\begin{align}
\label{Smap}&\begin{pmatrix}
x'\\
t'
\end{pmatrix}
=
\mathbf S
\begin{pmatrix}
x\\
t
\end{pmatrix}
=
\begin{pmatrix}
-1 & 0 \\
0 & 1
\end{pmatrix}
\begin{pmatrix}
x\\
t
\end{pmatrix}
=
\begin{pmatrix}
-x\\
t
\end{pmatrix}
, \\
\label{Tmap}&\begin{pmatrix}
x'\\
t'
\end{pmatrix}
=
\mathbf T
\begin{pmatrix}
x\\
t
\end{pmatrix}
=
\begin{pmatrix}
1 & 0 \\
0 & -1
\end{pmatrix}
\begin{pmatrix}
x\\
t
\end{pmatrix}
=
\begin{pmatrix}
x\\
-t
\end{pmatrix}
,
\end{align}
are not Lorentz transformations but they preserve the quadratic form \eqref{quadraticForm}.
The full group of the pseudo-Euclidean isometries $P(1,1)$ will be described in the next section.

\subsection{4-dimensional affine world.}
Let $\mathcal A^4$ be the affine world, $\mathbb V$ be the
associated vector space endowed with a symmetric, bilinear,
nondegenerate  form $\langle \cdot,\cdot\rangle\colon \mathbb
V\times \mathbb V \to \mathbb R$ of signature (3,1) (equivalently, one can consider the signature $(1,3)$).
This means that there exist a  basis $\mathbf e_1,\mathbf
e_2,\mathbf e_3,\mathbf e_4$ of $\mathbb V$, such that
\[
\langle \mathbf e_i,\mathbf e_j\rangle=\delta_{ij}, \quad i=1,2,3, \quad j=1,2,3,4, \quad \langle \mathbf e_4,\mathbf e_4\rangle=-1.
\]
All definitions are as in the 2-dimensional affine
world with $\mathbf e_2$ replaced by $\mathbf e_4$ (light-like cones, positive time-like vectors, spaces of simultaneous events
with direction $\mathbf e_4^\perp$, the Euclidean structure on $\mathbf e_4^\perp$, etc.).

\section{The full groups of isometries in 2-dimensional Galilean and pseudo-Euclidean worlds.}\label{pune}

Let us consider an affine world with a chosen reference frame $\mathbf I=[O,\mathbf e_1,\mathbf e_2]$ and consider the
associated coordinates $(x,t)$.
In the Galilean world we assume that the basis $\mathbf e_1,\mathbf
e_2$ is a Galilean basis, while in the pseudo-Euclidean case we
assume that the basis is pseudo-orthonormal. We also assume that
the speed of light is normalized: $c=1$.

\subsection{Galilean world.}

Recall that the Galilean structure in the plane $\mathbb R^2$ is determined by two functions $l$ and $\rho$:
\begin{align*}
& l(A(x_1,t_1)),B(x_2,t_2))=t_2-t_1 ,\\
& l(A,B)=0  \quad \Rightarrow \quad \rho(A(x_1,b_1),B(x_2,t_2))=\vert x_2-x_1\vert
\end{align*}
and that the Galilean transformations $\mathbf x'=G(u,\mathbf b)(\mathbf x)=\mathrm A(u)\mathbf x+\mathbf b$, $\mathrm A(u)\in SG_0(2)$ preserve both $l$ and $\rho$.

We have a natural question: Is $SG(2)$ the maximal group that preserve the Galilean structure?
In the definition of the inertial reference frames,
we imposed the condition that the orientation of $\mathcal A^2$ is fixed. This is the reason that the space reflection \eqref{Smap}
is not an element of $SG_0(2)$, although it preserves the Galilean structure.
On the other side, the reflection of the time coordinate \eqref{Tmap} does not preserve the function $l$.

Let $\mathbf SSG_0(2)=\{\mathbf S \circ \mathrm A(u)\vert\,u\in \mathbb R\}$ and $\mathbf SSG(2)=\{\mathbf S \circ G(u,\mathbf b)\vert\,u\in \mathbb R,\mathbf b\in\mathbb R^2\}$.
It is clear that $SG_0(2)\cap \mathbf S SG_0(2)=\emptyset$, i.e., $SG(2)\cap \mathbf S SG(2)=\emptyset$.

\begin{theorem}\label{pomocni}
Prove the following identities
\begin{align*}
& \mathrm A(u)\circ \mathbf S=\mathbf S\circ \mathrm A(-u),  \\
& \mathbf S\circ \mathrm A(u_1)\circ \mathbf S\circ \mathrm A(u_2)=\mathrm A(u_2-u_1),\\
& \mathbf S \circ G(u_1,\mathbf b_1)\circ \mathbf S\circ G(u_2,\mathbf b_2)=G(u_2-u_1, \mathrm A(-u)\mathbf b_2+\mathbf S\mathbf b_1).
\end{align*}
\end{theorem}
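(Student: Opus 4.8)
The plan is to establish the three identities in order, each directly from the explicit matrices $\mathrm A(u)=\left(\begin{smallmatrix}1 & u\\ 0 & 1\end{smallmatrix}\right)$ and $\mathbf S=\left(\begin{smallmatrix}-1 & 0\\ 0 & 1\end{smallmatrix}\right)$, together with the composition law \eqref{sg1} and the involution property $\mathbf S\circ\mathbf S=\mathbf E$. No new idea is required: everything reduces to matrix multiplication and to the bookkeeping of a single affine translation.

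First I would verify the first identity by multiplying the two $2\times 2$ matrices on each side. Both $\mathrm A(u)\,\mathbf S$ and $\mathbf S\,\mathrm A(-u)$ evaluate to $\left(\begin{smallmatrix}-1 & u\\ 0 & 1\end{smallmatrix}\right)$, which establishes $\mathrm A(u)\circ\mathbf S=\mathbf S\circ\mathrm A(-u)$.

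Next, for the second identity I would use the first one to move $\mathbf S$ past $\mathrm A(u_1)$. Substituting $\mathrm A(u_1)\circ\mathbf S=\mathbf S\circ\mathrm A(-u_1)$ gives
\[
\mathbf S\circ\mathrm A(u_1)\circ\mathbf S\circ\mathrm A(u_2)=\mathbf S\circ\mathbf S\circ\mathrm A(-u_1)\circ\mathrm A(u_2),
\]
and then $\mathbf S\circ\mathbf S=\mathbf E$ together with \eqref{sg1} collapses the right-hand side to $\mathrm A(-u_1)\circ\mathrm A(u_2)=\mathrm A(u_2-u_1)$. At this point I would also record, as a matrix identity for reuse below, the equivalent conjugation formula $\mathbf S\,\mathrm A(u_1)\,\mathbf S=\mathrm A(-u_1)$.

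Finally, for the affine identity I would compose the four maps applied to a generic point $\mathbf x$, tracking the linear and translation parts separately. Pushing $\mathbf x$ successively through $G(u_2,\mathbf b_2)$, $\mathbf S$, $G(u_1,\mathbf b_1)$, and $\mathbf S$ produces the linear part $\mathbf S\,\mathrm A(u_1)\,\mathbf S\,\mathrm A(u_2)=\mathrm A(u_2-u_1)$ (by the second identity) and the translation part $\mathbf S\,\mathrm A(u_1)\,\mathbf S\,\mathbf b_2+\mathbf S\,\mathbf b_1=\mathrm A(-u_1)\,\mathbf b_2+\mathbf S\,\mathbf b_1$ (by the recorded conjugation formula). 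Hence the composite equals $G(u_2-u_1,\mathrm A(-u_1)\mathbf b_2+\mathbf S\mathbf b_1)$, which is the asserted right-hand side, with the $\mathrm A(-u)$ of the statement understood as $\mathrm A(-u_1)$. The one place demanding care is exactly this final step: each affine map must be applied to the \emph{output} of the previous one, and the constant term must be propagated correctly through every subsequent linear factor, so that the translation accumulates as stated. Once the conjugation relation from the second identity is in hand, this accumulation is purely mechanical.
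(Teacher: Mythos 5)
Your proof is correct and is exactly the intended argument: the paper poses this as an exercise with no written solution, and the direct matrix computation combined with the conjugation relation $\mathbf S\,\mathrm A(u_1)\,\mathbf S=\mathrm A(-u_1)$ and the careful propagation of translation parts is the natural route. You are also right to read the $\mathrm A(-u)$ in the statement as $\mathrm A(-u_1)$; your verification of all three identities checks out.
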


Whence, $G_0(2)=SG_0(2)\cup\mathbf S SG_0(2)$ and $G(2)=SG(2)\cup\mathbf S SG(2)$ are groups of linear and affine transformations.

\begin{prop}
$G(2)=SG(2)\cup\mathbf S SG(2)$ is the full group of affine transformations that preserve
the Galilean structure.
\end{prop}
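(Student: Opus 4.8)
The plan is to prove the two inclusions separately, the containment of $G(2)$ in the structure-preserving maps being routine and the reverse the substantive one. For the first, recall that each $G(u,\mathbf b)\in SG(2)$ is already known to be a Galilean isometry. A one-line check shows that $\mathbf S$ preserves the structure as well: it fixes the $t$-coordinate, so the oriented time $l(A,B)=t_2-t_1$ is untouched, and it sends $\Delta x\mapsto-\Delta x$, so on simultaneous events $\rho=\vert\Delta x\vert$ is unchanged. Since the structure-preserving affine maps form a group, every composite $\mathbf S\circ G(u,\mathbf b)\in\mathbf S SG(2)$ preserves the structure too; thus all of $G(2)$ preserves the Galilean structure.

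For the reverse inclusion, let $F(\mathbf x)=\mathrm A\mathbf x+\mathbf b$, with $\mathrm A=(a_{ij})$, be any affine map preserving $l$ and $\rho$. First I would observe that $l$ and $\rho$ depend only on the difference vector $\overrightarrow{AB}$, and that $F(B)-F(A)=\mathrm A\,\overrightarrow{AB}$ is independent of $\mathbf b$; hence the translation $\mathbf b$ is completely unconstrained, and the whole problem reduces to identifying the admissible linear parts $\mathrm A$ from their action $\mathbf w\mapsto\mathrm A\mathbf w$ on vectors $\mathbf w=(w_1,w_2)$.

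Next I would extract $\mathrm A$ from the two defining conditions. Preservation of the oriented time forces the $t$-component $a_{21}w_1+a_{22}w_2$ of $\mathrm A\mathbf w$ to equal $w_2$ for every $\mathbf w$, hence $a_{21}=0$ and $a_{22}=1$. Preservation of $\rho$ on simultaneous events ($w_2=0$) forces $\vert a_{11}\vert\,\vert w_1\vert=\vert w_1\vert$, hence $a_{11}=\pm1$. Thus $\mathrm A=\begin{pmatrix}\pm1 & u\\ 0 & 1\end{pmatrix}$ with $u=a_{12}$. Splitting on the sign: if $a_{11}=+1$ then $\mathrm A=\mathrm A(u)$ and $F=G(u,\mathbf b)\in SG(2)$; if $a_{11}=-1$ then $\mathrm A=\mathbf S\,\mathrm A(-u)$, so that $F=\mathbf S\circ G(-u,\mathbf S\mathbf b)\in\mathbf S SG(2)$ (using $\mathbf S^2=\mathbf E$). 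Either way $F\in G(2)$, completing the proof.

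The argument is essentially bookkeeping, so there is no serious obstacle; the only conceptual point worth isolating is the asymmetry between the two defining functions. Because $l$ is \emph{oriented}, it must be preserved on the nose, which pins $a_{22}=1$ (not merely $\pm1$) and thereby excludes the time reflection $\mathbf T$, whereas $\rho$ is an \emph{unsigned} length, so only $\vert a_{11}\vert=1$ is forced. It is exactly this residual sign freedom in $a_{11}$ that produces the second coset $\mathbf S SG(2)$, and hence the two connected pieces of $G(2)$.
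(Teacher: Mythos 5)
Your proof is correct and follows essentially the same route as the paper's: strip off the translation part (which is unconstrained because $l$ and $\rho$ depend only on difference vectors), then use preservation of the oriented time length to force the second row of $\mathrm A$ to be $(0,1)$ and preservation of $\rho$ on simultaneous events to force $a_{11}=\pm 1$, the sign distinguishing $SG(2)$ from the coset $\mathbf S\,SG(2)$. The paper phrases this in terms of the images of the basis vectors $\mathbf e_1,\mathbf e_2$ rather than matrix entries, and omits the easy inclusion you verify first, but the argument is the same.
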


\begin{proof}
The proof is similar to the proof of Proposition \ref{stav1}.
Let $S\colon \mathbb R^2\to\mathbb R^2$, $\mathbf x'=\mathrm A\mathbf x+\mathbf b$ be an affine transformation that preserves the Galilean structure.
Then the composition of $S$ and the translation $\mathrm R=T_{-\mathbf b}\circ S$
fix the origin, that it is a linear mapping of $\mathbb R^2$ that preserves the Galilean structure.

The space-like vector $\mathbf e_1$ have to be mapped by $\mathrm R$ to a space like vector
of the same magnitude.
Thus, either $\mathrm R\mathbf e_1=\mathbf e_1$, or $\mathrm R\mathbf e_1=-\mathbf e_1$.
On the other hand, from $l(R\mathbf e_2)=l(\mathbf e_2)=1$, we get that
$\mathrm R\mathbf e_2=\mathbf e_2+v\mathbf e_1$, for some $v\in\mathbb R$.
This proves that $\mathrm R$ belongs either in $SG_0(2)$ or in $\mathbf S SG_0(2)$, i.e., $S=T_{\mathbf b}\circ \mathrm R$ is an element of $G(2)$.
\end{proof}

\subsection{Pseudo-Euclidean isometries.}

Firstly, we consider linear transformations of the associated vector space
that preserve the pseudo-Euclidean scalar product
\begin{equation}\label{product}
\langle \mathbf u,\mathbf v\rangle=u_1v_1-u_2v_2, \quad \mathbf u=u_1\mathbf e_1+u_2\mathbf e_2, \quad \mathbf v=v_1\mathbf e_1+v_2\mathbf e_2.
\end{equation}

By $\mathbf E$ we denote the identity mapping (i.e., the identity matrix).
Since $\mathbf S\circ\mathbf T=\mathbf T\circ\mathbf S=-\mathbf E$, $\mathbf S^2=\mathbf T^2=\mathbf E$, the transformations
$\{\mathbf E,\mathbf S,\mathbf T,\mathbf S\circ\mathbf T\}$ form a group isomorphic to the Klein group $\mathbb Z_2\times\mathbb Z_2$.

Let us set
\begin{align*}
&\mathbf S SO^+(1,1)=\big\{\mathbf S\circ \mathrm L(u)=
\kappa(u)\begin{pmatrix}
-1 & -u  \\
{u} & 1 \end{pmatrix} \big\vert\, u\in (-1,1)\big\}, \\
&\mathbf T SO^+(1,1)=\big\{\mathbf S\circ \mathrm L(u)=
\kappa(u)\begin{pmatrix}
1 & u  \\
-{u} & -1 \end{pmatrix} \big\vert\, u\in (-1,1)\big\},\\
&\mathbf S\mathbf T SO^+(1,1)=\big\{-\mathbf E\circ \mathrm L(u)=
\kappa(u)\begin{pmatrix}
-1 & -u  \\
-{u} & -1 \end{pmatrix} \big\vert\, u\in (-1,1)\big\},
\end{align*}
where $\kappa(u)=1/\sqrt{1-{u^2}}$. It is clear that
$SO^+(1,1)$, $\mathbf S SO^+(1,1)$, $\mathbf T SO^+(1,1)$, $\mathbf S\mathbf T SO^+(1,1)$
do not intersect between themselves.

\begin{prop} \label{ortogonalnaGrupa}
The group of all linear transformations that preserve
the pseudo-Euclidean scalar product \eqref{product} is
\[
O(1,1)=SO^+(1,1)\cup\mathbf S SO^+(1,1)\cup \mathbf T SO^+(1,1)\cup\mathbf S\mathbf T SO^+(1,1).
\]
\end{prop}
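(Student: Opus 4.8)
The plan is to reduce the condition ``$\mathrm A$ preserves \eqref{product}'' to a small system of scalar equations and then read off, case by case, which of the four listed cosets $\mathrm A$ lies in. Writing a general linear map as $\mathrm A=\begin{pmatrix} a & b\\ c& d\end{pmatrix}$, I would first observe that by bilinearity it is enough to require preservation of $\langle\cdot,\cdot\rangle$ on the basis vectors, i.e.\ that the columns $\mathbf f_1=\mathrm A\mathbf e_1=(a,c)$ and $\mathbf f_2=\mathrm A\mathbf e_2=(b,d)$ again form a pseudo-orthonormal basis. Using $\langle\mathbf u,\mathbf v\rangle=u_1v_1-u_2v_2$ this is exactly the system
\[
a^2-c^2=1, \qquad d^2-b^2=1, \qquad ab-cd=0.
\]

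Next I would exploit the two-dimensional pseudo-Euclidean geometry to solve this efficiently. The first two equations force $|a|\ge 1$ and $|d|\ge 1$, so $a$ and $d$ never vanish and their signs are well defined. The vector $\mathbf f_2=(b,d)$ is a unit time-like vector, and its pseudo-orthogonal complement is the one-dimensional space-like line spanned by the unit vector $(d,b)$, since $\langle(d,b),(d,b)\rangle=d^2-b^2=1$ and $\langle(d,b),\mathbf f_2\rangle=db-bd=0$. Because $\mathbf f_1=(a,c)$ is a unit space-like vector pseudo-orthogonal to $\mathbf f_2$ (the third equation), it must lie on this line and be unit, hence $\mathbf f_1=\pm(d,b)$. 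This collapses the whole problem to the two cases $(a,c)=(d,b)$ and $(a,c)=(-d,-b)$.

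In the case $a=d$, $c=b$ the matrix is symmetric with $a^2-b^2=1$; setting $u=b/a$ gives $|u|<1$ (from $b^2=a^2-1<a^2$), and the identity $\kappa(u)=|a|$ then identifies $\mathrm A$ with $\mathrm L(u)\in SO^+(1,1)$ when $a>0$ and with $-\mathrm L(u)\in\mathbf S\mathbf T SO^+(1,1)$ when $a<0$. In the case $a=-d$, $c=-b$ the same substitution identifies $\mathrm A$ with $\mathbf T\mathrm L(u)$ when $a>0$ and with $\mathbf S\mathrm L(u)$ when $a<0$. For the converse and disjointness I would note that each of the four families visibly preserves \eqref{product} (it does so for $\mathrm L(u)$, and $\mathbf S,\mathbf T$ preserve the form as already remarked, so all products do), while the four sign patterns $(\mathrm{sgn}\,a,\mathrm{sgn}\,d)$ are distinct, which is precisely the non-intersection already stated in the text.

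The main obstacle is really just the discrete bookkeeping: recognizing that fixing $\mathbf f_2$ as a unit time-like vector pins $\mathbf f_1$ down to a single sign, and then tracking how the two signs (tied by $a=\pm d$) distribute the solutions among the four cosets. The continuous parameter $u=b/a$ together with $\kappa(u)=|a|$ is what lets me read off the explicit coset representative in each case, and checking $|u|<1$ from $a^2-b^2=1$ is the one spot to handle with a little care.
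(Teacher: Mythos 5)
Your proposal is correct and follows essentially the same route as the paper: both reduce preservation of the form to the condition that the columns of $\mathrm A$ form a pseudo-orthonormal basis, i.e.\ to the system $a^2-c^2=1$, $b^2-d^2=-1$, $ab-cd=0$, and then solve it and sort the solutions into the four cosets by the signs of $a$ and $d$. Your use of the pseudo-orthogonal complement to pin down $\mathbf f_1=\pm(d,b)$ before introducing $u=b/a$ is merely a cleaner organization of the sign bookkeeping that the paper handles via the parametrization $(a,c)=(\pm\kappa(u),\pm u\kappa(u))$, $(b,d)=(\pm v\kappa(v),\pm\kappa(v))$ with $u=\pm v$.
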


\begin{proof}
Consider the linear mapping $\mathbf v'=\mathrm A\mathbf v$,
\[
\begin{pmatrix}
v_1'\\
v_2'
\end{pmatrix}=
\begin{pmatrix}
a & b \\
c & d
\end{pmatrix}
\begin{pmatrix}
v_1\\
v_2
\end{pmatrix}.
\]

It preserves the product \eqref{product} if and only if $\langle \mathrm A \mathbf e_i,\mathrm A\mathbf e_j\rangle =
\langle \mathbf e_i,\mathbf e_j\rangle$ ($i,j=1,2$), that is, if
\begin{equation*}\label{abcd}
a^2-c^2=1, \qquad b^2-d^2=-1, \qquad ab-cd=0.
\end{equation*}

From the first and the second equations we get
\[
(a,c)=(\pm \kappa(u),\pm u\kappa(u)), \quad
(b,d)=(\pm v\kappa(v),\pm \kappa(v)), \quad u,v\in(-1,1),
\]
while from the third equation we get $u=\pm v$.

Finally, the composition of two linear transformations that preserve the scalar product \eqref{product}, preserves
\eqref{product} as well. Thus, $O(1,1)$ is a group. The statement is proved.
\end{proof}

\begin{rem}
The columns of the matrixes $\mathrm A\in O(1,1)$ define all possible pseudo-orthonormal bases $\mathbf v,\mathbf w$ of $\mathbb R^2$:
$\mathbf v=\mathrm A\mathbf e_1=a\mathbf e_1+c\mathbf e_2$, $\mathbf w=\mathrm A\mathbf e_2=b\mathbf e_1+d\mathbf e_2$ (compare with Remark \ref{kolone}).
\end{rem}

The group $O(1,1)$ is called the \emph{orthogonal group of signature} $(1,1)$. The subgroup
\[
SO(1,1)=
SO^+(1,1)\cup\mathbf S\mathbf T SO^+(1,1)=O(1,1)\cap SL(2)
\]
is called the \emph{special orthogonal group of signature} $(1,1)$.

Similarly,
the full group $P(1,1)$ of affine transformations that preserve the pseudo-Euclidean quadratic form
\begin{equation}\label{quadraticForm*}
\delta(A(x_1,t_1),B(x_2,t_2))=(x_2-x_1)^2-(t_2-t_1)^2.
\end{equation}
is given by the transformations $\mathbf x'=\mathrm A\mathbf x+\mathbf b$, $\mathrm A\in O(1,1)$, $\mathbf b\in\mathbb R^2$:
\[
P(1,1)=SP^+(1,1)\cup \mathbf S SP^+(1,1) \cup\mathbf T SP^+(1,1)+\mathbf S\mathbf T SP^+(1,1).
\]

The groups $SG^+_0(2)$ and $SO^+(1,1)$ are {connected components of the identity matrix} $\mathbf E$
in $G_0(2)$ and  $O(1,1)$,  respectively.
The sets $\mathbf S SG^+_0(2)$ and $\mathbf S SO^+_0(1,1)$ are connected components of $\mathbf S$
within $G_0(2)$ and $O(1,1)$, while $\mathbf T SO^+_0(1,1)$ and $\mathbf S\mathbf T SO^+_0(1,1)$
are connected components of $\mathbf T$ and $-\mathbf E$ in the group $O(1,1)$.

%Topology of groups is beyond this lectures. However,
In the next section we will visualise the special linear group $SL(2)$ and
its subgroups $SO(2)$, $SG_0(2)$, and $SO(1,1)$ at an elementary level.

\section{Visualisation of groups}\label{vizualizacija}

\subsection{The Iwasawa decomposition of $SL(2)$.}
The group $SL(2)$ can be seen as a set of bases $\mathbf v, \mathbf w$
of $\mathbb R^2$, such that the oriented area of the parallelogram spanned by $\mathbf v,\mathbf w$ is equal to 1 (see Remark  \ref{kolone}).
For a given $\mathbf v=\mathrm A\mathbf e_1$ and $\mathbf w=\mathrm A\mathbf e_2$, we define parameters $\alpha,\sigma,u$ in
\eqref{ID} as follows (see  Fig. \ref{sl(2)}): $\sigma=1/\vert \mathbf v\vert$,
$\mathbf e_1=\sigma\mathrm R(-\alpha)\mathbf v$. Further, $u$ is defined by the condition that the oriented parallelogram
$\Pi(\sigma^{-1}\mathbf e_1,\sigma \mathrm A(u)\mathbf e_2)$ is congruent to the oriented parallelogram $\Pi(\mathbf v,\mathbf w)$:
\footnote{The above consideration proves Problem \ref{probIWD}. What is missing here is the prof that the vector
$\mathrm R(-\alpha)\mathbf w$ belongs to the line $\gamma=\{\sigma \mathrm A(u)\mathbf e_2\,\vert u\in\mathbb R\}$. For a general Iwasawa decomposition
of semisimple Lie groups, e.g., see \cite{Kn}.}
\[
\mathrm R(-\alpha)\big(\Pi(\mathbf v,\mathbf w)\big)=\Pi(\sigma^{-1}\mathbf e_1,\sigma \mathrm A(u)\mathbf e_2).
\]

\begin{figure}[ht]
{\centering
{\includegraphics[width=10cm]{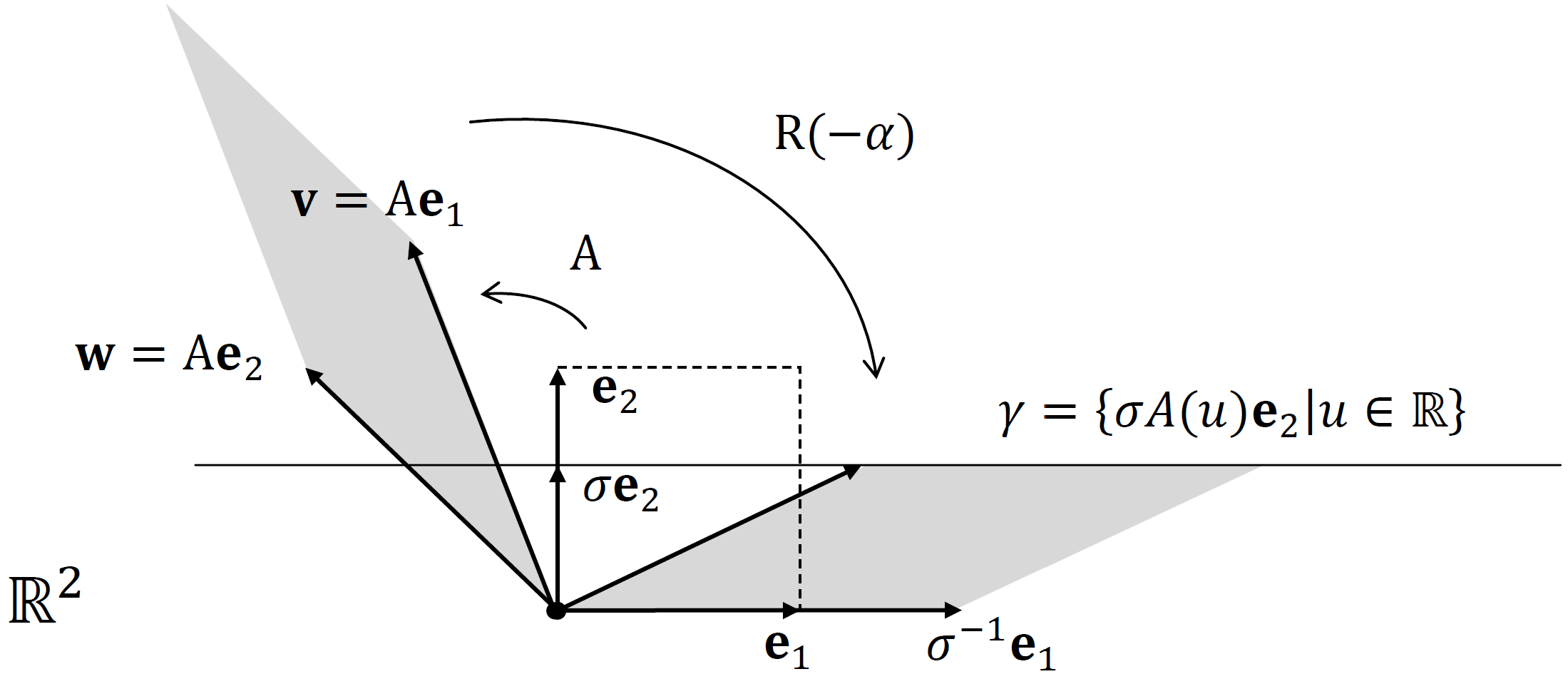}}
\caption{The Iwasawa decomposition of $SL(2)$.}\label{sl(2)}}
\end{figure}

The decomposition \eqref{ID} defines a bijection (moreover, $\phi$ is in example of a \emph{diffeomorphism}):
\[
\phi\colon S^1\times \mathbb R^+\times \mathbb R\rightarrow SL(2), \quad
\phi(\alpha,\sigma,u)=\mathrm R(\alpha)\circ \mathrm A(u)\circ \mathrm D(\sigma)
\]
($\alpha \in [0,2\pi]$,  $\sigma>0$, $u\in\mathbb R$), where we identified the points with $\alpha=0$ and $\alpha=2\pi$.

Next, we can use $(\alpha,\sigma,u)$ as cylindrical coordinates in $\mathbb R^3$:
\begin{align*}
& \psi\colon S^1\times \mathbb R^+\times \mathbb R\rightarrow \mathbb R^3, \qquad (x,y,z)=(\sigma\cos\alpha,\sigma\sin\alpha,u).
\end{align*}

In such a way, we obtain a bijection (a diffeomorphism):
\[
\Gamma=\psi\circ\phi^{-1}\colon SL(2)\rightarrow \mathbb R^3_*,
\]
where $\mathbb R^3_*$ is a 3-dimensional space without the $z$-axis $\{x=y=0\}$.
For example, the unit matrix $\mathbf E$ and the matrix $-\mathbf E$ are represented by the points $(1,0,0)$ and $(-1,0,0)$, respectively.

The images of the special orthogonal group $SO(2)$ and the group of Galilean rotations are given by
\begin{align*}
&  \Gamma(SO(2))=\{(\cos\alpha,\sin\alpha,0)\,\vert\, \alpha\in [0,2\pi)\},\\
&  \Gamma(SG_0(2))=\{(1,0,u)\,\vert\, u\in\mathbb R\}.
\end{align*}

The image of a special orthogonal group $SO(2)$ is a unit circle centered at the origin
and we have a visualisation of the \emph{fundamental group} $\pi_1(SL(2))=\mathbb Z$ of $SL(2)$ that is generated by $SO(2)$
(by smooth deformations within $SL(2)$, the circle $SO(2)$ can not shrink to a point), see Fig. \ref{iwd}.

\begin{figure}[ht]
{\centering
{\includegraphics[width=6.5cm]{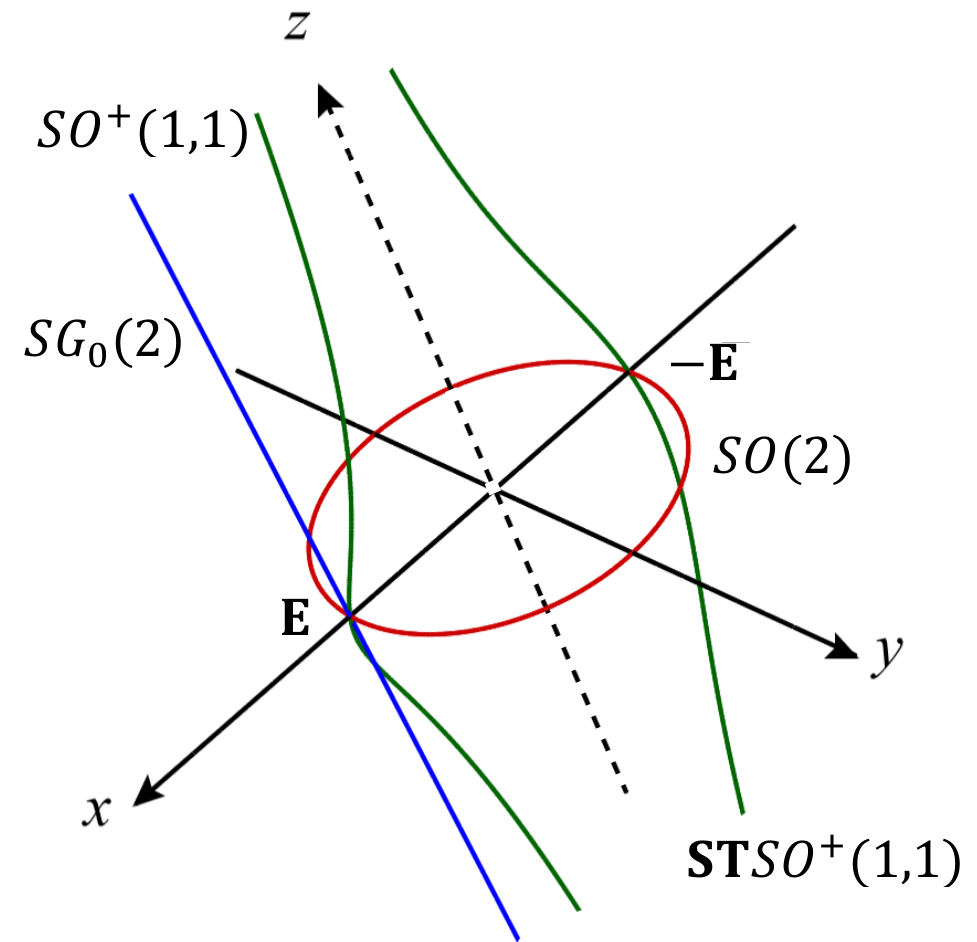}}
\caption{The realisation of $SL(2)$ within $\mathbb R^3$ without the $z$-axis. $SO(2)$ is a unit circle in the $xy$-plane with the center at the origin.
$SO_0(2)$ is a line orthogonal to the $xy$-plane through $\mathbf E$. Further,
$SO^+(1,1)$ intersects $SO(2)$ and $SG_0(2)$ at $\mathbf E$, while $\mathbf S\mathbf T SO^+(1,1)$ intersects $SO(2)$ at $-\mathbf E$.}\label{iwd}}
\end{figure}

\begin{theorem} Prove that the Lorentz transformations can be represented as
\[
\mathrm L(u)=
\begin{pmatrix}
\frac{1}{\sqrt{1+u^2}} & \frac{-u}{\sqrt{1+u^2}}  \\
\frac{u}{\sqrt{1+u^2}} & \frac{1}{\sqrt{1+u^2}}
\end{pmatrix}
\circ
\begin{pmatrix}
1 & \frac{2u}{1-u^2}  \\
0 & 1 \end{pmatrix}\circ
\begin{pmatrix}
\frac{\sqrt{1+u^2}}{\sqrt{1-u^2}} & 0  \\
0
 & \frac{\sqrt{1-u^2}}{\sqrt{1+u^2}} \end{pmatrix},
\]
where $u\in(-1,1)$.
Equivalently, by defining
$\alpha=\arcsin({u}/{\sqrt{1+u^2}})$,
the decomposition takes the form
\[
\mathrm L(u)=
\begin{pmatrix}
\cos\alpha & -\sin\alpha  \\
\sin\alpha & \cos\alpha
\end{pmatrix}
\circ
\begin{pmatrix}
1 & \tan2\alpha  \\
0 & 1 \end{pmatrix}\circ
\begin{pmatrix}
\frac{1}{\sqrt{\cos2\alpha}} & 0  \\
0
 & {\sqrt{\cos2\alpha}} \end{pmatrix}.
\]
\end{theorem}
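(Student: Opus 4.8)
The plan is to recognize this as nothing but the Iwasawa decomposition of Problem~\ref{probIWD} applied to the single matrix $\mathrm L(u)$, and to read off the three parameters $\alpha$, $u'$, $\sigma$ from the geometric recipe described at the start of Section~\ref{vizualizacija}. With $c=1$ we have
\[
\mathrm L(u)=\frac{1}{\sqrt{1-u^2}}\begin{pmatrix}1 & u\\ u & 1\end{pmatrix},\qquad u\in(-1,1),
\]
whose columns are $\mathbf v=\mathrm L(u)\mathbf e_1$ and $\mathbf w=\mathrm L(u)\mathbf e_2$. The recipe sets $\sigma=1/\vert\mathbf v\vert$; since $\vert\mathbf v\vert^2=(1+u^2)/(1-u^2)$, this gives $\sigma=\sqrt{1-u^2}/\sqrt{1+u^2}$, which is exactly the $\mathrm D(\sigma)$ factor. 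The rotation $\mathrm R(\alpha)$ is then determined by $\mathbf e_1=\sigma\,\mathrm R(-\alpha)\mathbf v$, i.e.\ $\mathrm R(\alpha)$ carries $\mathbf e_1$ to the unit vector along $\mathbf v$; as $\mathbf v$ has direction $(1,u)$ we obtain $\cos\alpha=1/\sqrt{1+u^2}$ and $\sin\alpha=u/\sqrt{1+u^2}$, reproducing the stated $\mathrm R(\alpha)$.

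It remains to pin down $u'$. From $\mathbf w=\sigma\,\mathrm R(\alpha)(u'\mathbf e_1+\mathbf e_2)$ one applies $\mathrm R(-\alpha)$ to the second column and simplifies:
\[
\mathrm R(-\alpha)\mathbf w=\frac{1}{\sqrt{(1-u^2)(1+u^2)}}\begin{pmatrix}2u\\ 1-u^2\end{pmatrix}=\sigma\begin{pmatrix}u'\\ 1\end{pmatrix}.
\]
The second component confirms the value of $\sigma$, and the first yields $u'=2u/(1-u^2)$, matching the claimed $\mathrm A(u')$. This establishes the first form. For the second (equivalent) form, note that the angle found above satisfies $\tan\alpha=u$, so the double-angle identities give $\tan 2\alpha=2u/(1-u^2)=u'$ and $\cos 2\alpha=(1-u^2)/(1+u^2)=\sigma^2$; hence the off-diagonal Galilean entry is $\tan 2\alpha$, the diagonal factor $\mathrm D(\sigma)$ has entries $1/\sqrt{\cos 2\alpha}$ and $\sqrt{\cos 2\alpha}$, and $\alpha=\arcsin(u/\sqrt{1+u^2})$.

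There is no deep obstacle here: the entire statement is a computation, and uniqueness of the Iwasawa decomposition (Problem~\ref{probIWD}) even allows a shortcut --- one may instead simply multiply out the proposed right-hand side, check it has the admissible form $\mathrm R(\alpha)\circ\mathrm A(u')\circ\mathrm D(\sigma)$ with $\sigma>0$, and verify the product equals $\mathrm L(u)$, whereupon it must be \emph{the} decomposition. The only place demanding care is the bookkeeping for $u'$: it is the one parameter not read off immediately from a norm or a direction, and recognizing it as $\tan 2\alpha$ rather than a fresh quantity is precisely what produces the clean trigonometric second form.
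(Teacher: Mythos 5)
Your proposal is correct: the computations of $\sigma=\sqrt{(1-u^2)/(1+u^2)}$, $\cos\alpha=1/\sqrt{1+u^2}$, $\sin\alpha=u/\sqrt{1+u^2}$, and $u'=2u/(1-u^2)=\tan 2\alpha$ all check out, and this is exactly the paper's intended route --- the statement is left as an exercise whose solution is the Iwasawa recipe ($\sigma=1/\vert\mathbf v\vert$, $\mathrm R(\alpha)$ aligning $\mathbf e_1$ with $\mathbf v$, then reading off $u'$ from the rotated second column) sketched at the start of Section \ref{vizualizacija}. Nothing further is needed.
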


Therefore, the image of the Lorentz group $SO^+(1,1)$ is  (see Fig. \ref{iwd}):
\[
\Gamma(SO^+(1,1))=\big\{\big({\cos\alpha}{\sqrt{\cos2\alpha}},{\sin\alpha}{\sqrt{\cos2\alpha}},\tan2\alpha\big)\,\vert\,
\alpha\in\big(-\frac{\pi}{4},\frac{\pi}{4}\big)\big\}.
\]

Further, since
$
-\mathbf E \circ \mathrm R(\alpha)\circ \mathrm A(u)\circ \mathrm D(\sigma)=\mathrm R(\pi+\alpha)\circ \mathrm A(u)\circ \mathrm D(\sigma),
$
the image of the component of $SO(1,1)$ that contains the matrix $-\mathbf E$ is given by:
\begin{align*}
\Gamma(\mathbf S\mathbf T SO^+(1,1))=\big\{\big({\cos(\alpha+\pi)}{\sqrt{\cos2\alpha}},{\sin(\alpha+\pi)}{\sqrt{\cos2\alpha}},\tan2\alpha\big)\,\vert\,
\alpha\in\big(-\frac{\pi}{4},\frac{\pi}{4}\big)\big\}.
\end{align*}

\subsection{$SL(2)$ as a quadric in $\mathbb R^4$.}
The group $SL(2)$ is an algebraic subset of $\mathbb R^4$ with coordinates $(a,b,c,d)$ given by the quadratic equation $ad-bc=1$ (see \eqref{sl2}).
We can visualise it by considering the intersections with different 3-dimensional affine subspaces $\Sigma\subset \mathbb R^4$
and projecting $SL(2)\cap \Sigma$ to the 3-dimensional space
$\mathbb R^3$ with coordinates $(a,d,b)$. We present the following quadratic surfaces:

\begin{itemize}
\item[(i)]  $\Sigma=\{ c=0\}$, $SL(2)\cap \Sigma=\{ad-bc=1, \, c=0\}\rightarrow \mathcal H_1=\{ad=1\}\subset\mathbb R^3 $
\item[(ii)]    $\Sigma=\{c=1\}$, $SL(2)\cap \Sigma=\{ad-bc=1, \, c=1\}\rightarrow \mathcal H_2=\{ad-b=1\}\subset\mathbb R^3 $
%\item[(iii)]  $\Sigma=\{c=-1\}$, $SL(2)\cap \Sigma=\{ad-bc=1, \, c=-1\}\rightarrow \{ad+b=1\}\subset\mathbb R^3 $
\item[(iii)]  $\Sigma=\{c=b\}$, $SL(2)\cap \Sigma=\{ad-bc=1, \, c=b\}\rightarrow \mathcal H_3=\{ad-b^2=1\}\subset\mathbb R^3 $
\item[(iv)] $\Sigma=\{ c=-b\}$, $SL(2)\cap \Sigma=\{ ad-bc=1, \, c=-b\}\rightarrow \mathcal H_4=\{ ad+b^2=1\}\subset\mathbb R^3 $
\end{itemize}

\begin{figure}[ht]
{\centering
{\includegraphics[width=11cm]{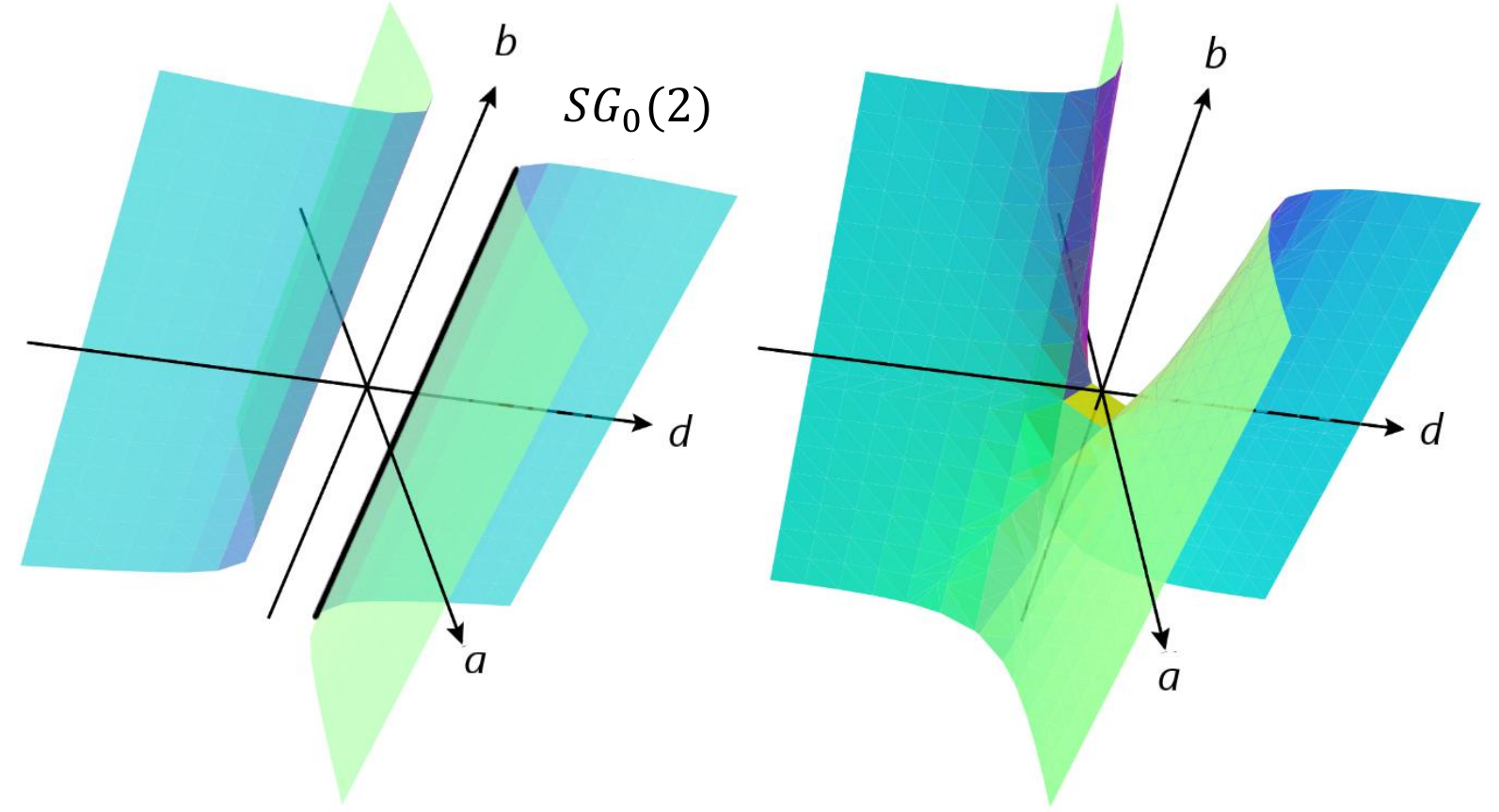}}
\caption{The cases (i) and (ii): a hyperbolic cylinder $\mathcal H_1$ and a hyperbolic paraboloid $\mathcal H_2$.}}
\end{figure}

\begin{figure}[ht]
{\centering
{\includegraphics[width=10.5cm]{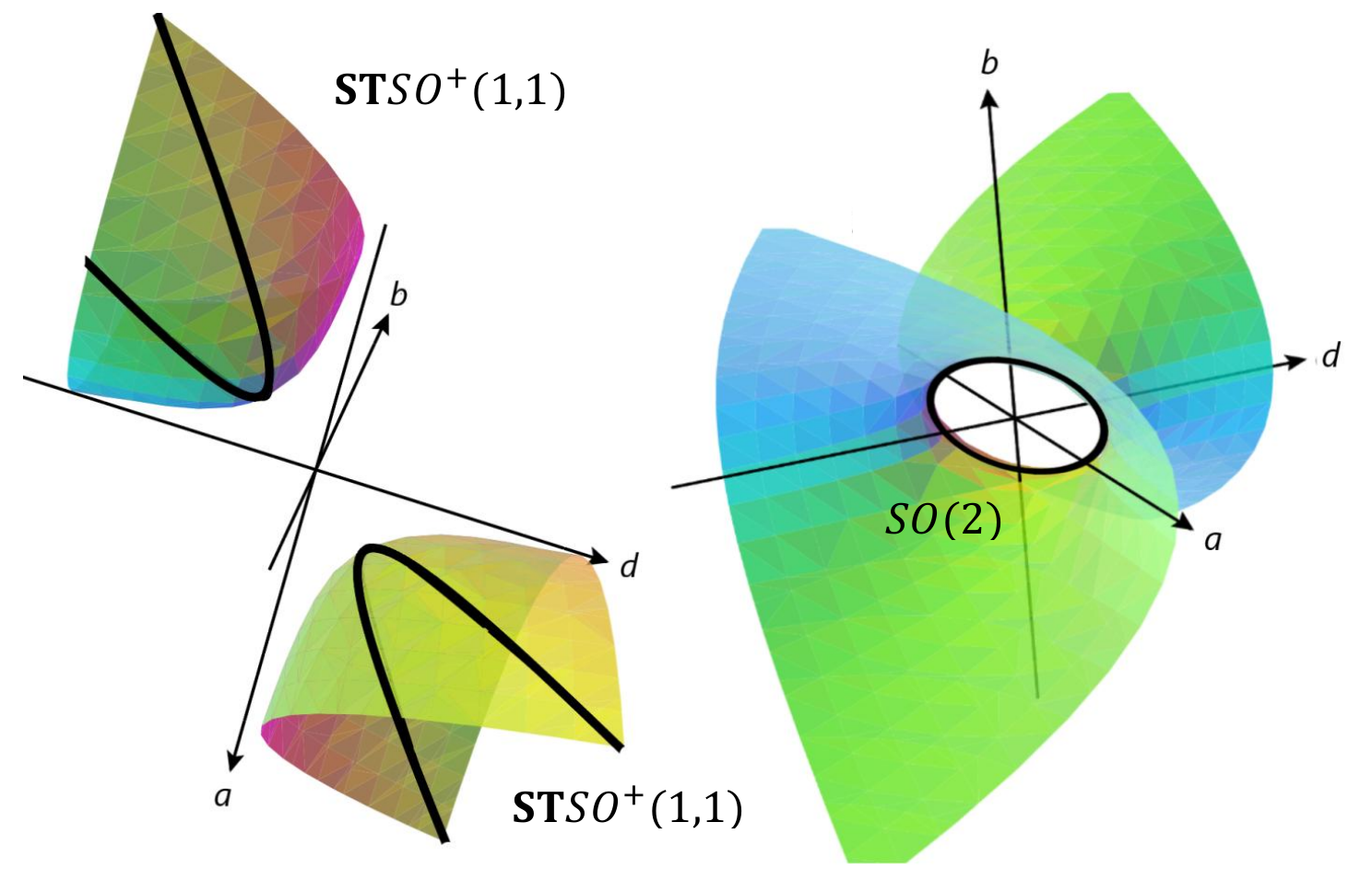}}
\caption{The cases (iii) and (iv): a hyperboloid of two sheets $\mathcal H_3$ and  a hyperboloid of one sheet $\mathcal H_4$.}}
\end{figure}

Within the hyperbolic cylinder $\mathcal H_1$ (the case (i)) we can visualise the group of Galilean rotations $SG_0(2)$
as a line
\[
SG_0(2)\subset SL(2)\cap \Sigma \rightarrow  \{(1,1,u)\,\vert\,u\in\mathbb R\}\subset \mathcal H_1.
\]
On the other hand, we can visualise $SO(1,1)$ as a intersection of the plane $\Lambda=\{a=d\}\subset \mathbb R^3$ and a hyperboloid of two sheets $\mathcal H_3$.
The component of the intersections with $a=d>0$ represents the group of Lorentz transformations $SO^+(1,1)$, while the component
with $a=d<0$ represents the subset $\mathbf S\mathbf T SO^+(1,1)$. This follows from the following problem:

\begin{theorem}
Prove that special orthogonal group  $SO(1,1)$ is equal to $SL(2)\cap K_1$,
where $K_1$ is the group of $2\times 2$ symmetric real matrixes with equal elements at the diagonal and determinant greater then $0$ (see Problem \ref{Kc}).
\end{theorem}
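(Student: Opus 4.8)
The plan is to compute the intersection $SL(2)\cap K_1$ explicitly and then identify its two connected pieces with the two components of $SO(1,1)$ obtained in Proposition \ref{ortogonalnaGrupa}.

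First I would unpack $K_1$. Putting $c=1$ in Problem \ref{Kc}, a matrix lies in $K_1$ precisely when it has the form $\mathrm A_1(a,b)=\left(\begin{smallmatrix}a & b\\ b & a\end{smallmatrix}\right)$ with $a^2-b^2>0$, which is exactly the stated description: symmetric, equal diagonal entries, positive determinant. Adjoining the condition $\det=a^2-b^2=1$ that defines $SL(2)$ renders the open inequality automatic, so
\[
SL(2)\cap K_1=\Big\{\begin{pmatrix}a & b\\ b & a\end{pmatrix}\ \Big\vert\ a^2-b^2=1\Big\}.
\]

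Next I would analyse this conic. Because $a^2=1+b^2\geq 1$, the value $a=0$ cannot occur and the solution set breaks into the two branches $a\geq 1$ and $a\leq -1$. On the branch $a>0$ I would set $u=b/a$; the relation $a^2-b^2=1$ forces $\vert b\vert<\vert a\vert$, hence $u\in(-1,1)$, and then $a^2(1-u^2)=1$ yields $a=\kappa(u)$ and $b=u\kappa(u)$. The matrix therefore equals $\kappa(u)\left(\begin{smallmatrix}1 & u\\ u & 1\end{smallmatrix}\right)$, which, since $\mathrm L(u)=\tilde{\mathrm L}(u)$ for $c=1$, is exactly the general element $\mathrm L(u)$ of $SO^+(1,1)$; conversely every $\mathrm L(u)$ lies on this branch. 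Thus the branch $a>0$ is $SO^+(1,1)$.

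Finally, the branch $a<0$ is the image of the branch $a>0$ under multiplication by $-\mathbf E$, and since $\mathbf S\circ\mathbf T=-\mathbf E$ this is precisely $\mathbf S\mathbf T\,SO^+(1,1)$. Invoking the description $SO(1,1)=SO^+(1,1)\cup\mathbf S\mathbf T\,SO^+(1,1)$ from Proposition \ref{ortogonalnaGrupa}, I conclude $SL(2)\cap K_1=SO(1,1)$. There is no serious obstacle here: the only point requiring attention is that $u\mapsto\mathrm L(u)$ maps $(-1,1)$ bijectively onto the positive branch, which is clear since $u$ is recovered as the ratio $b/a$; everything else is a direct identification of matrix forms.
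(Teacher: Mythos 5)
Your proof is correct. The paper states this as a problem and provides no written solution, but your argument is exactly the intended direct verification: setting $c=1$ in the definition of $K_c$ from Problem \ref{Kc}, imposing $\det=1$ to reduce the intersection to the hyperbola $a^2-b^2=1$ of symmetric equal-diagonal matrices, parametrizing the branch $a>0$ by $u=b/a$ to recover $\mathrm L(u)$, and identifying the branch $a<0$ with $-\mathbf E\cdot SO^+(1,1)=\mathbf S\mathbf T\,SO^+(1,1)$, which together give $SO(1,1)$ as described in Proposition \ref{ortogonalnaGrupa}. All steps check out against the paper's definitions.
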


In the case (iv) we can see the special orthogonal group $SO(2)$,
\[
SO(2)\subset SL(2)\cap \Sigma \rightarrow \{(\cos\alpha,\cos\alpha,\sin\alpha)\,\vert\,\alpha\in[0,2\pi)\}\subset\mathcal H_4,
\]
and we obtain another  visualisation of the {fundamental group} $\pi_1(SL(2))=\mathbb Z$ generated by $SO(2)$.

Finally, let us note that many interesting problems of elementary geometry in the Galilean and the pseudo-Euclidean plane
can be found in \cite{Y}.

\subsection*{Acknowledgments}
The author is very grateful to Vladimir Dragovi\' c for useful discussions.
The research was supported by the Project no. 7744592 MEGIC ”Integrability
and Extremal Problems in Mechanics, Geometry and Combinatorics” of the Science Fund
of Serbia.

\end{document}